\documentclass[a4paper]{article}
\advance\textwidth by 3cm
\advance\oddsidemargin by -1.5cm
\advance\textheight by 3cm
\advance\topmargin by -3cm
\newif\iffull\fulltrue
\newif\ifdraft\draftfalse
\usepackage[numbers]{natbib}
\usepackage{booktabs}
\usepackage{tablefootnote}
\usepackage{bcprules}
\usepackage{multicol}
\usepackage{comment}
\usepackage{paralist}
\usepackage{stmaryrd}
\usepackage{algorithm2e}
\usepackage{hyperref}
\usepackage{subcaption}
\usepackage{amsthm,amsmath,amssymb}
\usepackage{url,graphicx}
\newtheorem{theorem}{Theorem}[section]

\newtheorem{lemma}[theorem]{Lemma}

\theoremstyle{definition}
\newtheorem{definition}{Definition}[section]
\newtheorem{remark}{Remark}[section]
\newtheorem{example}{Example}[section]

\sloppy
\ifdraft
\newcommand\nk[1]{\textcolor{red}{[#1 -nk]}}
\newcommand\ry[1]{\textcolor{blue}{[#1 -ry]}}
\newcommand\ryoshi[1]{\textcolor{teal}{[#1 -ryoshi]}}
\else
\newcommand\nk[1]{}
\newcommand\ry[1]{}
\newcommand\ryoshi[1]{}
\fi
\allowdisplaybreaks
\AtBeginDocument{}

\newcommand\arity{\mathit{arity}}
\newcommand\mergeop{\star}
\newcommand\Merge{\mergeop}
\newcommand\Sorted{\mathit{sorted}}
\newcommand\nsv{\mathit{nsv}}
\newcommand\sv[1]{#1^S}
\newcommand\CHC{\mathcal{S}}
\newcommand\CHCone{\mathcal{C}}
\newcommand\pmodel{\mathcal{M}}
\newcommand\Reva{\mathit{Reva}}
\newcommand\form{\varphi}
\newcommand\cform{\form}
\newcommand\oracle{g_{\mathcal{F}}}
\newcommand\COL{\mathbin{:}}

\newcommand\dom{\mathit{dom}}

\newcommand\residual[3]{\residualfun{#2}{#3}(#1)}
\newcommand\residualfun[2]{\langle #1\pred #2\rangle}
\newcommand\tupinf{\textsc{TupInf}}
\newcommand\chocolat{\textsc{CHoCoL}}
\newcommand\hoice{HoIce}
\newcommand\ringen{\textsc{RInGen}}
\newcommand\spacer{\textsc{Spacer}}
\newcommand\racer{\textsc{Racer}}
\newcommand\eld{\textsc{Eldarica}}
\newcommand\Pred[1]{\mathit{#1}}
\newcommand\tpnew{\tp_{\mathit{new}}}
\newcommand\tpempty{\tp_\emptyset}
\newcommand\enumM{\mathtt{enumModel}}

\newcommand\STPinf{\mathtt{STPinf}}
\newcommand\CSTPs{\mathit{CSTPs}}
\newcommand\CSTPnorm{\mathit{normalize}}
\newcommand\Plus{\mathit{Plus}}
\newcommand\None{\mathit{None}}
\newcommand\false{\mathit{false}}
\newcommand\true{\mathit{true}}
\newcommand\catalia{\textsc{Catalia}}
\newcommand\len[1]{|#1|}
\newcommand\T[1]{\texttt{#1}}
\newcommand\Nil{[\,]}

\newcommand\imp{\Rightarrow}
\newcommand\IFF{\Leftrightarrow}

\newcommand\measure[1]{\#{#1}}
\newcommand\tpequiv{\sim}
\newcommand\shapeof[1]{\mathit{shape}(#1)}
\newcommand\sizeof[1]{\mathit{size}(#1)}
\newcommand\rowsof[1]{\mathit{rows}({#1})}
\newcommand\colsof[1]{\mathit{dim}({#1})}

\newcommand\Vars{\mathcal{V}}
\newcommand\Alpha{\Sigma}

\newcommand\pat{p}
\newcommand\patalt{q}
\newcommand\rpat[1]{#1^R}
\newcommand\rev[1]{#1^R}

\newcommand\tp{t}
\newcommand\ctp{\gamma}
\newcommand\CSTPmodel{\Gamma}
\newcommand\cstp{\ctp}
\newcommand\ctpset{C}
\newcommand\TPinf{\mathtt{TPinf}}
\newcommand\CTPinf{\mathtt{CTPinf}}
\newcommand\CMTPinf{\mathtt{CMTPinf}}
\newcommand\CSTPinf{\CTPinf}
\newcommand\TP{T}
\newcommand\Dat{\Theta}
\newcommand\Subst[2]{[#2/#1]}
\newcommand\dat{M}
\newcommand\Mat[2]{\mathcal{M}_{m,n}}
\newcommand\red{\longrightarrow}
\newcommand\nred{\;\mathbin{\ \not\!\!\red}}
\newcommand\npred{\mathbin{\ \not\!\pred}}
\newcommand\remove[2]{#1\mathbin{\uparrow_{#2}}}
\newcommand\replace[3]{#1\set{#2\mapsto #3}}
\newcommand\seq[1]{\widetilde{#1}}
\newcommand\Tp{\mathcal{T}}
\newcommand\reds{\red^*}
\newcommand\preds{\pred^*}
\newcommand\pred{\leadsto}
\newcommand\rpred{\mbox{\rotatebox[origin=c]{180}{$\leadsto$}}}
\newcommand\rpreds{\rpred^*}
\newcommand\FV{\mathbf{Vars}}
\newcommand\lang{\mathcal{L}}
\newcommand\ldiff[2]{#1\backslash #2}
\newcommand\rdiff[2]{#2/#1}
\newcommand\smodels{\models_s}
\newcommand\leftinv[2]{#1\backslash#2}
\newcommand\set[1]{\{#1\}}

\title{Solvable Tuple Patterns and Their Applications\\ to Program Verification}
\author{Naoki Kobayashi\thanks{The University of Tokyo} \and Ryosuke Sato\thanks{Tokyo University of Agriculture and Technology} \and Ayumi Shinohara\thanks{Tohoku University} \and Ryo Yoshinaka\footnotemark[3]}

\begin{document}

\maketitle

\begin{abstract}
  Despite the recent progress of automated program verification techniques,
  fully automated verification of programs manipulating recursive data structures remains a challenge.
  We introduce solvable tuple patterns (STPs) and conjunctive STPs (CSTPs), novel formalisms
  for expressing and inferring invariants between list-like recursive data structures.
  A distinguishing feature of STPs is that they can be efficiently inferred from only a small number of positive samples; no negative
  samples are required. 
  After presenting properties and inference algorithms of STPs and CSTPs,
  we show how to incorporate the CSTP inference 
  into a CHC (Constrained Horn Clauses) solver supporting list-like data structures, which serves as a uniform backend for
  automated program verification tools. A CHC solver incorporating the (C)STP inference
  has won the ADT-LIN category of CHC-COMP 2025 by a significant margin.
\end{abstract}
 \section{Introduction}
\label{sec:intro}

\subsection{Background}
Although a lot of progress has recently been made in
automated program verification techniques,
fully automated verification of programs involving algebraic data types (ADTs)
such as lists and trees remains difficult.
In the case of programs manipulating only integers, key invariants on loops and recursive functions can often be expressed by linear integer arithmetic, and
good techniques (such as interpolants~\cite{DBLP:conf/popl/HenzingerJMM04,DBLP:journals/tcs/McMillan05} and counterexample-guided abstraction refinement~\cite{Clarke2003a})
for automatically finding them have been developed. In contrast,
in the case of programs manipulating ADTs, key invariants often involve
inductive predicates, which are difficult to find automatically.

To see the difficulty of automated verification of programs involving ADTs, let us
consider the following OCaml program:\footnote{This example has been taken from \url{https://github.com/chc-comp/chc-comp24-benchmarks/blob/main/ADT-LIA/chc-comp24-ADT-LIA-140.smt2}.}
\begin{quote}
\begin{verbatim}
let rec reva l1 l2 = match l1 with [] -> l2 | x::l1' -> reva l1' (x::l2)
let main l1 l2 = assert(reva (reva l1 l2) [] = reva l2 l1)
\end{verbatim}
\end{quote}
The function \T{reva} takes two lists \(l_1\) and \(l_2\) as arguments,
and returns a list obtained by appending the reverse of \(l_1\) to \(l_2\);
for instance, \(\T{reva}\; [1;2]\; [3;4] = [2;1;3;4]\). The function \T{main}
takes two lists \(l_1\) and \(l_2\) and asserts that
\(\T{reva}\; (\T{reva}\; l_1\; l_2)\; [\,] = \T{reva}\; l_2\; l_1\) holds.

Suppose we wish to prove that the assertion in the function \T{main}
does not fail for any \(l_1\) and \(l_2\). That amounts to proving the proposition:
\begin{align}
  \forall l_1,l_2, l_3, l_4, l_5.\;&  \Reva(l_1,l_2,l_3)\land \Reva(l_3, \Nil, l_4)\land \Reva(l_2,l_1,l_5)\imp l_4=l_5
  \label{reva:goal}
  \end{align}
where 
\(\Reva(l_1,l_2,l_3)\) intuitively means that \(\T{reva}\;l_1\;l_2\) may return \(l_3\);
it is defined as the least predicate that satisfies the following clauses:
\begin{align}
 \forall l_1,l_2.\; &\Reva(\Nil, l_2, l_2). \label{reva:base}\\
 \forall x, l'_1,l_2, l_3.\;& \Reva(l'_1, x::l_2, l_3) \imp \Reva(x::l'_1, l_2, l_3). \label{reva:induct}
 \end{align}
Equivalently, it suffices to prove that there exists a predicate \(\Reva\) that satisfies all the three clauses
\eqref{reva:goal}--\eqref{reva:induct}; in other words, it suffices to find an invariant \(\Reva\) among the
arguments and the return value of \(\T{reva}\) that is strong enough to satisfy \eqref{reva:goal}.

Automatically proving the above property is non-trivial. One can easily see that
directly trying to prove \eqref{reva:goal} by
induction on \(l_1, l_2\), or \(l_3\)
would get stuck.
Unfold/fold transformations~\cite{PettorossiP00,DBLP:journals/tplp/AngelisFPP18a,DBLP:journals/scp/AsadaSK17} are often used
for reasoning about relational properties (ones between multiple function calls, in the above case),
but they would not directly apply either, because the two recursive calls
\(\T{reva}\;l_1\;l_2\) and \(\T{reva}\;l_2\;l_1\)  induct over different variables.
A reasonable way to prove \eqref{reva:goal} is to find a lemma \(\Reva(l_1,l_2,l_3)\imp \rev{l_1}l_2=l_3\),\footnote{Here, we identify
lists with sequences. We write \(\rev{l_1}\) for the reverse of \(l_1\), and \(l_1l_2\) for the concatenation of two sequences (or lists)
\(l_1\) and \(l_2\).} but how to automatically find such a lemma remains a challenge. In fact, the satisfiability of
the clauses \eqref{reva:goal}--\eqref{reva:induct} is an instance of the CHC satisfiability problem~\cite{Bjorner15},
and there are various CHC solvers supporting ADTs~\cite{Eldarica,DBLP:journals/fmsd/KomuravelliGC16,DBLP:conf/aplas/Champion0S18,DBLP:conf/cav/UnnoTK21},
but to our knowledge, none of them (except the one proposed in this paper)
can automatically find the model of \(\Reva\) that satisfies the clauses \eqref{reva:goal}--\eqref{reva:induct}.

\subsection{Our Approach}
\subsubsection{Solvable Tuple Patterns for Automated Invariant Discovery}
To enable automated inference of invariant relations among list-like (functional) data structures
(like the relation \(\rev{l_1}l_2=l_3\) above),
we introduce the notion of \emph{solvable tuple patterns} (STPs) and conjunctive STPs (CSTPs), and apply a data-driven approach to infer them.
A \emph{tuple pattern}  is of the form \((\pat_1,\ldots,\pat_n)\), where \(p_i\) consists of constants (representing elements of
a sequence), variables, concatenations, and the reverse operation.
The pattern \((\pat_1,\ldots,\pat_n)\) represents the set of all tuples of sequences obtained by instantiating
the variables to sequences.
For example, the invariant on the relation \(\Reva\) above is
expressed as \((l_1,l_2, \rev{l_1}l_2)\), where \(l_1\) and \(l_2\) are variables representing sequences.
The tuple pattern \((l_1l_2, l_2l_3, l_1l_2l_3)\) expresses the set of tuples \((s_1,s_2,s_3)\) where \(s_1\) and \(s_2\) are respectively
a prefix and a suffix of \(s_3\), and \(s_1\) and \(s_2\) overlap with each other in \(s_3\) (as represented by \(l_2\));
thus \((ab, bcd, abcd)\)\footnote{Here, we denote elements by letters.} belongs to \((l_1l_2, l_2l_3, l_1l_2l_3)\), but \((ab, d, abcd)\)
does not. The tuple patterns have been inspired from Angluin's pattern language~\cite{ANGLUIN1980117}
(in fact, by using a special symbol \(\$\), a tuple pattern \((\pat_1,\ldots,\pat_n)\) can be expressed as a pattern
\(\pat_1\$\cdots\$\pat_n\)). Pattern languages have a good property that they can be learned from only positive samples
(in the sense of Gold~\cite{GOLD1967447}),
but dealing with them are computationally costly: even the membership problem is NP-complete~\cite{ANGLUIN198046}
and the inclusion is undecidable~\cite{nowotka_et_al:LIPIcs.CPM.2025.4}. 
We thus introduce a subclass of tuple patterns called \emph{solvable} tuple patterns.
As we discuss later, the solvable tuple patterns have several pleasant properties:\footnote{
The definition of solvability and the precise meaning of the properties (i)--(iv) will be
given later in Section~\ref{sec:stp}.}
(i) like pattern languages, they can be learned from only positive samples; (ii)
there is a (non-deterministic) polynomial-time algorithm for inferring a solvable tuple pattern, given positive samples;
 (iii) only a small number of positive samples are required for learning a solvable tuple pattern
(for example, to infer \((l_1,l_2, \rev{l_1}l_2)\), only two samples: \((ab, cd, bacd)\) and \((bc, da, cbda)\) are sufficient);
and (iv) the satisfiability of quantifier-free formulas consisting of word equations and membership constraints on solvable patterns
is decidable.
Both of the patterns \((l_1,l_2, \rev{l_1}l_2)\) and \((l_1l_2, l_2l_3, l_1l_2l_3)\) above
belong to the class of solvable patterns.

The learnability from only positive samples is attractive in the context of program verification.
Positive samples (such as input/output pairs of a function, and states at the beginning of a loop) can be
easily collected by random executions of a program.
In contrast, negative samples (states that would lead to errors) are also required by
many of the recent data-driven approaches to invariant discovery for
automated program verification~\cite{garg_2014,DBLP:journals/jar/ChampionCKS20,DBLP:conf/pldi/ZhuMJ18,zhu_2015},
but collecting negative samples is often harder, since it requires backward executions of a program.
Some of the approaches require even ``implication constraints''~\cite{garg_2014,DBLP:journals/jar/ChampionCKS20,DBLP:journals/pacmpl/EzudheenND0M18}.

To apply STPs to automated verification of list-manipulating programs, we consider CHCs on words, which are Horn clauses extended
with word equality/inequality constraints. The clauses on \(\Reva\) above may be considered CHCs on words,
by viewing lists as words. We show (i) given a set of CHCs on words consisting of definite clauses, there exists an algorithm
to compute the \emph{least} model describable by conjunctive STPs (i.e., a conjunction of a finite number of STPs),\footnote{Stated without using the terminology of CHCs, this means that there exists an algorithm that, given a list-manipulating program with loops and first-order recursive functions, computes the \emph{strongest} inductive invariant
expressible in CSTPs, .}
and (ii) it is decidable whether a given system of CHCs over words has a model describable by conjunctive STPs.
We can directly apply these results to the \(\Reva\) example, and automatically obtain the invariant:
\(\Reva(l_1,l_2,l_3) \equiv l_3=\rev{l_1}l_2\).

\subsubsection{Combination with Arithmetic Reasoning}
Whilst CSTPs have attractive features as mentioned above,
invariants expressed by CSTPs are not always strong enough for the purpose of program verification,
especially for programs using a combination of lists and integers.
For example, consider the following program.
\begin{multicols}{2}
\begin{verbatim}
let rec take n l =
  if n=0 then [] else 
    match l with [] -> []
      | x::l' -> x::(take (n-1) l')
\end{verbatim}
\begin{verbatim}
let rec drop n l =
  if n=0 then l else 
    match l with [] -> []
      | _::l' -> drop (n-1) l'
\end{verbatim}
\end{multicols}
\noindent
Given an integer \(n\) and a list \(l\) as arguments,
the function \(\T{take}\) returns the list consisting of the first (at most) \(n\) elements
of \(l\), and
\(\T{drop}\) returns the list obtained by removing the first (at most) \(n\) elements from \(l\).
Suppose we wish to prove that \((\T{take}\;n\;l)@(\T{drop}\;n\;l)=l\) holds for every (non-negative) integer \(n\) and list \(l\)
(where @ denotes the list append function).
By using solvable tuple patterns, we can infer the properties that \(\T{take}\;n\;l\) and \(\T{drop}\;n\;l\)
respectively return a prefix and a suffix of \(l\), but they are not strong enough to derive \((\T{take}\;n\;l)@(\T{drop}\;n\;l)=l\).

\iffull
To address the issue above, we combine STP inference with existing automated verification tools
(more specifically, CHC solvers over
integer arithmetic~\cite{Eldarica,DBLP:journals/fmsd/KomuravelliGC16,DBLP:conf/aplas/Champion0S18}).
For the example above, we construct the following integer program by abstracting
lists to their lengths.\footnote{As discussed in Section~\ref{sec:chc}, we actually apply this kind of transformation to CHCs,
rather than source programs.}
\begin{multicols}{2}
\begin{verbatim}
let rec take' n l =
  if n=0 then 0 else if l=0 then 0 else
    let l'=l-1 in 1+(take' (n-1) l')
\end{verbatim}
\begin{verbatim}
let rec drop' n l =
  if n=0 then l else if l=0 then 0 else 
    let l'=l-1 in drop' (n-1) l'
\end{verbatim}
\end{multicols}
\noindent
Here, the second argument \(l\) of \(\T{take'}\) and \(\T{drop'}\) is an integer,
representing the length of the second argument of the original functions
 \(\T{take}\) and \(\T{drop}\).
We then try to prove that the abstract version of the property: \((\T{take'}\;n\;l)+(\T{drop'}\;n\;l)=l\) holds
for all \emph{integers} \(n\) and \(l\). State-of-the-art CHC solvers can quickly solve that verification problem,
and can infer invariants like: 
\begin{align*}
&  \mathit{Take'}(n, l, r) \equiv (l<n\land r=l)\lor (l\ge n\land r=n)\\
&  \mathit{Drop'}(n, l, r) \equiv (l<n\land r=0)\lor (l\ge n\land r=l-n).
\end{align*}
Here, \(\mathit{Take'}\) (\(\mathit{Drop'}\), resp.) represents the relationship between the arguments \(n\) and \(l\)
of \(\T{take'}\) (\(\T{drop'}\), resp.) and the return value \(r\). By combining them with the output of solvable tuple pattern
inference, we obtain the invariant candidates:
\begin{align*}
&  \mathit{Take}(n, l, r) \equiv \exists s. rs=l\land ((\len{l}<n\land \len{r}=\len{l})\lor (\len{l}\ge n\land \len{r}=n))\\
&  \mathit{Drop}(n, l, r) \equiv \exists s. sr=l\land ((\len{l}<n\land \len{r}=0)\lor (\len{l}\ge n\land \len{r}=\len{l}-n)).
\end{align*}
These are indeed invariants strong enough to satisfy
\(\mathit{Take}(n,l,r_1)\land \mathit{Drop}(n,l,r_2)\imp r_1r_2=l\).
\else
To address the issue above, we can combine STP inference with existing automated verification tools
(more specifically, CHC solvers over
integer arithmetic~\cite{Eldarica,DBLP:journals/fmsd/KomuravelliGC16,DBLP:conf/aplas/Champion0S18}),
as discussed later in Section~\ref{sec:combination}.
For the example above, we can strengthen the STP-based invariant to:
\begin{align*}
&  \mathit{Take}(n, l, r) \equiv \exists s. rs=l\land (\len{r}=\len{l}<n\lor \len{l}\ge n=\len{r})\\
&  \mathit{Drop}(n, l, r) \equiv \exists s. sr=l\land ((\len{l}<n\land \len{r}=0)\lor \len{l}\ge n=\len{l}-\len{r}),
\end{align*}
where \(|l|\) denotes the length of \(l\), and
\(\mathit{Take}(n, l, r)\) and \(\mathit{Drop}(n, l, r)\) represent the relations between the arguments
\(n,l\) and the return value \(r\) of \(\mathit{take}\) and \(\mathit{drop}\) respectively. The parts \(\exists s. rs=l\)
and \(\exists s.sr=l\) are invariants obtained by the STP inference, and the rest has been obtained by
abstracting lists in the original program to their lengths, and analyzing the resulting integer programs.
The invariants are strong enough to satisfy \(\mathit{Take}(n,l,r_1)\land \mathit{Drop}(n,l,r_2)\imp r_1r_2=l\).
\fi

\iffull
The above method is somewhat reminiscent of methods (such as Nelson-Oppen combination procedure~\cite{10.1145/322186.322198})
for deciding combinations of theories; indeed, it combines a method for reasoning about lists (based on STP inference)
with one for verifying integer programs.
Various techniques to combine static analyses have also been
proposed~\cite{10.1145/200994.200998,DBLP:conf/pldi/GulwaniT06,10.1007/978-3-642-19805-2_31}.
To our knowledge, however,
the way we combine methods for solving problems in individual domains also seems novel.
\fi
\subsection{Contributions and the Organization of This Paper}
The contributions of this paper are summarized as follows.
\begin{itemize}
\item Development of the theory and algorithms for solvable tuple patterns (STPs) and conjunctive STPs (CSTPs).
To the best of our knowledge, the notions of STPs and CSTPs are novel and of independent interest even outside the context of automated program verification, for example in algorithmic learning theory.  
\item Applications of solvable tuple patterns to automated verification of list-manipulating programs.\footnote{This paper targets programs manipulating \emph{functional} (i.e., immutable) lists.
Handling programs manipulating mutable linked lists and other mutable data structures would require combining our method with other techniques~\cite{DBLP:journals/toplas/MatsushitaTK21}; see Remark~\ref{rem:mutable-lists}.}
  To this end, we introduce CHCs over words and show that it is decidable whether
  a given system of CHCs over words has a model
  describable by CSTPs. We also consider CHCs over multisets and sets, and
  show that it is decidable whether a given system of CHCs over multisets or sets has a model describable by the corresponding multiset or set version of CSTPs.
  We further propose a method to strengthen this approach by combining it with verification methods for integer-manipulating programs.
\item Implementation and experiments. Based on the results above,
  we implemented a new CHC solver called \chocolat{}, which supports
  list-like data structures. We evaluated \chocolat{} on the benchmark set of the CHC-COMP 2025 ADT-LIA category,
  and found that \chocolat{} significantly outperformed other CHC solvers.
  A combination of \chocolat{} with another solver called \catalia~\cite{KatsuraSAS25}
  won the CHC-COMP 2025 ADT-LIA category by a large margin. \end{itemize}

The rest of the paper is structured as follows.
In Sections~\ref{sec:tpinf}--\ref{sec:ext}, we introduce the notion of solvable tuple patterns (STPs) and conjunctive STPs (CSTPs),
present algorithms for inferring them from positive samples, and prove that the classes of STPs and CSTPs satisfy
desired properties. Section~\ref{sec:chc} shows how to apply CSTPs to CHC solving (and hence also
automated program verification). Section~\ref{sec:exp} reports an implementation and experimental results.
Section~\ref{sec:rel} discusses related work and Section~\ref{sec:conc} concludes the paper.
\iffull\else
Proofs and additional details are found in the longer version~\cite{STPlong}.
\fi
 
\section{Tuple Patterns and Inference Algorithm}
\label{sec:tpinf}
In this section,
we introduce the notions of \emph{tuple patterns} and \emph{conjunctive tuple patterns},
and provide inference algorithms.
In the next section, we will introduce
a subclass of tuple patterns called \emph{solvable tuple patterns}
as the sound and complete characterization of the class of tuple patterns inferable by
 the algorithm.

\subsection{Tuple Patterns and Conjunctive Tuple Patterns}
Let $\Alpha$ be a set of letters, and \(\Vars\) a (countably infinite) set of variables.
We assume that $\Alpha$ contains at least two distinct letters.
For the purpose of representing a relation among lists, \(\Alpha\) is the set of
possible list elements; for example, if we consider integer lists, then we let \(\Alpha\) be the set of integers.
Below we often write \(a,b,c,\ldots\) for elements of \(\Alpha\). \iffull We write \(\dom(f)\) for the domain of a map \(f\). \fi

The sets of \emph{tuple patterns} and \emph{conjunctive tuple patterns}, ranged over by \(\tp\) and \(\ctp\) respectively,
are 
\iffull
defined by:
\begin{align*}
  & \tp \mbox{ (tuple patterns) }::= (\pat_1,\ldots,\pat_k) \qquad
  \pat \in (\Alpha\cup \Vars)^* \\
  & \gamma \mbox{ (conjunctive tuple patterns) }::= \tp_1\land \cdots \land \tp_m. 
\end{align*}
\else
defined by
\(\tp ::= (\pat_1,\ldots,\pat_k)\) and \(\gamma ::= \tp_1\land \cdots \land \tp_m\) where 
\(\pat \in (\Alpha\cup \Vars)^*\).
\fi
A tuple pattern \(\tp=(\pat_1,\ldots,\pat_k)\) represents a set of \(k\)-tuples of
\(\Alpha\)-sequences,
i.e., a \(k\)-ary relation between sequences, obtained by replacing each variable with a sequence of letters in \(\Alpha\).
 For example, \((x, y, xy)\) represents the set of triples of the form
\((s_1,s_2,s_1s_2)\) where \(s_1,s_2\in \Alpha^*\).
Formally, we define \(\lang(\tp)\subseteq \Alpha^*\times \cdots \times \Alpha^*\) as:
\(\set{[s_1/x_1,\ldots,s_n/x_n]\tp\mid s_1,\ldots,s_n\in \Alpha^*}\)
where \(x_1,\ldots,x_n\) are the variables occurring in \(\tp\), and \([s_1/x_1,\ldots,s_n/x_n]\) denotes the simultaneous substitution
of \(s_1,\ldots,s_n\) for \(x_1,\ldots,x_n\).
A conjunctive tuple pattern \(\tp_1\land \cdots \land \tp_m\) (where \(m\ge 1\))
represents the intersection of the sets of tuples represented by
\(\tp_1,\ldots,\tp_m\), as defined by:
\(\lang(\tp_1\land\cdots\land \tp_m) = \lang(\tp_1)\cap \cdots \cap \lang(\tp_m)\).
For example, \((x, y, xz)\land (x,y,yz)\) represents the set of triples \((s_1,s_2,s_3)\) where both \(s_1\) and \(s_2\) are
prefixes of \(s_3\); here, note that variables are implicitly bound in each conjunct:
the two occurrences of \(z\) in \((x, y, xz)\) and \((x, y, yz)\) may be distinct.
We write \(\seq{\cdot}\) for a sequence. By abuse of notation,
  we write \(\seq{p}\) for a sequence \(p_1,\ldots,p_n\) and also for a tuple pattern
  \((p_1,\ldots,p_n)\), depending on the context; similarly for
  sequences \(\seq{x}\) and \(\seq{s}\) of variables and strings respectively.

We identify tuple patterns up to variable renaming. For example, we identify \((xy, x)\) with \((yx, y)\).
We write \(\FV(\pat)\) and \(\FV(\tp)\) for the sets of variables occurring in \(\pat\) and \(\tp\) respectively.
We write \(p_1\cdot p_2\) for the concatenation of two patterns sequences
\(p_1\) and \(p_2\), and often omit \(\cdot\),  writing \(p_1p_2\). We write \(|\pat|\)
for the length of a pattern \(\pat\).
For a tuple pattern \(\tp=(\pat_1,\ldots,\pat_k)\), we write
\(|\tp|\) and \(\measure{\tp}\) respectively for \(k\) and \(|\pat_1\cdots\pat_k|+k\).

We will later (in Section~\ref{sec:ext}) extend tuple patterns with the pattern \(\rev{x}\)
(which represents the reverse of \(x\)), to accommodate the \(\Reva\) example in
Section~\ref{sec:intro}.

\begin{remark}
\label{rem:pattern-language}
In the case of a singleton tuple pattern \((\pat)\), our definition of \(\lang\) coincides with the definition of a variation of 
pattern languages~\cite{ANGLUIN1980117,ANGLUIN198046} called
E-pattern languages (or erasing pattern languages)~\cite{10.1007/3-540-11980-9_19}.
In the definition of original pattern languages~\cite{ANGLUIN1980117,ANGLUIN198046} (called NE-pattern languages),
\(\lang(\pat)=\set{[s_1/x_1,\ldots,s_n/x_n]\pat\mid s_1,\ldots,s_n\in \Alpha^+}\); empty strings cannot be substituted for
variables.
In view of the application to program verification
discussed in Section~\ref{sec:chc}, we allow substitutions of empty strings, which correspond to empty lists.
\qed
\end{remark}

\subsection{Tuple Pattern Inference Problem}
\emph{Learning data}, denoted by \(\dat\),
is an \(m\times n\)-matrix consisting of elements of \(\Alpha^*\).
We write \(\dat[i][j]\) for the element in the \(i\)-th row and \(j\)-th column.
We also write \(\dat[i]\) for \((\dat[i][1],\ldots,\dat[i][n])\), and \(\dat[*][j]\) for \((\dat[1][j],\ldots,\dat[m][j])\).
We write \(\shapeof{\dat}\) for \((m,n)\), \(\rowsof{\dat}\) for \(m\),
\(\colsof{\dat}\) for \(n\),
and \(\sizeof{\dat}\) for \(\sum_{i\in\set{1,\ldots,m},j\in\set{1,\ldots,n}} (1+|\dat[i][j]|)\),
where \(|s|\) denotes the length of string \(s\in\Alpha^*\).
We sometimes regard learning data \(\dat\) as the \emph{set} \(\set{\dat[1],\ldots,\dat[m]}\),
and write \(\dat_1\subseteq\dat_2\) when \(\dat_1\) is a subset of \(\dat_2\) as sets.

Let \(\seq{x}=(x_1,\ldots,x_k)\) and \(\tp=(p_1,\ldots,p_n)\).
If \(\shapeof{\dat'}=(m,k)\) and \(\FV(\tp)\subseteq \set{\seq{x}}\),
we write \([\dat'/\seq{x}]\tp\) for the learning data \(\dat\) such that
\(\shapeof{\dat}=(m,n)\) and \(\dat[i][j] = [\dat'[i][1]/x_1, \ldots, \dat'[i][k]/x_k]p_j\).
For example, if
\(\dat' = \left(\begin{array}{cc} a & b \\ bb & a\end{array}\right)\)
  and \(\tp=(x, y, xy)\), then
  \([\dat'/(x,y)]\tp =
  \left(\begin{array}{ccc} a & b & ab\\ bb & a & bba\end{array}\right)\).
We use the metavariable \(\Theta\) for a substitution \([\dat/\seq{x}]\).    
We sometimes use the matrix notation:    
\[
\left(
\begin{array}{ccccc}
  x_1 & & \cdots & & x_k\\
  s_{1,1}&& \cdots & & s_{1,k}\\
  \cdots&& \cdots & & \cdots\\
  s_{m,1}&& \cdots & & s_{m,k}
  \end{array}
\right)
\]
for the substitution \([\dat/(x_1,\ldots,x_k)]\) such that \(\dat[i][j]=s_{i,j}\).

When \(\Theta \tp=\dat\) holds, we write \(\dat,\Theta\models \tp\).
We also write \(\dat,\Theta\smodels \tp\) if
\(\Theta\tp =\dat\) and there exists no variable \(x\) such that
\(\Theta(x)= \seq{\epsilon}\) (i.e., if
no variable is always mapped to \(\epsilon\)).
We write \(\dat\models \tp\) (\(\dat\smodels \tp\), resp.) if
there exists \(\Theta\) such that \(\dat,\Theta\models \tp\) (\(\dat,\Theta\smodels \tp\), resp.), and we call \(\Theta\) a \emph{witness substitution} for \(\dat\models\tp\) (\(\dat\smodels \tp\), resp.).
For a conjunctive tuple pattern \(\ctp = \tp_1\land \cdots \land \tp_m\), we write \(\dat\models \ctp\) (\(\dat\smodels \ctp\), resp.) if
\(\dat\models \tp_i\) (\(\dat\smodels \tp_i\), resp.) holds for every \(i\in\set{1,\ldots,m}\).

\begin{example}
  \label{ex:model}
  Let \(\dat\) be:
  \(
  \left(\begin{array}{ll}a & aaa\\ b & abb \end{array}\right)
  \) and \(\tp\) be \((x, axx)\).
  Then \(M\smodels \tp\), where 
  \(\Theta=\left(\begin{array}{c}x \\a \\ b\end{array}\right)\) is a witness substitution.
  We also have \(M, \Theta \models (x, axxy)\) for
  \(\Theta = \left( \begin{array}{cc} x &  y\\ a & \epsilon \\ b & \epsilon \end{array}\right)\), but \(M \not\smodels (x, xxy)\) as \(\Theta \) always maps \(y\) to \(\epsilon\).
  \qed
\end{example}

\iffull
\begin{example}
\label{ex:tupinf}
  Let \(M\) be:
  \(\left(\begin{array}{rrr}
    a &\ b &\ ab\\
    aa & \epsilon & aa
  \end{array}\right)
  \).
  Then, \(\dat\models (ax, y, axy)\).
  In fact, for \(\Theta =
  \left( \begin{array}{cc} x &  y\\ \epsilon & b \\ a & \epsilon \end{array}\right)\),
  we have \(\dat, \Theta\models (ax,y,axy)\).
  In this case, we also have \(\dat\smodels (ax, y, axy)\), with the same witness substitution. \qed
\end{example}
\fi
The \emph{tuple pattern inference} problem is the problem of, given learning data \(\dat\), finding
a tuple pattern \((\pat_1,\ldots,\pat_n)\) such that \(\dat\models (\pat_1,\ldots,\pat_n)\).
For instance, for \(\dat\) in Example~\ref{ex:model}, \((x, axx)\) is an answer to the inference problem.

\subsection{Inference Algorithm}
We now present a (non-deterministic) algorithm for tuple pattern inference.
We formalize it by using a rewriting relation \((\tp, \Dat)\red (\tp',\Dat')\).
Starting with the initial state \((\seq{x}, \Subst{\seq{x}}{\dat})\)
(where \(\seq{x}=(x_1,\ldots,x_n)\)),
the algorithm repeatedly applies the rewriting rules below until no further rule is applicable.
When the final pair is \((\tp',\Dat')\), then \(\tp'\) is the inferred tuple pattern.
In other words, given \(\dat\),
our algorithm for tuple pattern inference (non-deterministically) outputs
\(\tp'\) such that \((\seq{x}, \Subst{\seq{x}}{\dat})\reds (\tp',\Dat')\nred\).
As we will see
later (in Theorem~\ref{th:soundness}), \((\seq{x}, \Subst{\seq{x}}{\dat})\reds (\tp',\Dat')\) implies \(\dat,\Dat'\models \tp'\), because the rewriting relation \((\tp, \Dat)\red (\tp',\Dat')\) maintains the invariant \(\Dat\tp=\Dat'\tp'\).

\infrule[R-Prefix]{\dat[*][i]\ne\seq{\epsilon}\andalso \dat[*][j]=\dat[*][i]\cdot \seq{s}\andalso x_j'\mbox{ fresh}}
        {(\tp, \Subst{(x_1,\ldots,x_m)}{\dat})\red ([x_i x_j'/x_j]\tp, \Subst{(x_1,\ldots,x_{j-1},x_j', x_{j+1},\ldots, x_m)}{\replace{\dat}{j}{\seq{s}}})}
\infrule[R-CPrefix]{\dat[*][j]=a \cdot \seq{s}\andalso a\in\Alpha\andalso x_j'\mbox{ fresh}}
{(\tp, \Subst{(x_1,\ldots,x_m)}{\dat})\red ([ax_j'/x_j]\tp, \Subst{(x_1,\ldots,x_{j-1},x_j',x_{j+1},\ldots, x_m)}{\replace{\dat}{j}{\seq{s}}})}
\infrule[R-Suffix]{\dat[*][i]\ne\seq{\epsilon}\andalso \dat[*][j]=\seq{s}\cdot\dat[*][i] \andalso x_j'\mbox{ fresh}}
        {(\tp, \Subst{(x_1,\ldots,x_m)}{\dat})\red ([x_j' x_i/x_j]\tp, \Subst{(x_1,\ldots,x_{j-1},x_j', x_{j+1},\ldots, x_m)}{\replace{\dat}{j}{\seq{s}}})}
        
\infrule[R-CSuffix]{\dat[*][j]=\seq{s}\cdot a\andalso a\in\Alpha\andalso x_j'\mbox{ fresh}}
        {(\tp, \Subst{(x_1,\ldots,x_m)}{\dat})\red
          ([x_j'a/x_j]\tp, \Subst{(x_1,\ldots,x_{j-1},x_j',x_{j+1},\ldots, x_m)}{\replace{\dat}{j}{\seq{s}}})}

\infrule[R-Epsilon]{\dat[*][j]=\seq{\epsilon}}{(\tp, \Subst{(x_1,\ldots,x_m)}{\dat})\red ([\epsilon/x_j]\tp, \Subst{(x_1,\ldots,x_{j-1},x_{j+1},\ldots, x_m)}{\remove{\dat}{j}})}

\noindent
        Here, \(\remove{\dat}{j}\) denotes the matrix obtained from \(\dat\) by removing the \(j\)-th column,
        and \(\replace{\dat}{j}{\seq{s}}\) denotes
        the matrix obtained from \(\dat\) by replacing the \(j\)-th column with \(\seq{s}\).
        We write \(a\cdot \seq{s}\) for the sequence obtained from \(\seq{s}\) by appending \(a\) to the head of each element of \(\seq{s}\).
        
The rule \rn{R-Prefix} is for the case where the \(i\)-th column of
\(M\) is a prefix of the \(j\)-th column. In this case, we replace
the \(j\)-th column with its suffix obtained by removing the prefix \(M[*][i]\),
and instantiate \(x_j\) in  \(t\) accordingly to \(x_i\cdot x_j'\).
The rule \rn{R-CPrefix} is for the case where a constant \(a\) is a prefix of the \(j\)-th column, and the rule \rn{R-Epsilon} is for the case where every element of
the \(j\)-th column is an empty sequence. In the latter case, the \(j\)-th column
is removed.

\begin{example}
    \label{ex:algo}
  Recall the data in Example~\ref{ex:model}:
    \(
\dat=  \left(\begin{array}{ll}a & aaa\\ b & abb \end{array}\right)
  \).
  The pattern \((x,axx)\) is inferred (up to variable renaming) as follows.
  \begin{align*}
&  \Big((x_1, x_2),\left(\begin{array}{rr}
        x_1 & x_2\\
    a &\ aaa\\
    b & abb
      \end{array}\right)\Big) \red \Big((x_1,ax_2'), \left(\begin{array}{rr}
        x_1 & x_2'\\
    a &\ aa\\
    b & bb
      \end{array}\right)\Big)
\red \Big((x_1, ax_1x_2''), \left(\begin{array}{rr}
        x_1 & x_2''\\
    a &\ a\\
    b &\ b
    \end{array}\right)\Big)& \\
    &  \red \Big((x_1, ax_1x_1x_2'''), \left(\begin{array}{rr}
        x_1 & x_2''' \\
    a &\ \epsilon  \\
    b & \epsilon 
    \end{array}\right)\Big) \red \Big((x_1, ax_1x_1), \left(\begin{array}{cc}
        x_1\\
    a\\
    b
    \end{array}\right)\Big) \nred {}
  \end{align*}
  Actually, the patterns such as \((x_1,ax_2')\) and 
  \((x_1, ax_1x_1x_2''')\) obtained in intermediate steps are also valid solutions of the inference problem,
  but the final pattern describes a smaller relation than them.
  \qed
\end{example}

\iffull
\begin{example}
    \label{ex:algo2}
  Recall the data in Example~\ref{ex:tupinf}:
  \(\dat = \left(\begin{array}{rrr}
    a &\ b &\ ab\\
    aa & \epsilon & aa
  \end{array}\right)
  \).
  The pattern \((ax,y,axy)\) is inferred (up to variable renaming) as follows.
  \begin{align*}
    &  \Big((x_1, x_2, x_3), \left(\begin{array}{rrr}
      x_1 & x_2 & x_3\\
    a &\ b &\ ab\\
    aa & \epsilon & aa
  \end{array}\right)\Big) \red_{\mbox{(\rn{R-Prefix}, \(x_3=x_1x_3'\))}} \Big((x_1, x_2, x_1x_3'), \left(\begin{array}{rrr}
        x_1 & x_2 & x_3'\\
    a &\ b &\ b\\
    aa & \epsilon & \epsilon
      \end{array}\right)\Big) \\&  \red_{\mbox{ (\rn{R-Prefix}, \(x_3'=x_2x_3''\))}} \Big((x_1, x_2, x_1x_2x_3''), \left(\begin{array}{rrr}
        x_1 & x_2 & x_3''\\
    a &\ b &\ \epsilon\\
    aa & \epsilon & \epsilon
    \end{array}\right)\Big)& \\
    &  \red_{\mbox{ (\rn{R-Epsilon}, \(x_3''=\seq{\epsilon}\))}} \Big((x_1, x_2, x_1x_2), \left(\begin{array}{rr}
        x_1 & x_2 \\
    a &\ b \\
    aa & \epsilon 
    \end{array}\right)\Big)& \\
    &  \red_{\mbox{ (\rn{R-CPrefix}, \(x_1=ax_1'\))}} \Big((ax'_1, x_2, ax'_1x_2), \left(\begin{array}{rr}
        x_1' & x_2\\
    \epsilon &\ b \\
    a & \epsilon 
    \end{array}\right)\Big) \nred {}
  \end{align*}
  \qed
\end{example}
\fi
 The following example shows that the result of tuple pattern inference may not be unique.
 \begin{example}
   \label{ex:nondet}
       Consider the data:
    \(\dat = \left(\begin{array}{rrr}
    aa &\ a &\ aac\\
    b &\ bb &\ bbd.
  \end{array}\right)\).
    Here, both the first and second columns are prefixes of the third column.
    We obtain \((x,y, xz)\) or \((x,y,yz)\) depending on which column is chosen
    in \rn{R-Prefix}.
    \begin{align*}
&  \Big((x_1, x_2, x_3),
      \left(\begin{array}{rrr}
        x_1 & x_2 & x_3\\
    aa &\ a &\ aac\\
    b &\ bb &\ bbd.
    \end{array}\right)\Big) \red \Big((x_1, x_2, x_1x_3'), 
      \left(\begin{array}{rrr}
        x_1 & x_2 & x_3'\\
    aa &\ a &\ c\\
    b &\ bb &\ bd.
    \end{array}\right)\Big) \end{align*}
    \begin{align*}
&  \Big((x_1, x_2, x_3),
      \left(\begin{array}{rrr}
        x_1 & x_2 & x_3\\
    aa &\ a &\ aac\\
    b &\ bb &\ bbd.
    \end{array}\right)\Big) \red \Big((x_1, x_2, x_2x_3'), \left(\begin{array}{rrr}
        x_1 & x_2 & x_3'\\
    aa &\ a &\ ac\\
    b &\ bb &\ d.
    \end{array}\right)\Big). \end{align*}
    The first pattern obtained
    describes that the first element is a prefix of the third element,
    while the second pattern describes that the second element is a prefix of the third element.
    The conjunctive pattern \((x_1, x_2, x_1x_3')\land (x_1,x_2,x_2x_3')\) subsumes both patterns.
\qed
 \end{example}

\begin{definition}[Algorithms $\TPinf$ and $\CTPinf$]
\label{def:algo}
We write \(\TPinf\) for a non-deterministic algorithm that takes data \(\dat\) as input,
applies the rewriting rules to \((\seq{x},\Subst{\seq{x}}{\dat})\) until no further rule is applicable,
and outputs some tuple pattern \(\tp\) such that \((\seq{x},\Subst{\seq{x}}{\dat})
\reds (\tp, \Dat)\nred{}\).
We write \(\CTPinf\) for a deterministic algorithm that takes data \(\dat\) as input
and outputs the conjunctive tuple pattern
\(\bigwedge\set{\tp \mid (\seq{x}, \Subst{\seq{x}}{\dat})\reds (\tp,\Dat)\nred{}}\).
\end{definition}
Note that the set of \((\tp,\Dat)\) such that
\((\seq{x}, \Subst{\seq{x}}{\dat})\reds (\tp,\Dat)\nred{}\) is finite; thus we can indeed compute
\(\CTPinf(\dat)\) by an exhaustive search.
Note, however, that \(\CTPinf\) runs in time exponential in \(\sizeof{\dat}\) in the worst case.

We now state some key properties of the algorithm \(\TPinf\); other important properties are discussed in Section~\ref{sec:solvable}.
Proofs omitted below are found in \iffull Appendix~\ref{sec:proofs}. \else \cite{STPlong}. \fi
The theorem below states the soundness of the algorithm, which
follows immediately from the fact that \((\tp_1,\Dat_1)\red (\tp_2,\Dat_2)\)
implies \(\Dat_1\tp_1=\Dat_2\tp_2\). See \iffull Appendix~\ref{sec:proofs} \else \cite{STPlong} \fi for more details.
\begin{theorem}[soundness]
  \label{th:soundness}
  Suppose \(\seq{x}=x_1,\ldots,x_n\) are mutually distinct variables and \(\shapeof{\dat}=(m,n)\).
  If \(\TPinf(\dat)\) returns \(\tp\),
then \(\dat \models \tp\). \end{theorem}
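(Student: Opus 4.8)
The plan is to extract the \emph{step invariant} that the text alludes to and then run a one-line induction on the length of the reduction. Fix $\dat$ with $\shapeof{\dat}=(m,n)$. Call a reachable state $(\tp,\Subst{(x_1,\ldots,x_k)}{\dat_0})$ (so $\dat_0$ is an $m\times k$ matrix over $\Alpha^*$ and $x_1,\ldots,x_k$ are mutually distinct) \emph{good} if $\FV(\tp)\subseteq\set{x_1,\ldots,x_k}$ and $\Subst{(x_1,\ldots,x_k)}{\dat_0}\,\tp=\dat$. The initial state $(\seq{x},\Subst{\seq{x}}{\dat})$ is good: $\FV(\seq{x})\subseteq\set{\seq{x}}$, and unfolding the definition of $\Subst{\cdot}{\cdot}$ at the tuple pattern $\seq{x}=(x_1,\ldots,x_n)$ gives $\Subst{\dat}{\seq{x}}\,\seq{x}=\dat$ entrywise (here the distinctness of the $x_i$, assumed in the theorem, is what makes $\Subst{\dat}{\seq{x}}$ a well-formed substitution). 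So once we show that $\red$ preserves goodness, an induction on the number of steps yields that the $(t,\Dat)$ of the theorem is good, i.e.\ $\Dat t=\dat$, which is by definition $\dat,\Dat\models t$.

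It remains to prove that if $(\tp,\Dat)$ is good and $(\tp,\Dat)\red(\tp',\Dat')$ then $(\tp',\Dat')$ is good; since the rules rewrite columns only and never touch rows, the row count stays $m$, and it suffices to check $\FV(\tp')\subseteq\mathrm{dom}(\Dat')$ and $\Dat'\tp'=\Dat\tp$ (which equals $\dat$). Write $\Dat=\Subst{(x_1,\ldots,x_k)}{\dat_0}$, so $\Dat(x_l)=\dat_0[*][l]$. For \rn{R-Prefix} (indices $i\ne j$, side condition $\dat_0[*][i]\cdot\seq{s}=\dat_0[*][j]$ and $\dat_0[*][i]\ne\seq{\epsilon}$, with $x_j'$ fresh): $\tp'=[x_ix_j'/x_j]\tp$ and $\Dat'=\Subst{(x_1,\ldots,x_{j-1},x_j',x_{j+1},\ldots,x_k)}{\replace{\dat_0}{j}{\seq{s}}}$. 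Freshness of $x_j'$ makes $[x_ix_j'/x_j]$ capture-free and gives $\FV(\tp')\subseteq(\FV(\tp)\setminus\set{x_j})\cup\set{x_i,x_j'}\subseteq\mathrm{dom}(\Dat')$. For the matrix equation: applying ``$[x_ix_j'/x_j]$ then $\Dat'$'' sends $x_j$ to $\Dat'(x_i)\cdot\Dat'(x_j')=\dat_0[*][i]\cdot\seq{s}=\dat_0[*][j]=\Dat(x_j)$, and sends any other $x_l$ ($l\ne j$) to $\Dat'(x_l)=(\replace{\dat_0}{j}{\seq{s}})[*][l]=\dat_0[*][l]=\Dat(x_l)$, since column $l$ is untouched; hence $\Dat'\tp'=\Dat\tp$. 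The rule \rn{R-CPrefix} is the same computation with $x_j$ replaced by $ax_j'$ and $\Dat'(ax_j')=a\cdot\seq{s}=\dat_0[*][j]$. For \rn{R-Epsilon} (side condition $\dat_0[*][j]=\seq{\epsilon}$): $\tp'=[\epsilon/x_j]\tp$ deletes every occurrence of $x_j$, which is exactly matched by $\Dat(x_j)=\seq{\epsilon}$ contributing the empty string in every row, and $\Dat'=\Subst{(x_1,\ldots,x_{j-1},x_{j+1},\ldots,x_k)}{\remove{\dat_0}{j}}$ agrees with $\Dat$ on every surviving variable because $\remove{\dat_0}{j}$ merely drops column $j$; thus $\Dat'\tp'=\Dat\tp$ and $\FV(\tp')\subseteq\FV(\tp)\setminus\set{x_j}\subseteq\mathrm{dom}(\Dat')$. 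This closes the induction.

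I do not expect a genuine obstacle: soundness is essentially the observation that each rewriting rule replaces the equation ``current pattern applied to residual data $=$ original data'' with an equivalent one, and no property of \emph{terminal} states is used (so the theorem really does hold for every intermediate state, not just the final one). The only things needing attention are bookkeeping: the freshness of $x_j'$ (so that $[x_ix_j'/x_j]$ and $[ax_j'/x_j]$ cause no capture and $x_j'\notin\FV(\tp)$), and the implicit re-indexing of the remaining columns when \rn{R-Epsilon} removes one — both become transparent once variables are tracked by name rather than by position.
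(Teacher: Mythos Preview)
Your proof is correct and follows essentially the same approach as the paper: the paper isolates the step invariant $\Dat_1\tp_1=\Dat_2\tp_2$ as Lemma~\ref{lem:red-invariant} (proved by the same rule-by-rule case analysis you carry out) and then concludes by the same induction on the length of the reduction. Your explicit tracking of $\FV(\tp)\subseteq\mathrm{dom}(\Dat)$ is a minor extra bit of care that the paper leaves implicit.
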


Our non-deterministic algorithm can find a pattern in time polynomial in the size of given data.
\begin{theorem}
  \label{th:efficiency}
 Given \(\dat\) as input, \(\TPinf\) runs
in time polynomial in \(\sizeof{\dat}\).
\end{theorem}
\begin{proof}
  In each step of rewriting, an applicable rule can be found and applied in polynomial time,
  if there is any.
  Since each step of the rewriting strictly decreases \(\sizeof{\dat}\),
  the length of the rewriting sequence is linear in \(\sizeof{\dat}\). 
\end{proof}
 \section{Solvable Tuple Patterns}
\label{sec:stp}
\label{sec:solvable}

In this section, we introduce the notions of \emph{solvable tuple patterns} (STPs)
and \emph{conjunctive solvable tuple patters} (CSTPs),
which characterize the classes of
(conjunctive) tuple patterns that can be inferred by our algorithms.
The pattern \((xx)\) on a singleton tuple is obviously non-inferable by our algorithm, and it is indeed
deemed non-solvable in the characterization below. In contrast, the pattern \((xx, x)\) is solvable.

\begin{remark}
  Instead of restricting the class of tuple patterns, one may expect to obtain an extension of the algorithm
  that can infer arbitrary tuple patterns. Unfortunately, learning arbitrary tuple patterns from positive samples
  is computationally infeasible. As mentioned in Remark~\ref{rem:pattern-language}, singleton tuple patterns
  coincide with E-patterns~\cite{10.1007/3-540-11980-9_19}, and
  the full class of E-pattern languages
  is not learnable from positive data in general
  (more precisely, it is not learnable when the alphabet size is
  2, 3, or 4~\cite{REIDENBACH200691,REIDENBACH2008166}
  and the learnability is unknown when the alphabet size is larger).
Also, the inclusion problem is undecidable for both
  NE- and E-pattern languages~\cite{JIANG199553,nowotka_et_al:LIPIcs.CPM.2025.4}.
  In contrast, STPs have good algorithmic properties, as discussed in Section~\ref{sec:decision}.
\qed
\end{remark}

\subsection{Solvability}
We define the notion of solvability via a reduction relation \(\tp \pred\tp'\) on tuple patterns.
Just as the rewriting relation \((\tp,\Dat)\red(\tp',\Dat')\) in Section~\ref{sec:tpinf} detects
 prefix/suffix relationships in \emph{data} and simplifies the data accordingly, the reduction \(\tp\pred\tp'\)
detects prefix/suffix relationships in \emph{patterns} and simplifies the patterns.
The reduction relation \(\pred\) is defined by:
\iffull
\begin{minipage}[t]{0.49\textwidth}
\infrule[PR-Prefix]{p_j = p_i\cdot p'_j\andalso p_i\ne \epsilon}{(p_1,\ldots,p_n) \pred
  (p_1,\ldots,p_{j-1},p'_j,p_{j+1},\ldots,p_n)}
\infrule[PR-CPrefix]{p_j = a\cdot p'_j\andalso a\in \Alpha}
  {(p_1,\ldots,p_n) \pred
    (p_1,\ldots,p_{j-1},p'_j,p_{j+1},\ldots,p_n)}
\end{minipage}
\begin{minipage}[t]{0.49\textwidth}
\infrule[PR-Suffix]{p_j = p'_j\cdot p_i\andalso p_i\ne \epsilon}{(p_1,\ldots,p_n) \pred
  (p_1,\ldots,p_{j-1},p'_j,p_{j+1},\ldots,p_n)}
\infrule[PR-CSuffix]{p_j = p'_j\cdot a\andalso a\in \Alpha}
  {(p_1,\ldots,p_n) \pred
    (p_1,\ldots,p_{j-1},p'_j,p_{j+1},\ldots,p_n)}
\infrule[PR-Epsilon]{p_j=\epsilon}  
  {(p_1,\ldots,p_n) \pred
    (p_1,\ldots,p_{j-1},p_{j+1},\ldots,p_n)}
\end{minipage}\\
\else
\begin{multicols}{2}
\infrule[PR-Prefix]{p_j = p_i\cdot p'_j\andalso p_i\ne \epsilon}{(p_1,\ldots,p_n) \pred
  (p_1,\ldots,p_{j-1},p'_j,p_{j+1},\ldots,p_n)}
\infrule[PR-CPrefix]{p_j = a\cdot p'_j\andalso a\in \Alpha}
  {(p_1,\ldots,p_n) \pred
    (p_1,\ldots,p_{j-1},p'_j,p_{j+1},\ldots,p_n)}
\infrule[PR-Epsilon]{p_j=\epsilon}  
  {(p_1,\ldots,p_n) \pred
    (p_1,\ldots,p_{j-1},p_{j+1},\ldots,p_n)}
\end{multicols}
\noindent
Here, analogous rules for suffixes are omitted.
\fi
  Each rule of name \rn{PR-xx} corresponds to the rule \rn{R-xx};
  while the rule \rn{R-xx} manipulates data \(M\), the rule \rn{PR-xx} applies
  the corresponding operation to a tuple pattern, and simplifies the pattern.
We define the set \(\Tp_k\) of \(k\)-ary \emph{solvable tuple patterns} (STPs, for short) as:
\(\set{t \mid t\preds (x_1,\ldots,x_n), |t|=k, \mbox{ and $x_1,\ldots,x_n$ are distinct from each other}}\),
and write \(\Tp\) for \(\bigcup_{k} \Tp_k\).
In other words, a tuple pattern \(\tp\) is solvable if
\(\tp\) can be reduced to  a trivial pattern \((x_1,\ldots,x_n)\) consisting of distinct variables.
  A conjunctive tuple pattern \(\ctp= \tp_1\land\cdots \land \tp_m\) is \emph{solvable}
  if \(\tp_1,\ldots,\tp_m\in\Tp_k\) for some \(k\).
\iffull
  We call the pattern modified or removed in
  the rewriting \((p_1,\ldots,p_n)\pred (p'_1,\ldots,p'_m)\)
  (i.e., \(p_j\) in the above rules) a \emph{principal} pattern. We also call the pattern
  \(p_i\) in \rn{PR-Prefix} or \rn{PR-Suffix} an \emph{auxiliary} pattern.
  \fi

\begin{example}
    \label{ex:tp}
\((x_1x_2, x_2x_1)\not\in \Tp_2\), but   \((x_1x_2, x_2x_1, x_1)\in \Tp_3\).
In fact, \((x_1x_2, x_2x_1)\not\in \Tp_2\) is obvious, as no rule is applicable to \((x_1x_2,x_2x_1)\).
The latter follows by:
\begin{align*}
  (x_1x_2, x_2x_1, x_1)\pred (x_2,x_2x_1,x_1) \pred (x_2,x_1,x_1)\pred (x_2,x_1,\epsilon)\pred (x_2,x_1),
\end{align*}
where \rn{PR-Prefix} is applied in the first three steps, and \rn{PR-Epsilon} is applied in the last
step.
\end{example}
\iffull
\begin{remark}
  In the definition of \(\Tp_k\), we can actually drop the
  requirement that the variables
  \(x_1,\ldots,x_n\) are distinct from each other.
  Indeed, if \(x_1=x_2\), for example, then we can erase \(x_1\) by:
  \[(x_1,x_1,x_3,\ldots,x_n) \pred (\epsilon,x_1,x_3,\ldots,x_n)\pred (x_1,x_3,\ldots,x_n).\]
  Thus, \(\Tp_k\) does not change even if we do not require that
  \(x_1,\ldots,x_n\) are distinct from each other.
  \qed
\end{remark}
\fi
\begin{remark}
  \label{rem:solvable}
  The name  ``\emph{solvable} tuple patterns'' comes from the following fact.
  Suppose \(\tp=(p_1,\ldots,p_n)\in \Tp_n\). Then, the general solution for a system of equations
  \(p_1=s_1,\ldots,p_n=s_n\) can be expressed by using (i) \(s_1,\ldots,s_n, \epsilon\), (ii) constants \(a\in\Alpha \),
   (iii) the (partial) operation
  \(\ldiff{s}{s'}\), which is defined as \(s_0\) such that \(ss_0=s'\) if \(s\) is a prefix of \(s'\),
  and (iv) the (partial) operation
  \(\rdiff{s}{s'}\), which is defined as \(s_0\) such that \(s_0s=s'\) if \(s\) is a suffix of \(s'\).
  For example, \(\tp=(x_1x_2, x_2x_1, x_1)\) is solvable in the sense that
  the system of equations \(x_1x_2=s_1, x_2x_1=s_2, x_1=s_3\) has a general solution \(x_1=s_3, x_2=\ldiff{s_3}{s_1}\)
  provided  \(\ldiff{(\ldiff{s_3}{s_1})}s_2=s_3\) (or, equivalently \((\ldiff{s_3}{s_1})\cdot s_3=s_2\)).

As in the case of ordinary pattern languages, the class of languages described by solvable tuple patterns
(where a tuple pattern \((\pat_1,\ldots,\pat_n)\) is identified with the language described by
\(\pat_1\$\cdots\$\pat_n\))
is incomparable with the class of context-free languages, but is subsumed by the class of indexed languages~\cite{10.1145/321479.321488}.
Note that the language \(\set{w\$ww\mid w\in\set{a,b}^*}\), described by the solvable
tuple pattern \((x,xx)\),
is not context-free.  
  \qed
\end{remark}

\subsection{Properties of the Inference Algorithm}
\label{sec:stp-properties}
As stated in the following theorems, the algorithms \(\TPinf\) and \(\CTPinf\) defined in
 Definition~\ref{def:algo}
 indeed infer STPs and CSTPs, respectively. Moreover, modulo a minor condition,
 they are \emph{complete} for these classes.
\begin{theorem}[inferred patterns are solvable]
  \label{th:soundness2}
If \(\TPinf(\dat)\) returns \(\tp\), then \(\tp\in \Tp\).
\end{theorem}
\begin{theorem}[completeness of $\TPinf$]
  \label{th:completeness}
  Let \(\seq{x}=x_1,\ldots,x_n\) be mutually distinct variables.
  \begin{enumerate}[(I)]
\item  If \(\dat \smodels \tp\) and \(\tp\in \Tp_n\),
\((\seq{x},\Subst{\seq{x}}{\dat})\reds (\tp,\Dat)\)
  for some \(\Dat\).\footnote{Recall that tuple patterns are identified up to variable renaming.}
\item   If \(\dat \models \tp\) and \(\tp\in \Tp_n\),
  then there exists \(\tp'\in\TP\) such that \(\lang(\tp')\subseteq \lang(\tp)\) and \(\tp'\) is
  a possible output of \(\TPinf(\dat)\), i.e.,  \((\seq{x},\Subst{\seq{x}}{\dat})\reds (\tp',\Dat)\nred{}\).
\end{enumerate}
\end{theorem}
The following theorem, obtained as a corollary of Theorems~\ref{th:soundness} and
\ref{th:completeness},
says that $\CTPinf(\dat)$ returns the \emph{least} CSTP \(\ctp\) such that \(\dat\models \ctp\).
\begin{theorem}[completeness of $\CTPinf$]
  \label{th:completeness-ctpinf}
  Let \(\dat\) be learning data.
  Then \(\dat\models \CTPinf(\dat)\), and
  \(\lang(\CTPinf(\dat))\subseteq \lang(\ctp)\) for every CSTP \(\ctp\) such that \(\dat\models \ctp\).
\end{theorem}
\begin{remark}
Theorem~\ref{th:completeness}~(I) would not hold if the assumption
  \(\dat \smodels t\) were weakened to    \(\dat \models t\).
  In fact, let \(t=(x,x)\) and \(\dat = (\epsilon\ \epsilon)\). Then \(\dat\models t\) and \(t\in \Tp_2\),
  but we can only obtain \((\epsilon,\epsilon), (\epsilon,x), (x,\epsilon), (x,y)\) by rewriting.
  \iffull
  In fact, the only possible reduction sequences are:
  \begin{align*}
    & \Big((x_1,x_2), \left(\begin{array}{rr} x_1 & x_2 \\ \epsilon & \epsilon\end{array}\right)\Big) \red \Big((x_1,\epsilon), \left(\begin{array}{r} x_1 \\ \epsilon \end{array}\right)\Big)
    \red \Big((\epsilon,\epsilon), \big(\,\big)\Big)\\
    & \Big((x_1,x_2), \left(\begin{array}{rr} x_1 & x_2 \\ \epsilon & \epsilon\end{array}\right)\Big) \red \Big((\epsilon, x_2), \left(\begin{array}{r} x_2 \\ \epsilon \end{array}\right)\Big)
    \red \Big((\epsilon,\epsilon), \big(\,\big)\Big).
  \end{align*}
  \fi
  The rule \rn{R-Prefix} is inapplicable
  because of the side condition \(\dat[*][i]\ne\seq{\epsilon}\). The pattern \((x,x)\) could be obtained if we removed the condition, but then we would
  obtain unboundedly many patterns, such as \((xxx, xx)\).
If we add one more sample, like \((a,a)\) to \(\dat\) above, then we can infer \((x,x)\) as expected.
  \qed
\end{remark}

As observed in Example~\ref{ex:nondet}, the output of \(\TPinf\) is not unique due to non-determinism (whereas \(\CTPinf\) is), but
 the following theorem states that \(\TPinf\) always outputs a \emph{minimal} solvable pattern.

\begin{theorem}[minimality]
  \label{th:minimality}
  Suppose \(\TPinf(\dat)\) returns \(\tp_1\).
If \(\dat \models \tp_0\) with \(\tp_0\in \Tp\), then 
\(\lang(\tp_1)\supseteq \lang(\tp_0)\) implies 
\(\lang(\tp_1)= \lang(\tp_0)\).
\end{theorem}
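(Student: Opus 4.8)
The proof will proceed by taking the reduction sequence $(\seq{x},\Subst{\seq{x}}{\dat}) \reds (\tp_1,\Dat_1)$ and showing that, for any solvable $\tp_0$ with $\dat\models\tp_0$, the hypothesis $\lang(\tp_1)\supseteq\lang(\tp_0)$ forces $\lang(\tp_1)\subseteq\lang(\tp_0)$. The key is that $\tp_1$ is, by Theorem~\ref{th:soundness}, ``as tight as the data allows'': $\dat,\Dat_1\models\tp_1$, so $\Dat_1$ is a witness that $\dat$ lies in $\lang(\tp_1)$, and moreover the final state is irreducible (no rule applies). I first want to record what irreducibility of $(\tp_1,\Dat_1)$ means concretely for the data matrix $\Dat_1(=\Subst{\seq{y}}{\dat_1}$ for some residual data $\dat_1$): no column of $\dat_1$ is $\seq{\epsilon}$ (else \rn{R-Epsilon} applies), no column is a prefix of another nonempty-headed column in a way that \rn{R-Prefix} or \rn{R-CPrefix} would fire — in other words, for every column $j$, the strings in $\dat_1[*][j]$ have pairwise incompatible (non-prefix-related) prefixes with the other columns and with all constants; informally, the residual columns are ``generic'' and ``independent''.

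The heart of the argument is then a \emph{rigidity lemma}: if $\dat\models\tp_0$ with $\tp_0\in\Tp$ and the residual data $\dat_1$ associated with $\tp_1$ is generic in the above sense, then any witness substitution $\Theta_0$ for $\dat\models\tp_0$ must factor through the reduction that produced $\tp_1$ — i.e., there is a substitution $\sigma$ of patterns for the variables of $\tp_1$ with $\sigma\tp_1 = \tp_0$ (up to renaming). To get this I plan to run the reduction $(\seq{x},\Subst{\seq{x}}{\dat})\reds(\tp_1,\Dat_1)$ in parallel on the \emph{symbolic} data $\tp_0$ rather than on $\dat$: each rewrite step \rn{R-Prefix}/\rn{R-CPrefix}/\rn{R-Epsilon} that fires on $\Dat$ is justified by a prefix/constant/emptiness relation among the columns of $\dat$; because $\dat = \Theta_0\tp_0$, each such relation, being witnessed on sufficiently-generic strings, must already hold \emph{symbolically} among the corresponding patterns $p^{(0)}_1,\dots$ of $\tp_0$ (here is where genericity of $\dat_1$ — the data that remains after all steps — is used to rule out ``accidental'' prefix coincidences that are not reflected in $\tp_0$). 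Thus the same rule is applicable to the pattern tuple, and by induction on the length of the reduction we obtain $\tp_0\preds \tp_1'$ where $\tp_1'$ is $\tp_1$ with the variables of $\tp_0$ substituted back in — precisely giving $\sigma$ with $\sigma\tp_1=\tp_0$. Once we have $\sigma\tp_1=\tp_0$, monotonicity of $\lang(\cdot)$ under substitution gives $\lang(\tp_0)=\lang(\sigma\tp_1)\subseteq\lang(\tp_1)$; combined with the hypothesis $\lang(\tp_1)\supseteq\lang(\tp_0)$ — wait, that is the wrong direction — so instead I use $\sigma$ to conclude $\lang(\tp_1)\subseteq\lang(\tp_0)$: the hypothesis $\lang(\tp_1)\supseteq\lang(\tp_0)$ plus $\lang(\tp_0)\supseteq\lang(\sigma\tp_1)$... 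Let me restate cleanly: the right target is to show $\lang(\tp_1)\subseteq\lang(\tp_0)$, and for that I use that $\tp_1$ is a minimal solvable refinement of the data, so that $\tp_1$ itself can be recovered from $\tp_0$ by a further reduction, giving $\lang(\tp_1)\subseteq\lang(\tp_0)$ directly; the displayed inclusion $\lang(\tp_1)\supseteq\lang(\tp_0)$ then upgrades this to equality.

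Concretely the steps are: (1) characterize irreducibility of $(\tp_1,\Dat_1)$ in terms of the residual matrix $\dat_1$ (genericity); (2) prove that for large-enough / generic data, a prefix relation (or constant-prefix, or emptiness) between columns of $\Theta_0\tp_0$ holds iff it holds symbolically between the corresponding columns of $\tp_0$ — this needs that the residual strings are long and use distinct letters, which one secures by first instantiating to a ``separating'' substitution (using $|\Alpha|\ge 2$ as assumed in Section~\ref{sec:tpinf}); (3) lift the concrete reduction $\Dat\to\cdots\to\Dat_1$ to a symbolic reduction $\tp_0\preds(\text{renaming of }\tp_1)$ by induction, using (2) at each step; (4) deduce $\lang(\tp_1)\subseteq\lang(\tp_0)$, hence equality with the hypothesis. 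The main obstacle is step (2)–(3): the concrete reduction chooses, at each step, \emph{some} applicable rule, and I must argue that whatever it chose is also legitimate on the symbolic side and does not ``overshoot'' $\tp_0$ — i.e. that the prefix witnessed in $\dat$ is genuinely the one exhibited by $\tp_0$ and not a longer accidental one. The resolution is that accidental prefixes would survive into the final $\dat_1$ as a further applicable \rn{R-Prefix}/\rn{R-CPrefix} step, contradicting irreducibility; so one should carry along, as an induction invariant, that the residual data at every stage is generic relative to the residual patterns of $\tp_0$. Packaging this invariant correctly — in particular choosing the genericity condition strong enough to be preserved by each rule yet weak enough to hold initially — is the delicate part of the write-up.
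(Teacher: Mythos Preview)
Your plan has a genuine gap. Steps (2)--(3) attempt to lift each concrete rewrite step on $\dat$ to a symbolic step on $\tp_0$, using only that $\dat\models\tp_0$ and that the \emph{final} residual data $\dat_1$ is irreducible (``generic''). But this lifting is impossible in general without invoking the hypothesis $\lang(\tp_0)\subseteq\lang(\tp_1)$. Take $\dat$ as in Example~\ref{ex:nondet} and suppose the algorithm outputs $\tp_1=(x,y,xz)$ by applying \rn{R-Prefix} with column~$1$ as a prefix of column~$3$. Let $\tp_0=(x,y,yz)$: then $\dat\models\tp_0$ and $\tp_0\in\Tp$, yet the relation ``column~$1$ is a prefix of column~$3$'' fails symbolically in $\tp_0$, so your lifting breaks at the very first step. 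Irreducibility of the final $\dat_1$ cannot help: the algorithm has already consumed the ``accidental'' prefix, so nothing survives to contradict $(\tp_1,\Dat_1)\nred{}$. What rules this $\tp_0$ out is precisely the hypothesis $\lang(\tp_0)\subseteq\lang(\tp_1)$, which never enters your steps (2)--(3).

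Your oscillation about directions is a symptom of the same problem: a substitution $\sigma$ with $\sigma\tp_1=\tp_0$ yields $\lang(\tp_0)\subseteq\lang(\tp_1)$, i.e.\ the hypothesis, not the goal; and ``$\tp_1$ recovered from $\tp_0$ by a further $\pred$-reduction'' yields neither inclusion, since $\pred$ is not monotone for $\lang$ in either direction (each step is closer to a bijective change of coordinates on the language). The paper's proof instead threads the hypothesis through an induction on the length of the reduction: Lemma~\ref{lem:red-implies-pred2} shows that a single step $(\seq{x},\Subst{\seq{x}}{\dat})\red(\tp,\Subst{\seq{y}}{\dat'})$ together with $\lang(\tp_0)\subseteq\lang([\tp_2/\seq{y}]\tp)$ forces $\tp_0$ itself to have the matching structural form (e.g.\ its $i$-th component is a prefix of its $j$-th), producing $\tp_0'$ with $\dat'\models\tp_0'$, $\lang(\tp_0')\subseteq\lang(\tp_2)$, and the back-implication $\lang(\tp_0')=\lang(\tp_2)\Rightarrow\lang(\tp_0)=\lang(\tp_1)$; the base case (Lemma~\ref{lem:minimality-base}) handles the irreducible start.
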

The minimality property above is important for learning from positive samples only.
Suppose that we are trying to learn the language described by
a tuple pattern \(\tp_0\). In learning from positive samples,
we are given data \(M\) consisting of only a (finite) subset of \(\lang(\tp_0)\).
Thus, our non-deterministic algorithm may output a ``wrong'' pattern
\(\tp_1\) such that \(M\models \tp_1\) but \(\lang(\tp_1)\ne \lang(\tp_0)\).
Thanks to the theorem above, in such a case, there exists data
\(\seq{s}\in \lang(\tp_0)\setminus \lang(\tp_1)\).
Thus, by adding the new positive sample
\(\seq{s}\) to \(M\) and running our algorithm again,
the wrong pattern \(\tp_1\) will not be encountered again.
Without the minimality property, it could be the case
\(\lang(\tp_1)\subsetneq \lang(\tp_0)\), and then the pattern \(\tp_1\) would not be refutable
 with only positive samples.

For each STP \(\tp\),
there exist polynomial-size data that uniquely characterize \(\tp\).
This property ensures convergence of the learning process (cf. Theorem~\ref{th:learnability}).
\begin{theorem}[Characteristic Data]
  \label{th:data-size}
  Let \(\tp=(p_1,\ldots,p_n)\) be an STP such that \(|p_1\cdots p_n|=m\).
  Then, there exists \(\dat\) such that (i) \(\sizeof{\dat} = O((m+n)\log n)\),
  (ii) for any \(\dat'\) such that \(\dat\subseteq \dat'\subseteq \lang(\tp)\),
  there exists \(\Dat\) such that \((\seq{x},\Subst{\seq{x}}{\dat'})\reds (\tp, \Dat)\nred{}\), and
  (iii) for any \(\dat'\) such that \(\dat\subseteq \dat'\subseteq \lang(\tp)\),
  \((\seq{x},\Subst{\seq{x}}{\dat'})\reds (\tp', \Dat')\nred{}\)
  implies \(\lang(\tp)=\lang(\tp')\).
  Furthermore, given \(\tp\), \(\dat\) can be constructed in polynomial time. \end{theorem}

\subsection{Learnability}
\label{sec:learnability}
We have so far considered an algorithm for inferring \(\tp\) (\(\ctp\), resp.) such that \(\dat\models \tp\) (\(\dat\models\ctp\), resp.),
given \(\dat\). In the whole learning process, we need to repeatedly invoke the algorithm for gradually increasing learning data 
\(\dat_0\subset \dat_1\subset \dat_2\subset \cdots\), until a true STP (or CSTP) is found.
In Gold's learning model~\cite{GOLD1967447}, a class \(\mathcal{C}\) of languages is \emph{identifiable in the limit} if
there exists an algorithm \(f\) such that, for any language \(L\in\mathcal{C}\) and
any infinite sequence \(s_0,s_1,s_2,\ldots\) s.t. \(L = \set{s_i \mid i\in \omega}\),
the sequence \(f(\set{s_0}),f(\set{s_0,s_1}),f(\set{s_0,s_1,s_2}),\ldots\) eventually converges to (a representation of) \(L\).
Both the class of STPs and that of CSTPs are learnable in this sense, and our algorithms
\(\TPinf\) and \(\CTPinf\) serve as \(f\) above, as stated below.
\begin{theorem}
\label{th:learnability}  
  Suppose \(\tp\) is an STP, \(\lang(\tp) = \set{\seq{s}_i\mid i\in\omega}\), and
   \(\dat_i = \set{\seq{s}_j\mid 0\le j\le i}\). Then, there exists \(k\) such that
  \(\lang(\TPinf(\dat_i))=\lang(\tp)\) for all \(i\ge k\).
  Similarly, if
  \(\ctp\) is a CSTP, \(\lang(\ctp) = \set{\seq{s}_i\mid i\in\omega}\), and
  \(\dat_i = \set{\seq{s}_j\mid 0\le j\le i}\), then there exists \(k\) such that
  \(\lang(\CTPinf(\dat_i))=\lang(\ctp)\) for all \(i\ge k\).
\end{theorem}

In applications to program verification discussed later, we use CSTPs to represent and
infer \emph{inductive} invariants. 
To that end, we need to consider a different learning model where the goal is to find the strongest inductive invariant among those
describable by a CSTP,
given an oracle to check the inductiveness of a CSTP. This can be regarded as a kind of active
learning framework where the oracle is a teacher. The following theorem states that our algorithm \(\CTPinf\)
can also be used in this learning framework.
\begin{theorem}
  \label{th:cstp-lfp}
  Let \(\mathcal{F}\COL 2^{\Sigma^* \times \cdots \times \Sigma^*}\to 2^{\Sigma^* \times \cdots \times \Sigma^*} \) be a monotonic function,
  and suppose that there exists an algorithm \(g_{\mathcal{F}}\) that, given a CSTP \(\ctp\),
 returns ``None'' if \(\lang(\ctp)\supseteq \mathcal{F}(\lang(\ctp))\) holds,
  and returns \(Some(\seq{s})\) for some \(\seq{s}\in \mathcal{F}(\lang(\ctp))\setminus \lang(\ctp)\) if the inclusion does not hold.
  Then the procedure below eventually terminates and
  returns the least CSTP \(\ctp\) such that \(\lang(\ctp)\supseteq \mathcal{F}(\lang(\ctp))\).
\begin{align*}
&  \dat \gets \emptyset;  \ctp \gets (a,\epsilon,\ldots,\epsilon)\land (\epsilon,\epsilon,\ldots,\epsilon) ;\\
&  \textbf{\upshape while}\ \mathit{true}\ \textbf{\upshape do }  \textbf{\upshape if }
         {\oracle(\ctp)= Some(\seq{s})}\ \textbf{\upshape then }
         {      (\dat \gets \dat \cup \{\seq{s}\};  \ctp \gets \CTPinf(\dat))}\ \textbf{\upshape else }
         {\textbf{\upshape return}\ \ctp  }
\end{align*}
\end{theorem}
The algorithm above first sets \(\ctp\) to \((a,\epsilon,\ldots,\epsilon)\land (\epsilon,\epsilon,\ldots,\epsilon)\) so that \(\lang(\ctp)=\emptyset\).
It then repeatedly adds to \(\dat\) the element returned by the oracle \(\oracle\) and invokes \(\CTPinf\).
A crucial property to guarantee the theorem is that
\(\CTPinf(\dat)\) always returns the \emph{least} CSTP \(\ctp\) such that \(\dat\subseteq \lang(\ctp)\).
\iffull
See Appendix for more details.
\else
See \cite{STPlong} for more details.
\fi
\begin{example}
  \label{ex:cstp-lfp}
  Consider the function \texttt{append} defined by:
  \begin{verbatim}
  let rec append l1 l2 = match l1 with [] -> l2 | x::l1' -> x::(append l1' l2).
\end{verbatim}
  A ternary relation \(I\) is an inductive invariant describing
  the input-output relation of \texttt{append} if and only if 
  \(I\) satisfies \(I \supseteq \mathcal{F}(I)\) for the following function \(\mathcal{F}\):
  \begin{align*}
    \mathcal{F}(R) =
    \set{(l_1,l_2,l_3) \mid
      (l_1=\epsilon\land l_2=l_3) \lor
      (l_1=xl'_1\land (l'_1,l_2,l'_3)\in R\land l_3=xl'_3)}.
  \end{align*}
  Here, we identify lists with words. 
  We wish to find the strongest CSTP \(\ctp\) such that \(\lang(\ctp)\) is such an \(I\), that is,
  the strongest \(\ctp\) satisfying \(\lang(\ctp)\supseteq \mathcal{F}(\lang(\ctp))\).
  Using the procedure above, we can derive \((y,z, yz)\) as the strongest \(\ctp\).
  See also Example~\ref{ex:solving-reva}, where the same procedure is used  
  to solve the \(\Reva\) example in Section~\ref{sec:intro}.
  \qed
\end{example}

Theorem~\ref{th:cstp-lfp} above would not hold if CSTPs were replaced with STPs, because
there is no guarantee that there exists a \emph{least} \(\tp\) such that \(\lang(\tp)\supseteq \mathcal{F}(\lang(\tp))\).
For example, suppose that \(\mathcal{F}\) is a constant function that always returns
\(\set{(a, ab, ab), (cd,c,cde)}\). Then, \(\lang(\tp_1)\supseteq \mathcal{F}(\lang(\tp_1))\) and
\(\lang(\tp_2)\supseteq \mathcal{F}(\lang(\tp_2))\) hold for \(\tp_1=(x,y,xz)\) and \(\tp_2=(x,y,yz)\),
but there is no STP \(\tp\) such that \(\lang(\tp)\subseteq \lang(\tp_1)\cap \lang(\tp_2)\) and
\(\lang(\tp)\supseteq \mathcal{F}(\lang(\tp))\). We can, however enumerate all the \emph{minimal} \(\tp\)'s
such that \(\lang(\tp)\supseteq \mathcal{F}(\lang(\tp))\).
\begin{theorem}
  \label{th:stp-lfp}
  Let \(\mathcal{F}\COL 2^{\Sigma^* \times \cdots \times \Sigma^*}\to 2^{\Sigma^* \times \cdots \times \Sigma^*} \) be a monotonic function,
  and suppose that there exists an algorithm \(g_{\mathcal{F}}\) that, given an STP \(\tp\),
 returns ``None'' if \(\lang(\tp)\supseteq \mathcal{F}(\lang(\tp))\) holds,
 and returns \(Some(\seq{s})\) for some \(\seq{s}\in \mathcal{F}(\lang(\tp))\setminus \lang(\tp)\) if the inclusion does not hold.
 Suppose also that \(\mathcal{F}(\emptyset)\ne \emptyset\), and an element
 \(\seq{s}\) such that \(\seq{s}\in\mathcal{F}(\emptyset)\) is computable.
  Then there exists an algorithm that enumerates all the STPs \(\tp\) such that 
  \(\lang(\tp)\supseteq \mathcal{F}(\lang(\tp))\) and, for every \(\tp'\),
 \(\lang(\tp')\subsetneq \lang(\tp)\) implies  \(\lang(\tp')\not\supseteq \mathcal{F}(\lang(\tp'))\).
\end{theorem}
\subsection{Decision Problems}
\label{sec:decision}
This section investigates decision problems on STPs.
We first study the membership and inclusion problems for STPs, which is useful for \(\CTPinf\).
Note that the membership and inclusion problems are
respectively NP-complete and undecidable for both NE-pattern languages and
E-pattern languages~\cite{nowotka_et_al:LIPIcs.CPM.2025.4}. 
\begin{theorem}
  \label{th:decision-problems}
  The following decision problems can be solved in polynomial time.
  \begin{enumerate}
  \item Given a tuple pattern \(\tp\), decide whether \(\tp\) is solvable.
  \item Given an STP \(\tp\) and a tuple \(\seq{s}\in\Alpha^*\times\cdots \times \Alpha^*\),
    decide whether \(\seq{s}\in \lang(\tp)\).
  \item Given an STP \(\tp_2\) and a tuple pattern \(\tp_1\), decide whether \(\lang(\tp_1)\subseteq \lang(\tp_2)\).
    \iffull
  \item Given two STPs \(\tp_1\) and \(\tp_2\), decide whether \(\lang(\tp_1)=\lang(\tp_2)\).
    \fi
  \end{enumerate}
\end{theorem}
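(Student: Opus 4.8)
The plan is to treat the four items in order, with the later ones building on the machinery set up for the earlier ones.

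\emph{Item 1.} The pattern-rewriting relation $\pred$ is terminating, since every step strictly decreases $\measure{\tp}$; a pattern $(p_1,\ldots,p_n)$ is a $\pred$-normal form exactly when every $p_j$ begins with a variable and no $p_i$ is a prefix of $p_j$ for $i\ne j$ (in particular the trivial patterns $(x_1,\ldots,x_{n'})$ with distinct variables are normal forms). Thus $\tp$ is solvable iff \emph{some} maximal $\pred$-reduction of $\tp$ ends in a trivial pattern; to turn this into a decision procedure I would show that the $\pred$-normal form of $\tp$ is unique up to renaming of variables, so that $\tp$ is solvable iff the normal form produced by \emph{any} reduction strategy is trivial. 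Uniqueness follows from Newman's lemma, given termination, once local confluence of $\pred$ is established; that is a routine case analysis on a pair of redexes, the only non-trivial case being two rules acting on the same column $j$ through distinct auxiliary columns $i_1,i_2$, in which case one of $p_{i_1},p_{i_2}$ is a prefix of the other and the two one-step reducts are joined by two further $\pred$-steps. Since each step and the search for an applicable rule (at most $O(n^2)$ prefix tests over patterns of total length $m$) take polynomial time, and the reduction has length $O(\measure{\tp})$, the test is polynomial.

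\emph{Item 2.} Since $\tp=(p_1,\ldots,p_n)$ is solvable, by Item 1 we compute in polynomial time a reduction $\tp=\tp_0\pred\tp_1\pred\cdots\pred\tp_\ell=(x_1,\ldots,x_{n'})$; no step introduces a variable, so $\FV(\tp)\subseteq\{x_1,\ldots,x_{n'}\}$. Following the idea of Remark~\ref{rem:solvable}, I would mirror this reduction on the query tuple: maintain a tuple of strings $\seq{v}$ (initially $\seq{s}$) and, at the step corresponding to \rn{PR-Prefix} on columns $i,j$, replace $v_j$ by $\ldiff{v_i}{v_j}$ and fail unless $v_i$ is a prefix of $v_j$; at a \rn{PR-CPrefix} step with letter $a$, replace $v_j$ by $\ldiff{a}{v_j}$ and fail unless $v_j$ begins with $a$; at a \rn{PR-Epsilon} step, delete $v_j$ and fail unless $v_j=\epsilon$. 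Output ``yes'', with witness $[\seq{v}/(x_1,\ldots,x_{n'})]$, iff no step fails. Correctness has two halves, neither depending on the chosen reduction: a forward induction shows that if $\seq{s}\in\lang(\tp)$ via $\Theta$ then the invariant $\seq{v}=\Theta\tp_k$ is maintained, so no step fails; a backward induction on the reduction shows that if no step fails then the recovered substitution $\Theta$ satisfies $\Theta\tp_k=\seq{v}^{(k)}$ throughout, in particular $\Theta\tp=\seq{s}$. The reduction has $O(\measure{\tp})$ steps, each linear string work, so this is polynomial.

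\emph{Items 3 and 4.} The key observation is that inclusion is ``membership up to a pattern substitution''. Freeze the variables $\FV(\tp_1)$ as fresh letters, so that $\tp_1=(p_1,\ldots,p_n)$ becomes a concrete tuple over the extended alphabet $\Alpha'=\Alpha\uplus\FV(\tp_1)$. I would prove that $\lang(\tp_1)\subseteq\lang(\tp_2)$ iff this tuple lies in $\lang(\tp_2)$ over $\Alpha'$, i.e.\ iff some $\sigma\colon\FV(\tp_2)\to(\Alpha')^*$ satisfies $\sigma\tp_2=\tp_1$; the latter is decided in polynomial time by the Item~2 algorithm, which is insensitive to the ambient alphabet of the query tuple (a \rn{PR-CPrefix} step of $\tp_2$ strips only letters of $\Alpha$, hence never a frozen variable). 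One direction is easy: given such $\sigma$, any $\Theta_1\colon\FV(\tp_1)\to\Alpha^*$ yields $\Theta_1\tp_1=(\Theta_1\circ\sigma)\tp_2\in\lang(\tp_2)$. For the converse, assuming no such $\sigma$ exists I would take a ``sufficiently generic'' instantiation of the frozen variables by words over $\Alpha$ and argue that the guard at which the mirrored reduction of $\tp_2$ gets stuck remains stuck, exhibiting $\seq{s}\in\lang(\tp_1)\setminus\lang(\tp_2)$. Item~4 then reduces to Item~3: $\lang(\tp_1)=\lang(\tp_2)$ iff $\lang(\tp_1)\subseteq\lang(\tp_2)$ and $\lang(\tp_2)\subseteq\lang(\tp_1)$.

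\emph{Main obstacle.} Items 1 and 2 are mostly bookkeeping on top of the reduction calculus. The substantive step is the converse direction of Item~3: realizing a \emph{symbolic} mismatch between $\tp_2$ and the frozen $\tp_1$ by an honest concrete instance over $\Alpha$. The delicate point is that $\Alpha$ may have only two letters, so a stuck guard that compares a frozen variable against a letter of $\Alpha$ (or strips such a letter) need not be preserved by an arbitrary instantiation; the argument must use that $\tp_2$ is solvable — so that only finitely many constant-stripping and prefix steps occur, in a fixed order — in order to choose the instantiating code of words with enough slack to keep the obstruction visible. Pinning down that code and the genericity lemma it must satisfy is where the real care is needed.
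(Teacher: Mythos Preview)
Your proposal is correct and the algorithms you describe coincide with the paper's for all four items. For Items~1 and~4 the arguments are essentially identical (the paper also uses termination plus weak confluence of $\pred$ for Item~1, and reduces Item~4 to Item~3). The organization differs: the paper treats Item~2 as a special case of Item~3 (a concrete tuple is a constant pattern), whereas you go the other way and reduce Item~3 to Item~2 over the enlarged alphabet $\Alpha'=\Alpha\uplus\FV(\tp_1)$. Both are fine, and your freeze-and-test-membership view is a pleasant reformulation of the paper's residual operation $\residual{\tp_1}{\tp_2}{\tp_2'}$.

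For Item~3 the paper's correctness argument is more direct than the global genericity route you sketch. The paper packages what you call the ``main obstacle'' as a one-step lemma (Lemma~\ref{lem:minimality-sim}): whenever $\tp_2\pred\tp_2'$, the residual $\residual{\tp_1}{\tp_2}{\tp_2'}$ is well-defined if $\lang(\tp_1)\subseteq\lang(\tp_2)$, and inclusion is preserved and reflected by taking residuals. The only nontrivial ingredient is the observation that if $\theta q_i$ is a prefix of $\theta q_j$ for \emph{every} substitution $\theta$, then $q_i$ is already a syntactic prefix of $q_j$; this is a short case analysis on the first symbol of each word using only $|\Alpha|\ge 2$, not a global code construction. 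Translated to your framing: the instantiation witnessing failure at step $k$ can be read off locally from the stuck guard (send the offending frozen variable to a letter distinguishing it from its counterpart), since all earlier guards pass symbolically and therefore under every instantiation, and the mirrored state commutes with substitution. So the obstacle you identify is real but lighter than you suggest: solvability of $\tp_2$ is needed only to drive a terminating reduction, not to calibrate a code with ``slack''.
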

\iffull
To solve (1), it suffices to reduce \(\tp\) by \(\pred\) and check whether it ends up with a trivial pattern \((x_1,\ldots,x_k)\).
Since the reduction strategy does not matter (as shown in Lemma~\ref{lem:pred-wconf} in Appendix,
\(\pred\) is weakly confluent up to permutations), and each reduction strictly decreases \(\measure{\tp}\), it can be checked in
time polynomial in \(\measure{\tp}\). For (3), it suffices to observe that if 
\(\tp_2\pred \tp_2'\) and \(\lang(\tp_1)\subseteq \lang(\tp_2)\), then we can
find a corresponding reduction \(\tp_1\preds \tp_1'\) such that  \(\lang(\tp_1')\subseteq \lang(\tp_2')\)
in polynomial time (cf. Lemma~\ref{lem:minimality-sim} in Appendix~\ref{sec:proofs}).
If there exists no such corresponding reduction, we can conclude \(\lang(\tp'_1)\not\subseteq \lang(\tp_2)\);
otherwise \(\tp_2\) is eventually reduced to a trivial pattern  \((x_1,\ldots,x_k)\), at which point we can conclude 
\(\lang(\tp_1)\subseteq \lang(\tp_2)\). (2) and (4) are immediate consequences of (3).
More details are found in Appendix~\ref{sec:proofs}.
\fi

We next prove the decidability of the theory of quantifier-free formulas containing formulas of the form \(\seq{w}\in \lang(\tp)\).
This decidability is important for constructing the oracle \(\oracle\) in Theorem~\ref{th:cstp-lfp},
and plays an important role in the applications to program verification discussed in Section~\ref{sec:chc}.
\begin{definition}
  \label{df:STP-formulas}
  The set of (quantifier-free) STP formulas, ranged over by \(\form\), is given by:
  \begin{align*}
   & \form ::= w_1=w_2 \mid w_1\ne w_2 \mid \seq{w}\in \lang(\tp) \mid \seq{w}\not\in \lang(\tp) \mid
    \form_1\land\form_2 \mid \form_1\lor\form_2\qquad w \in (\Alpha\cup \Vars)^*.
  \end{align*}
  Here, \(\tp\) ranges over the set of STPs.
\end{definition}
For example, \((x, y, z)\in \lang(\tp)\land w=yz\land (xy, y, w)\not\in \lang(\tp)\)
(which is the negation of ``\((x, y, z)\in \lang(\tp)\land w=yz\) implies
\((xy, y, w)\in \lang(\tp)\)''),
where \(\tp = (u,v,uv)\) is an STP formula. The semantics of STP formulas is defined in an obvious manner,
and is therefore deferred to \iffull the Appendix. \else the long version~\cite{STPlong}. \fi
The only point to note is that the variables occurring in \(\tp\) are implicitly bound
within \(\lang(\tp)\); for example,  \((x, y, z)\in \lang(u,v,uv)\) is equivalent to
\((x, y, z)\in \lang(y,x,yx)\) (although we usually avoid variable overloading to prevent confusion).
\iffull
\begin{remark}
  The definition of quantifier-free STP formulas above can be actually simplified to:
  \begin{align*}
   & \form ::= \seq{w}\in \lang(\tp) \mid \seq{w}\not\in \lang(\tp) \mid
    \form_1\land\form_2 \mid \form_1\lor\form_2, \end{align*}
  because \(w_1=w_2\) and \(w_1\ne w_2\) can be expressed by \((w_1,w_2)\in \lang(x,x)\) and \((w_1,w_2)\not\in \lang(x,x)\) respectively. \qed
\end{remark}
\fi
According to the result on word equations~\cite{Plandowski,10.1145/337244.337255},
the satisfiability of STP formulas without STP membership constraints
(i.e., Boolean combinations of formulas of the form \(w_1=w_2\)) is in PSPACE. 
Thus, we have:
\begin{theorem}
  \label{th:satisfiability}
  Given an STP formula \(\form\), one can effectively construct an equi-satisfiable STP formula \(\form'\)
such that \(\form'\) 
contains no subformulas of the form \((w_1,\ldots,w_k)\in \lang(\tp)\) or \((w_1,\ldots,w_k)\not\in \lang(\tp)\), and the size of \(\form'\) is polynomial in that of \(\form\).
  Therefore, the satisfiability of STP formulas is in PSPACE.
\end{theorem}
Encoding of a formula \(\seq{w}\not\in \lang(\tp)\) into word equations relies on
the solvability of \(\tp\). In fact, if we allowed arbitrary tuple patterns, then the satisfiability 
would become undecidable, because
\(\exists x\in\Alpha^*.(x \in \lang(p_1)\land x\not\in \lang(p_2))\) if and only if 
\(\lang(p_1){\not\subseteq} \lang(p_2)\) and the latter is undecidable~\cite{JIANG199553,nowotka_et_al:LIPIcs.CPM.2025.4}.
 \section{Extensions and Variations of Solvable Tuple Patterns}
\label{sec:ext}

This section discusses extensions and variations of STPs. Section~\ref{sec:ext-rev} extends STPs
with the reverse pattern \(x^R\), which is needed to handle the \(\Reva\) example in Section~\ref{sec:intro}.
Section~\ref{sec:set} discusses variations of STPs to reason about sets and 
multisets.

 \subsection{Reverse}
\label{sec:ext-rev}
Let us extend the set of pattern expressions (ranged over by \(\pat\)) to \((\Alpha\cup\Vars\cup\Vars^R)^*\),
where \(\Vars^R = \set{\rpat{x} \mid x\in \Vars}\).
Here, \(\rpat{x}\) represents the set of the reverse of strings represented by \(x\).
We extend \(\rpat{(\cdot)}\) to the operation on patterns, by \(\rpat{a}=a\), \(\rpat{(\rpat{x})}= x\),
and \(\rpat{(\pat_1\pat_2)}=\rpat{\pat_2}\cdot\rpat{\pat_1}\), so that
  the set of extended pattern expressions is closed under substitutions.
For example, \([ay/x]\rpat{x} = \rpat{(ay)}=\rpat{y}a\).

We extend  the tuple pattern inference algorithm with the following rules:
\typicallabel{R-RPrefix}
\infrule[R-RPrefix]{\dat[*][i]\ne\seq{\epsilon}\andalso \dat[*][j]=\rev{\dat[*][i]}\cdot\seq{s} \andalso x_j'\mbox{ fresh}\andalso \seq{x}=(x_1,\ldots,x_m)}
        {(\tp, \Subst{\seq{x}}{\dat}) \red ([\rpat{x_i}\cdot x_j'/x_j]\tp,
          \Subst{(x_1,\ldots,x_{j-1},x_j', x_{j+1},\ldots, x_m)}{\replace{\dat}{j}{\seq{s}}})}
\infrule[R-RSuffix]{\dat[*][i]\ne\seq{\epsilon}\andalso \dat[*][j]=\seq{s}\cdot\rev{\dat[*][i]} \andalso x_j'\mbox{ fresh}\andalso \seq{x}=(x_1,\ldots,x_m)}
        {(\tp, \Subst{\seq{x}}{\dat})\red ([x_j'\cdot \rpat{x_i}/x_j]\tp,
          \Subst{(x_1,\ldots,x_{j-1},x_j', x_{j+1},\ldots, x_m)}{\replace{\dat}{j}{\seq{s}}})}
Here, \(\dat[*][i]^R\) denotes the pointwise extension of the reverse operation.
Accordingly, we extend solvable tuple patterns by adding the following rules:
\iffull
\begin{multicols}{2}
\infrule[PR-RPrefix]{p_j = \rpat{p_i}\cdot p'_j\andalso p_i\ne \epsilon}{(p_1,\ldots,p_n) \pred
  (p_1,\ldots,p_{j-1},p'_j,p_{j+1},\ldots,p_n)}
\infrule[PR-RSuffix]{p_j = p'_j\cdot \rpat{p_i}\andalso p_i\ne \epsilon}{(p_1,\ldots,p_n) \pred
  (p_1,\ldots,p_{j-1},p'_j,p_{j+1},\ldots,p_n)}
\end{multicols}
\else
\begin{multicols}{2}
\infrule{p_j = \rpat{p_i}\cdot p'_j\andalso p_i\ne \epsilon}{(p_1,\ldots,p_n) \pred
  (p_1,\ldots,p_{j-1},p'_j,p_{j+1},\ldots,p_n)}
\infrule{p_j = p'_j\cdot \rpat{p_i}\andalso p_i\ne \epsilon}{(p_1,\ldots,p_n) \pred
  (p_1,\ldots,p_{j-1},p'_j,p_{j+1},\ldots,p_n)}
\end{multicols}
\fi

The theorems in Sections~\ref{sec:tpinf} and \ref{sec:stp} continue to hold for this extension.
For the decidability and complexity of the satisfiability problem (Theorem~\ref{th:satisfiability}), we rely on known results
on the satisfiability of equations over free monoids with involution~\cite{DIEKERT2005105}.

\begin{example}
  \label{ex:rev}
  Let \(\dat\) be 
      \( \left(\begin{array}{rrr}
    ab &\ cd &\ bacd \\
    bc &  da & cbda
  \end{array}\right).
  \)
  The pattern \((x, y, x^Ry)\) is inferred as follows.
  \iffull
  \begin{align*}
    &  \Big((x_1, x_2, x_3), \left(\begin{array}{rrr}
      x_1 & x_2 & x_3\\
    ab &\ cd &\ bacd \\
    bc &  da & cbda
  \end{array}\right)\Big)\\
    &  \red \Big((x_1, x_2, x_1^Rx_3'), 
    \left(\begin{array}{rrr}
    x_1 &\ x_2 &\ x_3' \\
    ab &\ cd &\ cd \\
    bc &  da & da
    \end{array}\right)\Big) & \mbox{ (\rn{R-RPrefix})}\\
    &  \red \Big((x_1, x_2, x_1^Rx_2x_3''), 
    \left(\begin{array}{rrr}
    x_1 &\ x_2 &\ x_3'' \\
    ab &\ cd &\ \epsilon \\
    bc &  da & \epsilon
    \end{array}\right)\Big) & \mbox{ (\rn{R-Prefix})}\\
    &  \red \Big((x_1, x_2, x_1^Rx_2), 
    \left(\begin{array}{rrr}
    x_1 &\ x_2 \\
    ab &\ cd \\
    bc &  da 
    \end{array}\right)\Big) & \mbox{ (\rn{R-Epsilon})}.
  \end{align*}
  \else
  \begin{align*}
    &  ((x_1, x_2, x_3),  \left(\begin{array}{rrr}
      x_1 & x_2 & x_3\\
    ab &\ cd &\ bacd \\
    bc &  da & cbda
  \end{array}\right))
    \red ((x_1, x_2, x_1^Rx_3'), 
    \left(\begin{array}{rrr}
    x_1 &\ x_2 &\ x_3' \\
    ab &\ cd &\ cd \\
    bc &  da & da
    \end{array}\right)) \\& \red ((x_1, x_2, x_1^Rx_2x_3''), 
    \left(\begin{array}{rrr}
    x_1 &\ x_2 &\ x_3'' \\
    ab &\ cd &\ \epsilon \\
    bc &  da & \epsilon
    \end{array}\right)) \red ((x_1, x_2, x_1^Rx_2), 
    \left(\begin{array}{rrr}
    x_1 &\ x_2 \\
    ab &\ cd \\
    bc &  da 
    \end{array}\right)). \end{align*}
  \fi
\end{example}

\begin{remark}
\label{rem:sort}
One may also be tempted to extend patterns with \(x^S\), which represents the sequence obtained by sorting \(x\).
This is problematic, however, because \((\cdot)^S\) does not commute with concatenation: \((y\cdot z)^S\ne y^S\cdot z^S\).
One way to reason about sorted sequences is to introduce a new class of patterns called \emph{sort patterns}, given by:
\(p_S ::= x \mid  u\) and \(u ::= x^S \mid u_1 \mergeop u_2\).
Here, \(u_1\mergeop u_2\) denotes the set of sorted sequences obtained by merging sorted sequences
represented by \(u_1\) and \(u_2\).
Sort patterns can naturally be extended to tuple sort patterns. For example,
\((x,x^S)\) denotes the set of pairs, whose second element is obtained by sorting the first element,
and \((x, y^S, x^S\mergeop y^S)\) denotes the set of triples \((s_1,s_2,s_3)\),
where \(s_3\) is obtained by sorting \(s_1\) and merging it with \(s_2\).
We can then design an inference algorithm for (solvable) tuple sort patterns in a manner similar to ordinary tuple patterns.
The details will be described in a separate paper. \qed
\end{remark}

\subsection{Beyond Sequence Patterns}
\label{sec:set}

  We have so far considered tuple patterns consisting of words, which
  form a \emph{free} monoid.  Almost the same technique can be applied
  to represent and infer relations among other similar algebraic
  structures such as multisets and sets.  Note that multisets form a
  free \emph{commutative} monoid, while sets are obtained by further imposing
  idempotency on the binary operation.

To deal with multisets, we just need to interpret the empty pattern and the pattern concatenation 
as the empty set and the multiset union, respectively, and identify patterns up to permutations
(for example, \(xyz = yzx\)).
The inferences and reduction rules for tuple patterns can be adjusted accordingly, as follows.
\iffull
\infrule[R-Subset]{\dat[*][i]\ne\seq{\epsilon}\andalso
  \forall k.\dat[k][j]\supseteq \dat[k][i]\andalso \seq{s}=\dat[*][j]\setminus \dat[*][i]
  \andalso x_j'\mbox{ fresh}}
        {(\tp, \Subst{(x_1,\ldots,x_m)}{\dat})\red ([x_ix_j'/x_j]\tp,
          \Subst{(x_1,\ldots,x_{j-1},x_j', x_{j+1},\ldots, x_m)}{\replace{\dat}{j}{\seq{s}}})}
\infrule[R-CSubset]
                {\forall k.a\in \dat[k][j]\andalso \seq{s}=\dat[*][j]\setminus \set{a}\andalso a\in\Alpha}
                {(\tp, \Subst{(x_1,\ldots,x_m)}{\dat})\red ([ax_j'/x_j]\tp, \Subst{(x_1,\ldots,x_{j-1},x_j',x_{j+1},\ldots, x_m)}{\replace{\dat}{j}{\seq{s}}})}

\infrule[R-Empty]{\dat[*][j]=\seq{\emptyset}}
        {(\tp, \Subst{(x_1,\ldots,x_m)}{\dat})\red
          ([\epsilon/x_j]\tp, \Subst{(x_1,\ldots,x_{j-1},x_{j+1},\ldots, x_m)}{\remove{\dat}{j}})}

 \infrule[PR-Subset]{p_j = p_ip'_j\andalso p_i\ne \epsilon}{(p_1,\ldots,p_n) \pred
  (p_1,\ldots,p_{j-1},p'_j,p_{j+1},\ldots,p_n)}
                
\infrule[PR-CSubset]{p_j = a p'_j\andalso a\in \Alpha}
  {(p_1,\ldots,p_n) \pred
    (p_1,\ldots,p_{j-1},p'_j,p_{j+1},\ldots,p_n)}
\infrule[PR-Empty]{p_j=\epsilon}  
  {(p_1,\ldots,p_n) \pred
    (p_1,\ldots,p_{j-1},p_{j+1},\ldots,p_n)}

\else
For example, \rn{R-Prefix} and \rn{PR-Prefix} are replaced by:
\infrule[R-Subset]{\dat[*][i]\ne\seq{\epsilon}\andalso
  \forall k.\dat[k][j]\supseteq \dat[k][i]\andalso \seq{s}=\dat[*][j]\setminus \dat[*][i]
  \andalso x_j'\mbox{ fresh}}
        {(\tp, \Subst{(x_1,\ldots,x_m)}{\dat})\red ([x_ix_j'/x_j]\tp,
          \Subst{(x_1,\ldots,x_{j-1},x_j', x_{j+1},\ldots, x_m)}{\replace{\dat}{j}{\seq{s}}})}
\infrule[PR-Subset]{p_j = p_ip'_j\andalso p_i\ne \epsilon}{(p_1,\ldots,p_n) \pred
  (p_1,\ldots,p_{j-1},p'_j,p_{j+1},\ldots,p_n)}
Here, \(\dat[*][j]\setminus \dat[*][i]\) denotes the pointwise extension of the multiset difference.
The full set of rules is given in the longer version.
\fi
Most of the results for STPs and CSTPs carry over to their multiset versions (called SMTPs and CSMTPs),
including the soundness and completeness of the inference algorithm, Theorem~\ref{th:cstp-lfp},
and Theorem~\ref{th:satisfiability} (where word equations are replaced with multiset equations,
and the complexity of satisfiability is changed to NP).
A notable exception is that the minimality property (Theorem~\ref{th:minimality}) fails, but it is not an obstacle
to the application to program verification discussed in the next section.
Similarly, we can deal with sets by interpreting concatenation as disjoint union.

\begin{example}
  Let \(\dat\) be:
  \(\left(\begin{array}{rrr}
    \set{a,a} &\ \set{a,b} &\ \set{a,a,b}\\
    \set{a,b} &\ \set{b,c} & \set{a,b,c}
  \end{array}\right)
  \).
  The pattern \((xz,yz,xyz)\) (which means that the third set is a superset of the first and second sets,
  where \(z\) is the intersection of the first and second sets) is inferred as follows.
    \begin{align*}
      & \Big((x_1, x_2, x_3), \left(\begin{array}{rrr}
        x_1 & x_2 & x_3\\
    \set{a,a} &\ \set{a,b} &\ \set{a,a,b}\\
    \set{a,b} &\ \set{b,c} & \set{a,b,c}
  \end{array}\right)\Big)\red \Big((x_1, x_2, x_1x_3'), \left(\begin{array}{rrr}
        x_1 & x_2 & x_3'\\
    \set{a,a} &\ \set{a,b} &\ \set{b}\\
    \set{a,b} &\ \set{b,c} & \set{c}
      \end{array}\right)\Big) \\
      &  \red \Big((x_1, x_3'x'_2, x_1x_3'), \left(\begin{array}{rrr}
        x_1 & x_2' & x_3'\\
    \set{a,a} &\ \set{a} &\ \set{b}\\
    \set{a,b} &\ \set{b} & \set{c}
    \end{array}\right)\Big) \red \Big((x_2'x'_1, x_3'x'_2, x_2'x'_1x_3'), \left(\begin{array}{rrr}
        x'_1 & x_2' & x_3'\\
    \set{a} &\ \set{a} &\ \set{b}\\
    \set{a} &\ \set{b} & \set{c}
    \end{array}\right)\Big). \end{align*}
\end{example}
\iffull
\fi

  \begin{remark}
    \label{rem:treepat}
    Similar techniques can also be applied to binary trees, which are obtained by dropping associativity of the binary operation and thus form a \emph{free magma}.
Using the tree version
  of STPs, we can, for example, represent and infer tree relations such as
  \((x, N(x,y))\), where \(N\) is the binary tree constructor; this means that
  the first element is the left subtree of the second. The formalization
  and applications of this
  tree version of STPs to automated program verification are left for future work. \qed
\end{remark}

  \begin{remark}
    \label{rem:freeness}
      While the theory of STPs can be applied to other similar
      \emph{free} algebraic structures as discussed above, freeness
       is important for STPs.
      Many of the theorems in Section~\ref{sec:stp} rely on
       properties such as the following:
      if \(\lang(\pat') \subseteq \lang(a\pat_1)\),
      then \(\pat' = a\pat'_1\) with \(\lang(\pat'_1)\subseteq \lang(\pat_1)\)
      for some \(\pat'_1\);
      and if \(\lang(\pat'_1,\pat'_2)\subseteq \lang(\pat_2\pat_3, \pat_2)\),
      then \(\pat'_1=\pat'_2\pat'_3\) with
      \(\lang(\pat'_3)\subseteq \lang(\pat_3)\) for some \(\pat'_3\).\footnote{
        See the longer version~\cite{STPlong} for a table summarizing the properties used in the proof of each main theorem.}
      Such properties do not necessarily hold for arbitrary monoids.
      For example, consider a monoid generated by \(a, b\), and \(c\)
      with identity element \(\epsilon\), subject to the equation
      \(ab=ca\) and \(ac=ba\). In this monoid, \(\lang(ya)\subseteq \lang(ax)\) holds,
      but there is no pattern \(\pat\) such that \(ya=a\pat\).

      One may wonder whether it is possible to develop a general, unifying theory of STPs for a certain class of
      free algebraic structures, and then instantiate the theory to each structure.
      That does not seem straightforward, however. The proof of minimality (Theorem~\ref{th:minimality})
      relies on word-specific properties, such as the fact that \(\pat_1\pat_2=\pat'_1\pat'_2\) implies \(\pat_1\) and \(\pat'_1\) are
      in a prefix relation. Such properties do not hold for free commutative monoids, and that is why minimality fails
      for multisets. Moreover, the forms of characteristic data differ between words and multisets.
      Thus, developing such a unifying theory is outside the scope of this paper and left for future work. \qed
\end{remark}
 
\section{Applications to Automated Verification of List-Manipulating Programs}
\label{sec:chc}

This section discusses how to apply the (C)STP inference to
automated verification of programs manipulating list-like data
structures.  We focus on CHC-based automated program
verification~\cite{DBLP:conf/sas/BjornerMR13}, where program
verification problems are reduced to the satisfiability problem for
constrained Horn clauses (CHCs, a.k.a. constraint logic programs), and
passed to an automated CHC solver.
The CHC satisfiability problem is undecidable in general, but a number of CHC solvers that automatically
solve it have been developed, though they are inevitably incomplete. These solvers typically support
integer and real arithmetic, arrays, and ADTs. An annual competition on CHC solvers called CHC-COMP is held. The CHC-based approach provides a
uniform, language-agnostic approach to automated verification.  It has
recently been studied extensively and applied to various languages,
ranging from low-level languages like C and
LLVM~\cite{DBLP:conf/tacas/GurfinkelKN15,DBLP:conf/esop/TomanSSI020},
to high-level languages like Java~\cite{DBLP:conf/cav/KahsaiRSS16},
Rust~\cite{DBLP:journals/toplas/MatsushitaTK21}, and
OCaml~\cite{Unno09PPDP,DBLP:journals/jar/ChampionCKS20}.
As noted in Section~\ref{sec:intro}, however, CHC solvers over ADTs have so far been unsatisfactory.

Below we first introduce a class of CHCs with word constraints,
and show that it is decidable whether a system of CHCs (i.e., a finite set of CHCs) in that class
has a model describable using CSTPs in Section~\ref{sec:chc-words}.
This result is directly applicable to the \(\Reva{}\) example in Section~\ref{sec:intro}.
Sections~\ref{sec:combination} and \ref{sec:app-multiset} discuss
further enhancement of the STP-based CHC solving.

\subsection{Constrained Horn Clauses over Words}
\label{sec:chc-words}

A \emph{constrained Horn clause} (CHC) is a Horn clause extended with
constraints. It is of the form: \(P_1(\seq{y}_1)\land \cdots \land
P_n(\seq{y}_n)\land \cform \imp P(\seq{x})\), or \(P_1(\seq{y}_1)\land
\cdots \land P_n(\seq{y}_n)\land \cform \imp \false\), where \(n\) may be
\(0\).  We call a clause of the former form a \emph{definite clause},
and one of the latter form a \emph{goal clause}.  Here, \(P\) and
\(P_i\) are predicate variables, and \(\cform\) ranges over the set of
constraint formulas over certain domains; the first-order variables
\(\seq{y}_i\) and \(\seq{x}\) are assumed to be implicitly universally
quantified in each clause.  A clause that can be normalized to the
above form is also called a CHC. For example, \(P(x,1)\imp x<1\) is
also a CHC, as it is equivalent to \(P(x,y)\land (y=1\land x\ge 1)\imp
\false\).  We often just write \(P(t_1,\ldots,t_n)\) for the clause
\(x_1=t_1\land \cdots \land x_n=t_n\imp P(x_1,\ldots,x_n)\), by
eliminating redundant variables and omitting ``\(\true\imp\)''.

The \emph{CHC satisfiability problem} (or, the CHC problem) asks, given a set of CHCs,
whether there exists a model (i.e., an interpretation of predicate variables) that makes all the clauses valid.
For example, consider the following system of CHCs over integer arithmetic.
\begin{align*}
  &  \Plus(0, y, y).\qquad
\Plus(x-1, y, z)\land x\ne 0 \imp \Plus(x,y, z+1). \qquad
\Plus(x, y, z) \imp z\ge y.
\end{align*}
It is satisfiable. Indeed, \(\set{\Plus\mapsto \set{(x,y,z)\mid x\ge 0 \land z=x+y}}\) satisfies all the clauses.
Henceforth, we often use a formula to express a model, like  \(\Plus(x,y,z) \equiv x\ge 0 \land z=x+y\).
A model is not unique in general; for the example above, \(\Plus(x,y,z) \equiv z\ge y\) is also a model.

In this section, instead of CHCs over integer arithmetic,
we consider CHCs on words over \(\Alpha\). Thus, 
each variable ranges over the set \(\Alpha^*\) of words, and
constraints \(\cform\) are drawn from the set of quantifier-free STP formulas
as defined in Definition~\ref{df:STP-formulas}.

\begin{example}
  \label{ex:reva-as-chc}
  The \(\Reva\) example in Section~\ref{sec:intro} is expressed as the following system of CHCs
  \(\CHC_{\Reva}\).
  \begin{align*}
 &\Reva(\epsilon, l_2, l_2). \qquad
 \Reva(l'_1, x l_2, l_3) \land |x|=1 \imp \Reva(x l'_1, l_2, l_3). \\
&  \Reva(l_1,l_2,l_3)\land \Reva(l_3, \epsilon, l_4)\land \Reva(l_2,l_1,l_5)\imp l_4=l_5.
  \end{align*}
  Here, the constraint \(|x|=1\) can be expressed by \(\bigvee_{a\in\Alpha} x=a\), assuming
  that the alphabet \(\Alpha\) is finite.
  
  Let \(\tp = (x,y,x^R,y)\). Then \(\set{\Reva\mapsto \lang(x, y, x^Ry)}\) is a model of the CHCs above.
  \qed
\end{example}

We say that
a system \(\CHC\) of CHCs on words has a \emph{CSTP-model} if
there exist \(\ctp_1,\ldots,\ctp_n\) such that \(\set{P_1\mapsto \lang(\ctp_1),\ldots,P_n\mapsto \lang(\ctp_n)}\)
is a model of \(\CHC\).
The following is the main result of this section.
\begin{theorem}
  \label{th:decidability-chc}
  Given a system \(\CHC\) of CHCs on words, it is decidable whether \(\CHC\) has a CSTP-model.
\end{theorem}
\begin{proof}[Proof Sketch.]
  Let \(\CHC = \CHC_D\cup\CHC_G\), where \(\CHC_D\) and \(\CHC_G\) respectively consist of definite and
  goal clauses. 
  By Theorem~\ref{th:cstp-lfp}, the least CSTP-model \(\mathcal{M}_0 = \set{P_1\mapsto \lang(\ctp_1),
    \ldots,P_n\mapsto \lang(\ctp_n)}\) of \(\CHC_D\) exists and it is computable (note that the algorithm for
  \(\oracle\) can be constructed by using Theorem~\ref{th:satisfiability}).
  Then, \(\CHC\) has a CSTP-model if and only if \(\mathcal{M}_0\) is a model of \(\CHC_G\).
  The latter is decidable by Theorem~\ref{th:satisfiability}.
\end{proof}

An analogous result holds also when restricted to STP-models. We say that
a system \(\CHC\) of CHCs on words has an \emph{STP-model} if
there exist STPs \(\tp_1,\ldots,\tp_n\) such that \(\set{P_1\mapsto \lang(\tp_1),\ldots,P_n\mapsto \lang(\tp_n)}\)
is a model of \(\CHC\).  Here, we add a special STP \(\tp_\emptyset\) that denotes the empty set (i.e., \(\lang(\tp_\emptyset)=\emptyset\)).
\begin{theorem}
  \label{th:decidability-chc-stp}
  Given a system \(\CHC\) of CHCs on words, it is decidable whether \(\CHC\) has an STP-model.
\end{theorem}
To prove the theorem, it suffices to use Theorem~\ref{th:stp-lfp} instead of Theorem~\ref{th:cstp-lfp}.
\iffull
\begin{proof}[Proof Sketch.]
  Let \(\CHC = \CHC_D\cup\CHC_G\), where \(\CHC_D\) and \(\CHC_G\) respectively consist of definite and
  goal clauses. 
  By Theorems~\ref{th:stp-lfp} and \ref{th:satisfiability},
  we can enumerate all the minimal STP-models \(\mathcal{M}_0 = \set{P_1\mapsto \lang(\tp_1),
    \ldots,P_n\mapsto \lang(\tp_n)}\) of \(\CHC_D\). 
  Then, it suffices to check whether one of the minimal STP-models is also a model of \(\CHC_G\).
\end{proof}
\fi

The unsatisfiability of CHCs (where models are not restricted to CSTPs) is also semi-decidable:
\begin{theorem}
  \label{th:refutational-completeness}
  There exists a procedure which, given a system \(\CHC\) of CHCs on words,
  eventually outputs ``Unsat'' whenever \(\CHC\) is not satisfiable.
\end{theorem}
\begin{proof}
  This follows from the fact that \(\CHC\) is unsatisfiable if and only if there is a resolution proof.
  Thanks to the decidability of word equations, we just need to enumerate a candidate resolution proof,
  and check whether it is a valid proof.
\end{proof}
Note, however, that  the satisfiability of CHCs is undecidable in general
if models are not restricted to CSTPs.
To see this, notice that a natural number \(n\) can be encoded as the word \(a^n\).

Below, we demonstrate how we can automatically obtain a CSTP-model for the \(\Reva\)-example.
For the sake of simplicity, we omit constant patterns.
\begin{example}
  \label{ex:solving-reva}
  Recall the CHCs   \(\CHC_{\Reva}\) in Example~\ref{ex:reva-as-chc}.
  We first compute the least model of the definite clauses (i.e., the first two clauses) \(\CHC_D\),
  based on the algorithm in Theorem~\ref{th:cstp-lfp}.
  First, we set \(\dat_0=\emptyset\) and \(\ctp_0=(a,\epsilon,\epsilon)\land (\epsilon,\epsilon,\epsilon)\),
  and check whether \(\ctp_0\) satisfies \(\CHC_D\).
Using the algorithm in Theorem~\ref{th:satisfiability}, we obtain an element
  \(\seq{s}_0=(\epsilon, ab, ab)\) that should belong to the least model.

  Then, we set \(\dat_1=\set{(\epsilon,ab,ab)}\), and get
  \(\ctp_1 = \CTPinf(\set{(\epsilon,ab,ab)}) = (\epsilon,x,x)\).
  This does not satisfy the second clause. Indeed, the formula obtained by 
   negating the second clause:
  \((l'_1,xl_2,l_3)\in \lang(\ctp_1)\land |x|=1\land (xl'_1,l_2,l_3)\not\in \lang(\ctp_1)\)
  has a model \((x,l'_1,l_2,l_3) = (a, \epsilon,b,ab)\), from which we obtain
  a new element \(\seq{s}_1=(a, b, ab)\).
By setting \(\dat_2=\set{(\epsilon,ab,ab), (a,b,ab)}\), we obtain
  \(\ctp_2 = (x,y,xy)\land (x,y,x^Ry)\).
  Since \((l'_1,xl_2,l_3)\in \lang(\ctp_2)\land |x|=1\land (xl'_1,l_2,l_3)\not\in \lang(\ctp_2)\)
  has a model \((x,l'_1,l_2,l_3) = (a, b,\epsilon,ba)\), we obtain the new element:
  \(\seq{s}_2 = (ab, \epsilon, ba)\). 

  By setting \(\dat_3=\set{(\epsilon,a,a),(a,b,ab),(ab,\epsilon,ba)}\),
  we obtain \(\cstp_3 = (x,y,x^Ry)\).
  Now, \(\cstp_3\) is a model of \(\CHC_D\). Since \(\cstp_3\) is also a model of
  the goal clause, we can conclude that   \(\CHC_{\Reva}\) is satisfiable,
  and hence also that the \texttt{reva} program in Section~\ref{sec:intro}
  does not suffer from assertion failures. \qed
\end{example}

\iffull
In Appendix~\ref{sec:pcstp}, we extend CSTP-models to
\emph{piecewise} CSTP-models, and apply them to verification of functional queues.
\else
In the longer version~\cite{STPlong}, we extend CSTP-models to
\emph{piecewise} CSTP-models, and apply them to verification of functional queues.
\fi

\iffull
\begin{remark}
  \label{rem:CSTPvsSTP}
  In Example~\ref{ex:solving-reva} above, if we replace \(\CTPinf\) used inside the algorithm of Theorem~\ref{th:cstp-lfp}
  with \(\TPinf\),
  then the inference may fail.
  Consider the state \(\dat_2=\set{(\epsilon,ab,ab), (a,b,ab)}\) above.
  If \(\TPinf\) is invoked instead of \(\CTPinf\), we may obtain \(\tp_2 = (x,y,xy)\) as a minimal (but not the least) \(\tp\)
  such that \(\dat_2\subseteq \lang(\tp)\).
  Then, as a model of \((l'_1,xl_2,l_3)\in \lang(\tp_2)\land |x|=1\land (xl'_1,l_2,l_3)\not\in \lang(\tp_2)\),
  we may obtain \((x,l'_1,l_2,l_3)=(c, ab, \epsilon, abc)\), which yields \(\seq{s}_2'=(cab, \epsilon,abc)\).
  
  By setting \(\dat_3'=\set{(\epsilon,a,a),(a,b,ab),(cab,\epsilon,abc)}\), we obtain \(\TPinf(\dat_3')=(x,y,zy)\).
  Note that \(\lang(x,y,x^Ry)\subsetneq \lang(x,y,zy)\).
  Thus, we have failed to infer the least model \(\set{\Reva\mapsto (x,y,x^Ry)}\) of \(\CHC_D\),
  and therefore, we have also failed to prove that \(\CHC_{\Reva}\) is satisfiable.

  We can still prove the satisfiability of \(\CHC_{\Reva}\) by using only STPs, based on Theorem~\ref{th:decidability-chc-stp}
  (which relies on Theorem~\ref{th:stp-lfp} instead of Theorem~\ref{th:cstp-lfp}).
  \qed
\end{remark}
\fi

\begin{remark}
  \label{rem:mutable-lists}
    The results above can be directly applied to automated verification of
  programs manipulating \emph{functional} lists. To handle programs 
  manipulating \emph{mutable} linked lists, we can use
  the technique of RustHorn~\cite{DBLP:journals/toplas/MatsushitaTK21} to
  convert {mutable} linked lists into functional lists, and then apply our
  technique. Other mutable data structures can be mapped to
  (functional) trees using the technique of
  RustHorn~\cite{DBLP:journals/toplas/MatsushitaTK21}.
We can then either (i) apply the tree version of STPs mentioned in Remark~\ref{rem:treepat}, or (ii) abstract trees to lists and then apply our results on CHCs over words. We expect, for example, that programs over binary search trees can be verified using the latter approach, by abstracting a binary search tree to the list of its elements. Skip lists~\cite{DBLP:journals/cacm/Pugh90} could also be handled in a similar manner: despite the name “lists,” they would first be mapped to trees reflecting their physical structure, and then abstracted to a tuple of lists \((\ell_1,\ldots,\ell_k)\),
where \(\ell_i\) is the list of elements at level \(i\).
The structural properties of a skip list---namely, 
that each \(\ell_i\) is a sorted sequence and that \(\ell_i\) is a subsequence of
\(\ell_j\) for any \(j<i\)---can be described using the pattern of sorted
sequences mentioned in Remark~\ref{rem:sort}.
At present, however, these applications remain speculative. A concrete formalization and implementation are left for future work.
\qed
\end{remark}

\subsection{Combination with CHC Solving for Integer Arithmetic}
\label{sec:combination}

The CHCs obtained from list-manipulating programs typically contain
integer constraints in addition to word (or list) constraints.
The CSTPs alone are not always sufficient to express appropriate invariants
for such CHCs.
Let us recall the second example (involving \texttt{take} and \texttt{drop})
in Section~\ref{sec:intro}. It can be expressed as the satisfiability
of the following CHCs:
\begin{align*}
&  \Pred{Take}(0, l, \epsilon).
  \qquad n\ne 0 \imp \Pred{Take}(n, \epsilon, \epsilon). \qquad n\ne 0\land \Pred{Take}(n-1, l', r)\land |x|=1 \imp \Pred{Take}(n, xl', xr).\\
    & \Pred{Drop}(0, l, l).
  \qquad n\ne 0\land \Pred{Drop}(n, \epsilon, \epsilon). \qquad n\ne 0\land \Pred{Drop}(n-1, l', r) \imp \Pred{Drop}(n, xl', r).\\
  &\Pred{Take}(n,l,r_1)\land \Pred{Drop}(n,l,r_2)\land \Pred{Append}(r_1,r_2,l')
  \land l\ne l' \imp \false.
\end{align*}
Here, we have omitted the clauses for \(\Pred{Append}\).
As before we have encoded lists into words, but the resulting CHCs also contain
integer constraints.
Ignoring the integer constraints,
the least CSTP model of the definite clauses obtained by \(\CTPinf\) is:
\(\Pred{Take}(n,l_1,l_2) \equiv \exists l_3.l_1=l_2l_3\),
\(\Pred{Drop}(n,l_1,l_2) \equiv \exists l_3.l_1=l_3l_2\), and
\(\Pred{Append}(l_1,l_2,l_3)\equiv l_3=l_1l_2\).
It is not strong enough to satisfy the goal clause (i.e., the last clause above).

To address the issue,
we use a CHC solver over integers to strengthen the model
of definite clauses. We abstract lists to their lengths and obtain the following
abstract version of the CHCs.
\begin{align*}
  &  \Pred{Take'}(0, l, 0).
  \qquad n\ne 0 \imp \Pred{Take'}(n, 0, 0). \qquad n\ne 0\land \Pred{Take'}(n-1, l', r) \imp \Pred{Take'}(n, 1+l', 1+r).\\
    & \Pred{Drop'}(0, l, l).
  \qquad n\ne 0\land \Pred{Drop'}(n, 0, 0). \qquad n\ne 0\land \Pred{Drop'}(n-1, l', r) \imp \Pred{Drop'}(n, 1+l', r).\\
  &\Pred{Take'}(n,l,r_1)\land \Pred{Drop'}(n,l,r_2)\land \Pred{Append'}(r_1,r_2,l')
  \land l\ne l' \imp \false.
\end{align*}
In general, given a system of CHCs over lists \(\CHC = \CHC_D\cup \CHC_G\), 
  (i) the list constructors \([\,]\) and \(\mathbin{::}\)  are replaced
 with their length abstractions \(0\) and \(\lambda (x,y)=1+y\), 
(ii) the list equalities (which implicitly occur in the clause bodies) are replaced
with integer equalities; and
(iii) the list inequalities \(l\ne l'\) in the body of  \emph{goal} clauses \(\CHC_G\)
are replaced
with the integer inequalities, and those in
the bodies of \emph{definite} clauses \(\CHC_D\) are
  replaced with \(\true\).
  
  The above abstraction ensures that a model of the abstract version \(\CHC'_D\) of
  definite clauses yields that of the original definite clauses \(\CHC_D\):
  if \(P'(\seq{x},\seq{l}')\equiv \varphi\) 
  satisfies \(\CHC'_D\), then
  \(P(\seq{x}, \seq{l})\equiv [\len{\seq{l}}/\seq{l}']\varphi\)
  satisfies \(\CHC_D\)
 (where we assume
  \(\seq{x}\) are integer arguments and \(\seq{l}\) are list arguments);
  this follows from the soundness of the catamorphism-based abstraction
  of CHCs~\cite{KatsuraSAS25}.
\iffull
  
  In contrast, there is no such guarantee for the goal clauses: because of
  the replacement of \(=_{\mathit{list}}\) and \(\neq_{\mathit{list}}\) with
  \(=_{\mathit{int}}\) and \(\neq_{\mathit{int}}\), the abstract version of goal
  clauses is neither stronger nor weaker than the original goal clauses.
  (Notice that \(\len{l_1}=\len{l_2}\) is an over-approximation of \(l_1=l_2\),
  while \(\len{l_1}\ne\len{l_2}\) is an under-approximation of \(l_1\ne l_2\).)
  The abstract goal clauses, however, tend to help a CHC solver to find an appropriate
  invariant. In the case above, if \(l\ne l'\) were replaced with \(\true\),
  we would obtain 
  \(\Pred{Take'}(n,l,r_1)\land \Pred{Drop'}(n,l,r_2)\land \Pred{Append'}(r_1,r_2,l')
  \imp\false\), which is so strong that the abstract CHCs would become unsatisfiable.
  \else
  (In contrast, there is no such guarantee for the goal clauses, because of
  the replacement of \(=_{\mathit{list}}\) and \(\neq_{\mathit{list}}\) with
  \(=_{\mathit{int}}\) and \(\neq_{\mathit{int}}\).)
  \fi

  Suppose a CHC solver yields the following model for
   the abstract CHCs above:
  \begin{align*}
&  \mathit{Take'}(n, l, r) \equiv r=l<n\lor l\ge n=r \qquad \mathit{Drop'}(n, l, r) \equiv (l<n\land r=0)\lor l\ge n=l-r.
\end{align*}
Then we have the following model for the definite clauses of the original CHCs.
  \begin{align*}
    &  \mathit{Take}(n, l, r) \equiv \len{r}=\len{l}<n\lor \len{l}\ge n=\len{r} \qquad \mathit{Drop}(n, l, r) \equiv (\len{l}<n\land \len{r}=0)\lor \len{l}\ge n=\len{l}-\len{r}.
  \end{align*}
  By conjoining them with the candidate model obtained by using STP inference,
  we obtain:
  \begin{align*}
    &  \mathit{Take}(n, l, r) \equiv \exists s.l=rs\land (\len{r}=\len{l}<n\lor \len{l}\ge n=\len{r})\\
    &  \mathit{Drop}(n, l, r) \equiv \exists s.l=sr\land ((\len{l}<n\land \len{r}=0)\lor \len{l}\ge n=\len{l}-\len{r}).
  \end{align*}
  This satisfies the goal clause, and thus we have proved that the original system
  of CHCs is satisfiable.

  \iffull
  In general, in parallel to running the least CSTP model, we apply the length
  abstraction and use a CHC solver over integer arithmetic to solve the abstract
  CHCs and obtain a model \(P'(\seq{x},\seq{l}')\equiv \varphi\). (Here, we assume
  that \(P'(\seq{x},\seq{l'})\) is the length abstraction of the original predicate
  \(P(\seq{x},\seq{l})\), where \(\seq{x}\) and \(\seq{l}\) are integer and list
  arguments respectively.)
  Then when checking whether the least CSTP model satisfies the goal clause,
  we strengthen the candidate model \(P\) by
  conjoining \([\len{\seq{l}}/\seq{l}']\varphi\).
\fi
  \iffull
    \begin{remark}
      Instead of just using the information obtained from the length abstraction,
      it is possible to propagate information gathered from
      STP inference and length abstraction to each other, as in
      Nelson-Oppen procedure~\cite{10.1145/322186.322198} for theory combinations.
      For example, suppose that \(P(l_1,l_2,l_3)\) is found to imply
      \(l_3=l_1l_2\) by STP inference. Then, each occurrence of \(P(l_1,l_2,l_3)\)
      in the body of a clause can be replaced by \(P(l_1,l_2,l_3)\land l_3=l_1l_2\),
      before applying the length abstraction. Then, a clause of the form
      \(P(l_1,l_2,l_3)\land \cdots \imp P(l_4,l_5,l_6)\) is abstracted to
      \(P'(l_1,l_2,l_3)\land l_3=l_1+l_2\land \cdots \imp P'(l_4,l_5,l_6)\). \qed
    \end{remark}
    \begin{remark}
      In theory, the
      above method for utilizing CHCs over integer arithmetic is not restricted
      to the length abstraction: any sound abstraction of lists to integers~\cite{KatsuraSAS25}
      would
      work. Our restriction to the length abstraction comes from the following
      practical motivation. As in the example above, the final candidate model
      involves both sequence constraints and constraints on integers obtained by
      abstracting lists. The state-of-the-art SMT solvers like CVC5 and Z3 work
      fairly well for a combination of sequences
      and length functions, but do not
      work well for a combination of sequences and arbitrary recursive functions
      on sequences, to our knowledge. \qed
    \end{remark}
    \fi
    \subsection{Beyond Sequence and Integer Constraints}
    \label{sec:app-multiset}
    \label{sec:chc-multisets}
    The (conjunctive) solvable set/multiset tuple patterns discussed in Section~\ref{sec:set} are also applicable to CHC solving,
    as stated in the theorem below. Here, CHCs over multisets refer to those whose constraint
    formulas are drawn from those on multisets~\cite{multiset}.
    The theorem follows from the set/multiset versions of Theorems~\ref{th:cstp-lfp}, \ref{th:stp-lfp}, and \ref{th:satisfiability}.
    \iffull\else See the longer version~\cite{STPlong} for details. \fi
\begin{theorem}
  \label{th:decidability-chc-set}
  Given a system \(\CHC\) of CHCs over multisets, it is decidable whether \(\CHC\) has a CSMTP-model.
  Similarly, given a system \(\CHC\) of CHCs over sets, it is decidable whether \(\CHC\) has a CSSTP-model.
\end{theorem}

\begin{theorem}
  \label{th:decidability-chc-set-stp}
  Given a system \(\CHC\) of CHCs over multisets, it is decidable whether \(\CHC\) has an SMTP-model.
  Similarly, given a system \(\CHC\) of CHCs over sets, it is decidable whether \(\CHC\) has an SSTP-model.
\end{theorem}

\begin{example}
  \label{ex:isort}
  Let us consider the following insertion sort program:
\begin{verbatim}
  let rec insert x l =
    match l with [] -> [x] | y::l' -> if x<y then x::l else y::(insert x l')
  let rec isort l = match l with [] -> [] | x::l' -> insert x (isort l')
  let main l = assert(eq_as_multiset (isort l) l).
\end{verbatim}
The function \texttt{main} takes a list \(l\) as input, sorts it, and asserts that the result is equivalent to \(l\) as multisets;
here, for the sake of simplicity, we assume that \verb|eq_as_multiset| is a primitive.
By abstracting lists to multisets and overapproximating the formula \(x<y\) by \(\true\),
we obtain the following CHCs over multisets, whose satisfiability
implies the lack of assertion failures.
\begin{align*}
  & |x|=1\imp \Pred{Insert}(x, \epsilon, x).     \qquad l=yl' \land |x|=|y|=1\imp \Pred{Insert}(x,l,xl).  \\  &
  l=yl' \land |x|=|y|=1\land \Pred{Insert}(x,l',r)\imp \Pred{Insert}(x,l,yr).\\
  & \Pred{Sort}(\epsilon, \epsilon). \qquad |x|=1\land l=xl'\land \Pred{Sort}(l', r')\land \Pred{Insert}(x,r',r)\imp \Pred{Sort}(l,r). \qquad \Pred{Sort}(l,r)
    \imp l=r. \end{align*}
By running the multiset version of the procedure of Theorem~\ref{th:cstp-lfp},
we obtain the following least CSMTP-model for definite clauses:
\[ \Pred{Insert}(x,l,r) \equiv (x,l,r)\in \lang(x,l,xl)\qquad
\Pred{Sort}(l,r) \equiv (l,r)\in \lang(l,l).\]
Since it satisfies the goal clause \(  \Pred{Sort}(l,r)    \imp l=r\), we can conclude that the system of CHCs above is
satisfiable, and hence also that the program does not suffer from assertion failures.
\iffull
In the same manner, we can also verify that the merge and quick sort programs preserve the multiset of elements,
 fully automatically (without any invariant annotations for auxiliary functions such as merge and partition). \fi
\qed
\end{example}

 \section{Implementation and Experiments}
\label{sec:exp}

We have implemented a tool called \tupinf{} for (C)STP  inference
based on the algorithm in Section~\ref{sec:stp},
and a new CHC solver called \chocolat{} based on the method
in Section~\ref{sec:chc}, using \tupinf{} as a backend.
Section~\ref{sec:imp} gives an overview of those tools,
and Section~\ref{sec:expresult} reports experimental results.
\subsection{Implementation}
\label{sec:imp}
\subsubsection{\tupinf{}: a Solvable Tuple Pattern Inference Tool}

\tupinf{} has been implemented in OCaml. \tupinf{} takes learning data in the csv format, and
outputs STPs or CSTPs with the extensions described in Sections~\ref{sec:ext-rev} (reverse),
and \ref{sec:set} (set and multiset patterns).
The alphabet \(\Alpha\) used in the learning data consists of digits and uppercase and lowercase English letters.
\tupinf{} provides options to (i) enable or disable constant patterns (i.e., \rn{R-CPrefix} and
\rn{R-CSuffix}), (ii) choose between STPs and CSTPs, and (iii) select between sequence, multiset, and set patterns.
The option for constant patterns  was disabled in the experiments reported below. 
\iffull
\else \fi

\subsubsection{\chocolat{}: a CHC Solver for List-like Data Structures}
\chocolat{} (\textbf{C}onstrained \textbf{Ho}rn \textbf{C}lauses
\textbf{o}ver \textbf{L}ists)
is also implemented in OCaml and uses \tupinf{} for CSTP inference, Z3~\cite{DBLP:conf/tacas/MouraB08} and CVC5~\cite{DBLP:conf/tacas/BarbosaBBKLMMMN22} for SMT solving,
and \hoice{}~\cite{DBLP:journals/jar/ChampionCKS20} for CHC solving over integer arithmetic, as backend solvers.
Currently, \chocolat{} supports only list-like data structures as ADTs, i.e., ADTs that have
two constructors: one with no arguments and the other with two arguments whose types are some element type and the ADT itself.
\chocolat{} also supports natural numbers represented as lists with unit elements.
\chocolat{} has five internal modes: list-stp-mode, set-mode, multiset-mode, list-len-mode, and list-cstp-mode.
Given a system of CHCs as input, \chocolat{} tries five modes (in the order listed above)
 one by one until it finds that the CHCs are satisfiable.
 In the list-stp-mode and list-cstp-mode, the tool
 infers tuple patterns for list-like data structures as described in Section~\ref{sec:chc-words},
 but the inferred patterns are restricted to STPs in the former for the sake of efficiency. (Experimental results suggest that
 CSTP inference itself can be performed efficiently in practice, but it sometimes outputs a large CSTP,
 which can impose significant overhead on the underlying SMT solver when checking its validity.)
In the list-len-mode, it additionally infers invariants on the lengths of lists as described in Section~\ref{sec:combination} by using the backend CHC solver for integer arithmetic.

The procedure explained so far can only prove the satisfiability of the CHCs.
\iffull
Thus, \chocolat{} also runs in parallel
a procedure based on symbolic execution to prove the unsatisfiability.
Given a CHC whose goal clause is \(\bigwedge_i P_i(\seq{y}_i) \land C \imp \false\),
\chocolat{} searches for an assignment to variables \(\seq{y}_i\) that makes the body (\(\bigwedge_i P_i(\seq{y}_i) \land C\)) of the goal clause valid
by recursively unfolding the predicates in the goal in a breadth-first manner.
The details of this procedure are omitted,
since it is rather straightforward and contains few novel ideas.
It would be possible to utilize the result of tuple pattern inference also to
speed-up the refutation process, but that is left for future work.
\else
Thus, \chocolat{} also runs in parallel
a resolution-based procedure to prove the unsatisfiability.
The details are omitted,
since it contains few novel ideas.
It would be possible to utilize the result of tuple pattern inference also to
speed-up the refutation process, which is left for future work.
\fi
\subsection{Experiments}
\label{sec:expresult}
We have conducted experiments on a benchmark set selected from CHC-COMP 2025 ADT-LIN category. The benchmark set consists of 445 instances of the CHC satisfiability problem,
which are all the instances of the ADT-LIN category where the used ADTs are
only list-like data structures.
The experiments were conducted on a machine with AMD Ryzen 9 5900X CPU and 32GB RAM.
The cpu and wallclock time limits were 3 minutes, and the memory limit was 8GB. The number of CPU cores was limited to 4.

We compared \chocolat{} with the state-of-the-art CHC solvers that support ADTs:
\ringen{}~\cite{DBLP:conf/pldi/KostyukovMF21},
\spacer{}/\racer{}~\cite{DBLP:journals/fmsd/KomuravelliGC16,DBLP:journals/pacmpl/KSG22},
\eld{}~\cite{Eldarica},
\hoice{}~\cite{DBLP:journals/jar/ChampionCKS20,DBLP:conf/aplas/Champion0S18}, and
\catalia~\cite{KatsuraSAS25}.
The results are shown in Table~\ref{tab:exp} and Figure~\ref{fig:exp}.
To confirm that our method serves as a complement to previous methods,
we have also prepared a combination of \chocolat{} and \catalia{}, shown as
\chocolat+\catalia{} in the table and figure; it runs \chocolat{} and \catalia{}
in parallel.\footnote{A combination of \chocolat{} (without seq-cstp-mode) and \catalia{}, called ``ChocoCatalia''
  won the ADT-LIN category of CHC-COMP 2025;
the configuration of \chocolat{}+\catalia{} reported in this submission slightly differs from
 ``ChocoCatalia''.}

\begin{table}[t]
  \centering
  \caption{Numbers of solved instances by \chocolat{} and the existing CHC solvers. 
  The numbers in parentheses are the numbers of uniquely solved instances.}
  \label{tab:exp}
  \begin{tabular}{lccc}
    \toprule
    \textbf{Solver} & \textbf{Solved (SAT)} & \textbf{Solved (UNSAT)} & \textbf{Solved (all)} \\
    \chocolat{}          & 180 ( 97) &  80 (  0) & 260 ( 97) \\
    \ringen{}            &  64 ( 23) &  37 (  4) & 101 ( 27) \\
    \spacer{}            &  20 (  2) &  90 (  0) & 110 (  2) \\
    \eld{}               &  20 (  0) &  84 (  0) & 104 (  0) \\
    \hoice{}             &  25 (  6) &  45 (  0) &  70 (  6) \\
    \catalia{}           &  87 ( 11) &  87 (  0) & 174 ( 11) \\
    \midrule
    \chocolat+\catalia{} & 194       &  84       & 278       \\
    \bottomrule
  \end{tabular}
\end{table}

\begin{figure}[t]
  \centering
  \begin{subfigure}[b]{0.48\textwidth}
    \centering
    \includegraphics[width=\textwidth]{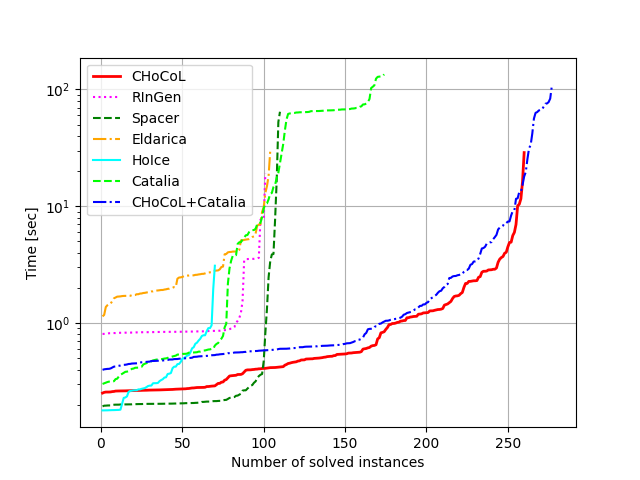}
    \caption{All instances}
  \end{subfigure}
  \hfill
  \begin{subfigure}[b]{0.48\textwidth}
    \centering
    \includegraphics[width=\textwidth]{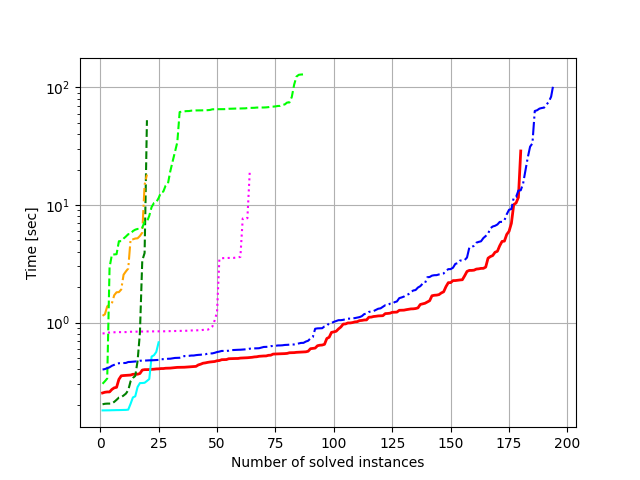}
    \caption{SAT instances}
  \end{subfigure}
  \caption{Numbers of solved instances within the time.}
  \label{fig:exp}
\end{figure}

Table~\ref{tab:exp} shows the numbers of solved instances by \chocolat{} and the other CHC solvers.
For each solver, the column \textbf{Solved (SAT)} (resp. \textbf{Solved (UNSAT)}) shows the number of satisfiable (resp. unsatisfiable) instances solved,
and the column \textbf{Solved (all)} shows the total number of instances solved.
The numbers in parentheses indicate instances uniquely solved by the solver,
i.e., those that no other solver (except \chocolat{}+\catalia{}) could solve within the time limit.

The experimental results demonstrate the effectiveness of \chocolat{}.
\chocolat{} achieved the highest total number of
solved instances (260) among all compared solvers, showing its overall superiority
in handling CHCs with list-like data structures.
A key strength of \chocolat{} is its performance on satisfiable instances,
where it solved 180 instances compared to 87 by \catalia{} and 64 by \ringen{}.
More importantly, \chocolat{} uniquely solved 97 satisfiable instances,\footnote{
A increase of the time limit would not significantly change the situation,
because the benchmark set contains a number of instances that require
reasoning about list equalities, as in the examples in Section~\ref{sec:intro},
which cannot be solved by \catalia{} and \ringen{}, which abstract
lists to integers or elements of a finite domain.}
demonstrating the effectiveness of the
STP inference approach for CHC solving.

The result of \chocolat{}+\catalia{}, though, shows that \chocolat{} also
has a limitation, and that it is better to use it as a complement to other solvers.
Also, in contrast with the strong performance of \chocolat{} for SAT instances,
\chocolat{} showed slightly lower performance than other solvers
(except \ringen{}) for UNSAT instances.
This is unsurprising because our current refutation procedure is
quite naive.
We plan to apply STP inference
also to improve the refutation procedure.

Figure~\ref{fig:exp} gives cactus plots, showing how many instances were solved
within a given time.
The left subfigure (a) shows the numbers of solved instances for all instances,
and the right subfigure (b) shows the numbers of solved instances for SAT instances.
It is remarkable that, for the successful instances,  \chocolat{} solved many of them
almost instantly, within one second.

\begin{table}[t]
  \centering
  \caption{Numbers of solved instances by each mode of \chocolat{}.}
  \label{tab:mode-num}
  \begin{tabular}{lcc}
    \toprule
    \textbf{Mode} & \textbf{Solved SAT instances} & \textbf{Uniquely solved SAT instances} \\
    \midrule
    list-stp-mode & 89 & 50 \\
    set/multiset-mode & 9 & 5 \\
    list-len-mode & 78 & 42 \\
    refutation & 4 & 0 \\
    \midrule
    Total & 180 & 97 \\
    \bottomrule
  \end{tabular}
\end{table}

Table~\ref{tab:mode-num} shows the numbers of solved SAT instances by each mode of \chocolat{}.
The row ``refutation'' shows the number of instances found to be satisfiable only by finite unfolding.
Remarkably, the list-stp-mode alone solved more SAT instances than the other solvers.
As shown in the row ``list-len-mode'', the combination of STPs and CHC solving over integer arithmetic
was also effective. There were no uniquely solved instances for list-cstp-mode; this suggests that
although CSTPs are required in theory (cf. Theorem~\ref{th:decidability-chc-set}), STPs often suffice in practice.

We have also manually inspected some of the SAT instances that could not be solved by \chocolat{}.
They typically involve functions such as sort, filter, map, and fold, whose input/output relations cannot be
expressed by STPs. We plan to extend our framework to support those list functions, along the line suggested in
Remark~\ref{rem:sort}.
 \section{Related Work}
\label{sec:rel}

\iffull
\subsection{Learning Languages from Positive Samples}
\else
\subsubsection*{Learning Languages from Positive Samples}
\fi
The problem of learning a language from only positive samples~\cite{GOLD1967447} has
been actively studied, and various language classes have been found to be learnable.
As already mentioned,
most closely related to our notion of tuple patterns 
is Angluin's pattern languages~\cite{ANGLUIN1980117,ANGLUIN198046}
and its variations~\cite{10.1007/3-540-11980-9_19,SHINOHARA1994175}.
Indeed, ``erasing'' pattern languages~\cite{10.1007/3-540-11980-9_19,SHINOHARA1994175}
(in which empty strings may be substituted for variables, as opposed to Angluin's original pattern languages)
can be considered singleton tuple patterns, and conversely, if we introduce a special symbol \$ (which must not
occur in strings substituted for variables; such a restriction may be expressed by patterns with regular constraints~\cite{10.1007/3-540-11980-9_19}), then a tuple pattern \((\pat_1,\ldots,\pat_n)\) can be expressed as
the pattern \(\pat_1\$\cdots\$\pat_n\).
No polynomial time learning algorithm is known for the full class of pattern languages (and in fact,
the existence of such an algorithm is highly unlikely given the result of ~\cite{ANGLUIN198046}).
Various subclasses of pattern languages are known to be polynomial time learnable from positive data~\cite{ANGLUIN198046,SHINOHARA1982,REIDENBACH200691,10.1007/3-540-11980-9_19,Mitchell98}.
Among others,
Angluin~\cite{ANGLUIN198046} and Shinohara~\cite{SHINOHARA1994175} gave
polynomial time algorithms respectively
for patterns consisting of a single variable and for ``regular'' (erasing) patterns where each variable may occur only once. Mitchell~\cite{Mitchell98} extended the latter
result to quasi-regular patterns, where variables must occur the same number of times.
Those subclasses are not large enough to express tuple patterns like
\((xy, xyx)\) (which contains two variables, each of which occurs a different number of times).

Other classes of languages learnable only from positive samples (i.e., ``identifiable in the limit''~\cite{GOLD1967447})
include: reversible languages~\cite{10.1145/322326.322334},
strictly locally testable languages~\cite{LocallyTestableLanguage},
``very simple grammars''~\cite{YOKOMORI2003179,10.1007/11872436_5},
substitutable context-free languages and their variations~\cite{JMLR:v8:clark07a,10.1007/978-3-540-88009-7_21,YOSHINAKA20111821},
and length-bounded elementary formal systems~\cite{SHINOHARA1994175}.
Except the length-bounded elementary formal systems~\cite{SHINOHARA1994175}, those classes are not expressive enough
to subsume solvable tuple patterns. In fact, most of them are subclasses of regular or context-free languages.
For length-bounded elementary formal systems~\cite{SHINOHARA1994175}, there exists no efficient learning algorithm.

Gulwani et al.~\cite{DBLP:journals/cacm/GulwaniHS12,DBLP:conf/popl/Gulwani11} invented
a method for inferring string functions from translation examples
and applied it to spreadsheets. Although their method is
heuristic and does not satisfy good learning-theoretic properties like ours
to our knowledge,
it would be interesting to integrate their technique with ours to infer invariants
for functional relations.

\iffull
\subsection{Data-Driven Approaches to Program Verification}
\else
\subsubsection*{Data-Driven Approaches to Program Verification}
\fi
Various data-driven approaches to program verification have recently been proposed,
especially for automated discovery of program invariants.
Some of them require only positive samples~\cite{10.1007/978-3-642-37036-6_31,DBLP:conf/aplas/IkedaSK23,Zhu16PLDI}, some require negative samples as well~\cite{zhu_2015,DBLP:conf/pldi/ZhuMJ18,DBLP:conf/iclr/RyanWYGJ20},
\iffull
and some require
even implication examples of the form \(d_1\land \cdots \land d_k\imp d\),
which means ``if \(d_1,\ldots,d_k\) are positive samples, then \(d\) is also positive''~\cite{garg_2014,DBLP:journals/jar/ChampionCKS20,DBLP:journals/pacmpl/EzudheenND0M18,KobayashiNeuGuS}.
\else
even implication examples~\cite{garg_2014,DBLP:journals/jar/ChampionCKS20,DBLP:journals/pacmpl/EzudheenND0M18,KobayashiNeuGuS}.
\fi
The used techniques vary from algebraic ones~\cite{10.1007/978-3-642-37036-6_31,DBLP:conf/aplas/IkedaSK23}, SVM~\cite{zhu_2015,DBLP:conf/pldi/ZhuMJ18}, and neural networks~\cite{DBLP:conf/iclr/RyanWYGJ20,KobayashiNeuGuS}.
Most of those techniques have been used mainly for the discovery of invariants on integers, and it is unclear how they can
be adopted to discover invariants on list-like data structures.

Among them, the techniques of Sharma et al.~\cite{10.1007/978-3-642-37036-6_31}
and Ikeda et al.~\cite{DBLP:conf/aplas/IkedaSK23} are technically closest to
ours (although they are for discovering numerical equality constraints), in that
those methods can also quickly infer equality constraints only from a small number of examples.
In fact, our STP inference algorithm, which repeatedly simplifies learning data in matrix form,
has been inspired by those techniques. Ikeda et al.~\cite{DBLP:conf/aplas/IkedaSK23} applied
that technique to CHC solving. The application to CHC solving is less trivial in
our context, however. In the case of Ikeda et al.~\cite{DBLP:conf/aplas/IkedaSK23}, which finds
equality constraints on integers, it is sufficient to insert those invariants as a part of preprocessing,
and then invoke an off-the-shelf CHC solver. For example, if we can find an invariant \(P(x,y)\imp
x+2y=3\), then we can provide that information to a CHC solver, just by replacing every occurrence
of \(P(z,w)\) in CHC with \(P(z,w)\land z+2w=3\). In contrast, since there are no CHC solvers
that can directly reason about sequence equality constraints, we had to develop a more sophisticated
method for applying the result of CSTP inference to CHC solving, as discussed in Section~\ref{sec:chc}.
\iffull Actually, Theorem~\ref{th:decidability-chc} can also be adopted for the method of
Ikeda et al.~\cite{DBLP:conf/aplas/IkedaSK23} as follows: given a system \(\CHC\) of CHCs over linear integer arithmetic,
it is decidable whether \(\CHC\) has a model that can be described solely by conjunctions of linear integer equalities of the form
\(\bigwedge_i\vec{c}_i\vec{x}_i+d_i=0\).
\fi

Zhu et al.~\cite{Zhu16PLDI} proposed a data-driven approach to learning invariants of (immutable) data structures.
Their technique learns relations between two nodes of a data structure (e.g., ``for every cons-cell \(N_1\) in a list,
the value of every cell reachable from \(N_1\) is no less than that of \(N_1\)''), and  can therefore reason about
properties such as the sortedness of a list.
\iffull
To our knowledge, however, their technique cannot be applied to
reasoning about the \emph{logical equivalence} of lists;
in fact, their abstraction for lists cannot distinguish [1;1;1;2;2] from [1;1;2;2;2].
Therefore, their technique cannot handle either of our motivating examples in Section~\ref{sec:intro}.
\else
To our knowledge, however, their technique cannot be applied to
reasoning about the \emph{logical equivalence} of lists, and hence cannot handle either of our motivating examples in Section~\ref{sec:intro}.  
\fi
Related techniques have been developed for verifying programs that manipulate
\emph{mutable} data structures~\cite{10.1145/3022187,10.1007/978-3-319-08867-9_3}, but these techniques also
cannot reason about the logical equality of lists.

\iffull
\subsection{CHC Solving for Algebraic Data Types}
\else
\subsubsection*{CHC Solving for Algebraic Data Types.}
\fi
Solving CHCs over algebraic data types (or, automated verification of
programs manipulating ADTs) often require finding and reasoning about inductive properties.
To this end, Unno et al.~\cite{DBLP:conf/cav/UnnoTS17} introduced an automated method
for applying induction in CHC solving.
De Angelis et al.~\cite{DBLP:journals/tplp/AngelisFPP18a} later proposed a method to achieve similar effects without explicit induction, by using unfold/fold transformations.
Mordvinov and Fedyukovich~\cite{DBLP:conf/lpar/MordvinovF17} also proposed a variation of
unfold/fold transformations.
Losekoot et al.~\cite{DBLP:conf/fscd/LosekootGJ23,DBLP:conf/sas/LosekootGJ24}
applied automata-based techniques to reason about relational properties of ADTs, which also have
a similar effect to unfold/fold transformation-based techniques.
These methods tend to be fragile,  requiring a lot of heuristics to automatically decide
how and when to apply induction, unfold/fold transformations, etc.

Several methods for CHC solving have also been proposed, which abstract ADTs
to elements of a finite domain or integers.
Kostyukov~\cite{DBLP:conf/pldi/KostyukovMF21} proposed a method for abstracting
ADTs using automata, while the Eldarica CHC solver~\cite{Eldarica} abstracts ADTs by their size.
Govind V. K. et al.~\cite{DBLP:journals/pacmpl/KSG22} proposed a method that uses
recursively defined functions (RDTs) over ADTs for abstraction, but those RDTs must be explicitly specified.
Katsura et al.~\cite{KatsuraSAS25} proposed a method for
automatically finding a catamorphism-based abstraction for ADTs.
While these methods are effective for certain CHCs, they are not effective
for proving the equality of ADTs; although the abstraction by using an abstraction map \(\alpha\)
may be used for proving \(\alpha(l_1)=\alpha(l_2)\), it does not imply \(l_1=l_2\) in general.
Thus, those methods cannot be used to prove neither of the examples in Section~\ref{sec:intro}.
As confirmed in our experiments, our method using CSTPs is complementary 
to those methods; in fact, the combination of our method with \catalia{} achieved the best
performance in the experiments in Section~\ref{sec:exp}, and also in the competition CHC-COMP 2025.

\iffull
\subsection{Combination of Static Analyses}
\else
\subsubsection*{Combination of Static Analyses.}
\fi
In Section~\ref{sec:combination}, we have shown how to strengthen
our STP inference-based CHC solving procedure by combining it with
an off-the-shelf CHC solver for integer arithmetic.
There have been studies on how to combine static analyses~\cite{10.1145/200994.200998,DBLP:conf/pldi/GulwaniT06,10.1007/978-3-642-19805-2_31}
in the context of abstract interpretation.
The particular way of combining CHC solving methods, however,
appears to be novel. As discussed in Section~\ref{sec:combination},
the CHCs obtained by the length abstraction is neither sound nor complete
with respect to the original CHCs (the satisfiability of the abstract CHCs
neither implies nor is implied by the satisfiability of the original ones),
but  we can still obtain useful information from a solution of the abstract CHCs,
for solving the original CHCs. Our combination method 
treats a CHC solver for integer arithmetic as a black box, allowing any CHC
solvers to be used, including those based on PDR~\cite{DBLP:journals/fmsd/KomuravelliGC16}, predicate abstraction~\cite{Eldarica}, and data-driven methods~\cite{DBLP:conf/aplas/Champion0S18,DBLP:conf/pldi/ZhuMJ18}.
We have discussed only a specific combination in Section~\ref{sec:combination};
it would be an interesting
direction for future work
to further generalize and formalize the idea of combining CHC solvers over different domains.
 \section{Conclusions}
\label{sec:conc}

We have introduced the new notions of solvable tuple patterns (STPs) and conjunctive STPs (CSTPs),
and developed their theories. \iffull In particular we have shown that both STPs and CSTPs are learnable from only positive samples,
and that the satisfiability of quantifier-free formulas is decidable. \fi
We have applied those theories to
prove that it is decidable whether a system of CHCs over words has a CSTP-describable model.
 We have implemented a new CHC solver called \chocolat{} based on 
 the proposed approach, and confirmed the effectiveness of the approach
 through experiments.

 Future work includes: (i) generalizing STPs to trees and other
 algebraic structures (cf. Remark~\ref{rem:treepat}) to handle
 tree-like data structures and support reasoning about other list
 operations (cf. Remark~\ref{rem:sort}); (ii) extending STPs to infer
 Boolean combinations of STPs, such as \((x, y, xz)\lor (x,y,yz)\),
 meaning that either the first or second element is a prefix of
 the third; (iii) applying other classes of languages learnable from
 positive samples to CHC solving; and (iv) extending and generalizing
 the method for combining CHC solvers over different domains.

\subsection*{Acknowledgment}
  We thank the anonymous reviewers for their useful feedback.
This work was supported by
JSPS KAKENHI Grant Number JP20H05703 and JP26H02486.

\subsection*{Data Availability Statement}
The artifact of this paper is available on Zenodo~\cite{Artifact}. It includes the instructions, software, benchmark set, and scripts to reproduce the experiments in the paper.


\newpage
\appendix
\section*{Appendix}
\section{Proofs}
\label{sec:proofs}
\subsection{Proofs and Additional Definitions for Sections~\ref{sec:tpinf} and \ref{sec:stp}}
This section provides proofs omitted in Sections~\ref{sec:tpinf} and \ref{sec:stp}.

We first show basic properties in Sections~\ref{sec:properties-red}--\ref{sec:red-vs-pred}.
The dependency of each main property on basic properties is summarized in Table~\ref{tab:lemmas},
so that it is easy to discuss properties of the extensions and variations of STPs considered in Section~\ref{sec:ext}.

\begin{table}[bh]
  \caption{Required Basic Properties for Each Main Property}
  \label{tab:lemmas}
  \begin{center}
  \begin{tabular}{|l|l|}
  \hline
  Properties & Required Properties\\
  \hline
  \hline
  Soundness (Theorem~\ref{th:soundness})& Lemma~\ref{lem:red-invariant}\\
  \hline
  Soundness 2 (Theorem~\ref{th:soundness2}) &
  Lemmas~\ref{lem:rewriting-decomp}, \ref{lem:pred-closed-under-subst}, and \ref{lem:red-implies-pred}\\
  \hline
  Completeness (Theorems~\ref{th:completeness} an \ref{th:completeness-ctpinf}) & Lemmas~\ref{lem:rewriting-decomp} and \ref{lem:completeness} and Theorem~\ref{th:soundness}\\
  \hline
    Minimality (Theorem~\ref{th:minimality}) & Lemmas~\ref{lem:rewriting-decomp}, \ref{lem:pred-closed-under-subst}, \ref{lem:minimality-sim}, \ref{lem:pred-wconf}, \ref{lem:pred-decreases-measure}, Lemma~\ref{lem:pred-tpequiv},  and \ref{lem:red-implies-pred}\\
  \hline
  Characteristic Data (Theorem~\ref{th:data-size}) &
  \begin{minipage}{8cm}
    \ \\[-.5ex]
  Lemmas~\ref{lem:canonical} and \ref{lem:canonicalsubst},\\ in addition to all the lemmas above
    \ \\[-2ex]
  \end{minipage}\\
  \hline
  \begin{minipage}{5cm}
    \ \\[-.5ex]
  Learnability of STPs\\ (Theorems~\ref{th:learnability} and \ref{th:stp-lfp})
    \ \\[-2ex]
    \end{minipage}
  &
  \begin{minipage}{8cm}
    \ \\[-.5ex]
  Soundness and the existence of characteristic data\\
  (Theorems~\ref{th:soundness} and ~\ref{th:data-size})
    \ \\[-2ex]
  \end{minipage}\\
  \hline
  \begin{minipage}{5cm}
    \ \\[-.5ex]
  Learnability of CSTPs\\ (Theorems~\ref{th:learnability} and \ref{th:cstp-lfp})
    \ \\[-2ex]
  \end{minipage}
  & Soundness, completeness and
  Lemma~\ref{lem:data-increase}\\
  \hline
  \begin{minipage}{5cm}
    \ \\[-.5ex]
    Complexity of Decision Problems (Theorem~\ref{th:decision-problems})
    \ \\[-2ex]
    \end{minipage}
    & Lemmas~\ref{lem:minimality-sim}, \ref{lem:pred-wconf}, and  \ref{lem:pred-decreases-measure}\\
  \hline
  \begin{minipage}{5cm}
    \ \\[-.5ex]
    Decidability of Satisfiability\\ (Theorem~\ref{th:satisfiability})
    \ \\[-2ex]
    \end{minipage}
    & Lemma~\ref{lem:minimality-sim} and decidability of word equations\\
  \hline
\end{tabular}
  \end{center}
\end{table}

\subsubsection{Properties of \(\red\)}
\label{sec:properties-red}
\begin{lemma}
  \label{lem:rewriting-decomp}
  Let \(\seq{x}\) be mutually distinct variables.
  If \((\tp,\Subst{\seq{x}}{\dat})\red^n (\tp',\Dat')\nred\), then there exists
  \(\tp''\) such that 
  \((\seq{x},\Subst{\seq{x}}{\dat})\red^n (\tp'',\Dat')\nred\)
  and \(\tp' = [\tp''/\seq{x}]\tp\).

  Conversely, if \((\seq{x},\Subst{\seq{x}}{\dat})\red^n (\tp,\Dat')\),
  then \((\tp_0, \Subst{\seq{x}}{\dat})\red^n ([\tp/\seq{x}]\tp_0,\Dat')\) for any tuple pattern \(\tp_0\).
\end{lemma}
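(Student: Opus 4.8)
The plan is to exploit that a single rewriting step $(\tp,\Dat)\red(\tp',\Dat')$ \emph{factors} into two independent pieces. Each of \rn{R-Prefix}, \rn{R-CPrefix}, \rn{R-Epsilon} is enabled by a side condition that inspects only the data component (together with a choice of columns / constant); its effect on the data part $\Subst{\seq{z}}{\dat}$ is to produce some $\Subst{\seq{z}'}{\dat'}$ with $\seq{z}',\dat'$ depending only on the rule, the chosen indices, and $\dat$ — not on $\tp$ — and its effect on the pattern is to replace $\tp$ by $\sigma\tp$, where $\sigma$ is a substitution determined solely by the rule and the indices ($[x_i x_j'/x_j]$, $[ax_j'/x_j]$, or $[\epsilon/x_j]$). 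Along any reduction the pattern's free variables stay inside the domain of the data substitution (an immediate induction), and when the starting pattern is the trivial tuple $\seq{z}$ one has $\tp'=\sigma(\seq{z})$. I would first record the one-step fact: if $(\seq{z},\Subst{\seq{z}}{\dat})\red(\tp^{*},\Dat_1)$ by a rule $R$ with some indices, then $R$ with the same indices applies to $(\tp_0,\Subst{\seq{z}}{\dat})$ for every $\tp_0$ with $\FV(\tp_0)\subseteq\{\seq{z}\}$, giving $(\tp_0,\Subst{\seq{z}}{\dat})\red(\sigma\tp_0,\Dat_1)$; and since the fresh variable of $\sigma$ occurs neither in $\tp_0$ nor in $\seq{z}$, $\sigma\tp_0=[\sigma(\seq{z})/\seq{z}]\tp_0=[\tp^{*}/\seq{z}]\tp_0$. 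The symmetric statement (applicability of $R$ to $(\tp_0,\cdot)$ implies applicability to $(\seq{z},\cdot)$) is equally immediate.

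I would prove the second (``conversely'') part first, by induction on $n$. The case $n=0$ is trivial. For $n\ge 1$, decompose $(\seq{x},\Subst{\seq{x}}{\dat})\red^{n-1}(\tp^{**},\Dat'')\red(\tp,\Dat')$; the length-$(n-1)$ prefix again starts from the trivial pattern, so the induction hypothesis gives $(\tp_0,\Subst{\seq{x}}{\dat})\red^{n-1}([\tp^{**}/\seq{x}]\tp_0,\Dat'')$ for every $\tp_0$ with $\FV(\tp_0)\subseteq\{\seq{x}\}$. The last step uses some rule $R$ with substitution $\sigma$; since $\FV([\tp^{**}/\seq{x}]\tp_0)\subseteq\FV(\tp^{**})$ lies in the domain of $\Dat''$, the one-step fact fires $R$ once more: $([\tp^{**}/\seq{x}]\tp_0,\Dat'')\red(\sigma([\tp^{**}/\seq{x}]\tp_0),\Dat')$. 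Finally $\sigma([\tp^{**}/\seq{x}]\tp_0)=[\sigma\tp^{**}/\seq{x}]\tp_0=[\tp/\seq{x}]\tp_0$ by the substitution-composition identity (the composite substitution sends $x_k$ to $\sigma((\tp^{**})_k)=(\sigma\tp^{**})_k$, and $\tp_0$ has no other free variables).

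For the first part I would induct on $n$, peeling off the \emph{first} step. Writing $(\tp,\Subst{\seq{x}}{\dat})\red(\tp_1,\Dat_1)\red^{n-1}(\tp',\Dat')$ with first step via $R$, substitution $\sigma$, and new variable list $\seq{x}_1$ (so $\tp_1=\sigma\tp$ and $\Dat_1=\Subst{\seq{x}_1}{\dat_1}$), the symmetric one-step fact gives $(\seq{x},\Subst{\seq{x}}{\dat})\red(\sigma(\seq{x}),\Dat_1)$ by the same $R$. The induction hypothesis applied to $(\tp_1,\Dat_1)\red^{n-1}(\tp',\Dat')$ yields $\tp_1''$ with $(\seq{x}_1,\Dat_1)\red^{n-1}(\tp_1'',\Dat')$ and $\tp'=[\tp_1''/\seq{x}_1]\tp_1$; the already-established second part, applied to this last reduction with starting pattern $\sigma(\seq{x})$, gives $(\sigma(\seq{x}),\Dat_1)\red^{n-1}([\tp_1''/\seq{x}_1]\sigma(\seq{x}),\Dat')$. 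Concatenating yields $(\seq{x},\Subst{\seq{x}}{\dat})\red^{n}(\tp'',\Dat')$ with $\tp'':=[\tp_1''/\seq{x}_1]\sigma(\seq{x})$, and once more the substitution-composition identity gives $[\tp''/\seq{x}]\tp=[\tp_1''/\seq{x}_1](\sigma\tp)=[\tp_1''/\seq{x}_1]\tp_1=\tp'$, as required.

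The whole argument is essentially bookkeeping. The only points needing care are arranging the fresh variables introduced by the rules consistently across the two parallel reductions — harmless because patterns are identified up to renaming — and maintaining the invariant that the current pattern's free variables lie within the domain of the current data substitution, so that the substitution-composition identities above are legitimate; I do not expect any genuinely hard step.
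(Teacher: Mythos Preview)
Your proposal is correct and rests on exactly the observation the paper highlights: the rule premises constrain only the data component, so the pattern is carried along by a substitution independent of its shape. The paper's own proof is a two-line sketch (``straightforward (simultaneous) induction on $n$, with case analysis on the first step''), whereas you spell out the bookkeeping and organize the two directions sequentially (proving the converse first, then invoking it inside the induction for the forward direction) rather than simultaneously; this is a minor structural difference, not a different argument.
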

\begin{proof}
  This follows by a straightforward (simultaneous) induction on the length \(n\)
  of the rewriting sequence, with case analysis used in the first step.
  Note that in the rules for \((\tp,\Dat)\red (\tp',\Dat')\), there is no condition on \(\tp\)
  in the premise.
\end{proof}
\begin{lemma}
  \label{lem:red-invariant}
  If \((\tp_1,\Dat_1)\red (\tp_2,\Dat_2)\), then \(\Dat_1\tp_1=\Dat_2\tp_2\).
\end{lemma}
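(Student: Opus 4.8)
The plan is to prove the equality by a case analysis on the rewriting rule applied in $(\tp_1,\Dat_1)\red(\tp_2,\Dat_2)$; there are three cases, \rn{R-Prefix}, \rn{R-CPrefix}, and \rn{R-Epsilon}. In every case $\Dat_1$ has the form $\Subst{(x_1,\ldots,x_m)}{\dat}$, the new pattern is obtained from $\tp_1$ by a single-variable instantiation $[e/x_j]\tp_1$ for some pattern $e$ over the updated variables (namely $e=x_i\cdot x_j'$, $e=a\cdot x_j'$, or $e=\epsilon$, respectively), and $\Dat_2$ is $\Subst{\seq{y}}{\dat'}$ for the correspondingly updated variable list $\seq{y}$ and matrix $\dat'$ (either $\replace{\dat}{j}{\seq{s}}$ or $\remove{\dat}{j}$). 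The key observation I would isolate is that, viewing substitutions as acting column-wise, the composite $\Dat_2\circ[e/x_j]$ agrees with $\Dat_1$ on every variable occurring in $\tp_1$: for $k\ne j$ the variable $x_k$ is sent by both to $\dat[*][k]=\dat'[*][k]$, while $x_j$ is sent by $\Dat_2\circ[e/x_j]$ to $\Dat_2(e)$, which equals $\dat[*][j]$ precisely by the side condition of the rule ($\dat[*][j]=\dat[*][i]\cdot\seq{s}$, $\dat[*][j]=a\cdot\seq{s}$, or $\dat[*][j]=\seq{\epsilon}$). Since $x_j'$ does not occur in $\tp_1$, its image under $\Dat_2$ is irrelevant. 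Hence $\Dat_2\tp_2=\Dat_2([e/x_j]\tp_1)=(\Dat_2\circ[e/x_j])\tp_1=\Dat_1\tp_1$, which is the desired equality.

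To make the column-wise reasoning precise I would first recall that, for a matrix $\dat$ with $\shapeof{\dat}=(m,k)$ and a pattern tuple $\tp$ with $\FV(\tp)\subseteq\set{x_1,\ldots,x_k}$, applying $\Subst{(x_1,\ldots,x_k)}{\dat}$ amounts, on each row $r$, to applying the ordinary substitution $[\dat[r][1]/x_1,\ldots,\dat[r][k]/x_k]$ to $\tp$ — this is just the definition of $[\dat'/\seq{x}]\tp$. Ordinary substitution composition then behaves as expected, and since the $r$-th component of $\dat[*][i]\cdot\seq{s}$ is $\dat[r][i]\cdot\seq{s}[r]$, the equation $\Dat_2(e)=\dat[*][j]$ can be verified row by row directly from the side condition, which closes each of the three cases.

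The argument is essentially a direct computation, so I do not expect a genuine obstacle; the only care required is the index bookkeeping when column $j$ is replaced by $\seq{s}$ or removed, and noting that the freshness of $x_j'$ is exactly what lets us ignore it when comparing substitutions on $\FV(\tp_1)$. I note in passing that the same scheme covers the additional rules of Section~\ref{sec:ext}: for \rn{R-Postfix}/\rn{R-CPostfix} one takes $e=x_j'\cdot x_i$ (resp.\ $x_j'\cdot a$) with the side condition $\dat[*][j]=\seq{s}\cdot\dat[*][i]$ (resp.\ $\seq{s}\cdot a$); for \rn{R-RPrefix}/\rn{R-RPostfix} one additionally uses that $\rev{(\cdot)}$ commutes with substitution, so that $\Dat_2(\rpat{x_i})=\rev{\dat[*][i]}$ and thus $\Dat_2(\rpat{x_i}\cdot x_j')=\rev{\dat[*][i]}\cdot\seq{s}=\dat[*][j]$; and for the set/multiset rules one simply replaces concatenation by (disjoint) union throughout. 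Thus reproving this lemma suffices to carry the invariant over to all the extensions.
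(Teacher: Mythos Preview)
Your proposal is correct and takes essentially the same approach as the paper: a case analysis on the rule used to derive $(\tp_1,\Dat_1)\red(\tp_2,\Dat_2)$. The paper's own proof simply states that the result follows by such a case analysis without spelling out the details; your write-up makes explicit the uniform pattern $\tp_2=[e/x_j]\tp_1$ and the row-wise verification that $\Dat_2\circ[e/x_j]$ agrees with $\Dat_1$ on $\FV(\tp_1)$, which is exactly the computation one performs in each case.
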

\begin{proof}
  This follows by a straightforward case analysis on the rule used for deriving
  \((\tp_1,\Dat_1)\red (\tp_2,\Dat_2)\).
\end{proof}

\subsubsection{Properties of \(\pred\)}
\label{sec:properties-pred}

\begin{lemma}
  \label{lem:pred-closed-under-subst}
  If \(\tp\pred \tp'\), then \([\tp_1/\seq{x}]\tp \preds [\tp_1/\seq{x}]\tp'\).
\end{lemma}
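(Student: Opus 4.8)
The plan is to prove the lemma by a simple one-step case analysis on the rule used to derive $\tp\pred\tp'$, with no induction needed. Write $\sigma$ for the substitution $[\tp_1/\seq{x}]$. The first step is to record the routine fact that $\sigma$ acts homomorphically on pattern expressions: it fixes every constant $a\in\Alpha$ and satisfies $\sigma(p\cdot q)=(\sigma p)\cdot(\sigma q)$. Hence if $\tp=(p_1,\ldots,p_n)$ then $\sigma\tp=(\sigma p_1,\ldots,\sigma p_n)$, and whenever some component decomposes as $p_j=r\cdot p_j'$, so does its image, $\sigma p_j=(\sigma r)\cdot(\sigma p_j')$.

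Given this, the cases \rn{PR-CPrefix} and \rn{PR-Epsilon} are immediate. If $p_j=a\cdot p_j'$ with $a\in\Alpha$, then $\sigma p_j=a\cdot(\sigma p_j')$, so \rn{PR-CPrefix} fires on $\sigma\tp$ and yields $\sigma\tp\pred\sigma\tp'$; and if $p_j=\epsilon$, then $\sigma p_j=\epsilon$, so \rn{PR-Epsilon} fires and again $\sigma\tp\pred\sigma\tp'$. In both cases a single reduction step suffices.

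The one subtlety, and the case I expect to require the most care, is \rn{PR-Prefix}, say with $p_j=p_i\cdot p_j'$ and side condition $p_i\ne\epsilon$. After substitution we still have $\sigma p_j=(\sigma p_i)\cdot(\sigma p_j')$, but the side condition for \rn{PR-Prefix} now concerns $\sigma p_i$, which can equal $\epsilon$ even though $p_i\ne\epsilon$ (this happens precisely when $p_i$ consists of variables among $\seq{x}$ all of which $\sigma$ maps to the empty string). I will therefore split on whether $\sigma p_i=\epsilon$. If $\sigma p_i\ne\epsilon$, then \rn{PR-Prefix} applies to $\sigma\tp$, taking the $i$-th component $\sigma p_i$ of $\sigma\tp$ as the auxiliary pattern and $\sigma p_j$ as the principal pattern, producing $\sigma\tp\pred\sigma\tp'$ in one step. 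If $\sigma p_i=\epsilon$, then $\sigma p_j=\sigma p_j'$, so $\sigma\tp$ and $\sigma\tp'$ are literally the same tuple pattern, and the conclusion $\sigma\tp\preds\sigma\tp'$ holds with zero reduction steps; this degenerate possibility is exactly why the statement is phrased with $\preds$ rather than $\pred$. Collecting the three cases gives $\sigma\tp\preds\sigma\tp'$ in all situations, which is the claim.
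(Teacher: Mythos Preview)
Your proof is correct and takes essentially the same approach as the paper's own proof, which simply says the result is trivial by the definition of $\pred$ and notes that one gets either $[\tp_1/\seq{x}]\tp \pred [\tp_1/\seq{x}]\tp'$ or $[\tp_1/\seq{x}]\tp = [\tp_1/\seq{x}]\tp'$, the latter precisely when some variables are substituted by empty strings. You have spelled out the case analysis more explicitly, correctly identifying that only \rn{PR-Prefix} can degenerate to the zero-step case.
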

\begin{proof}
  Trivial by the definition of \(\pred\).
  We have actually \([\tp_1/\seq{x}]\tp \pred [\tp_1/\seq{x}]\tp'\) or
  \([\tp_1/\seq{x}]\tp = [\tp_1/\seq{x}]\tp'\), and the latter occurs only
  when empty sequences are substituted for some variables.
  For example, when \(\tp=(x_2x_1,x_2,\ldots)\pred (x_1,x_2,\ldots)=\tp'\)
  and \(\tp_1=(p_1,\epsilon,\ldots)\),
 \( [\tp_1/\seq{x}]\tp = (p_1,\epsilon,\ldots)=[\tp_1/\seq{x}]\tp'\). 
 \end{proof}

When \(\tp_1\pred \tp_2\) holds, we define a partial function \(\residualfun{\tp_1}{\tp_2}\) on tuple patterns
as follows. 
\begin{definition}[\(\residualfun{\tp_1}{\tp_2}\)]
  \label{df:residual-pattern}
  Let \(\tp_0, \tp_1\) and \(\tp_2\) be tuple patterns such that \(\tp_1\pred \tp_2\).
  We define
  the tuple pattern \(\residual{\tp_0}{\tp_1}{\tp_2}\) such that
  \(\tp_0\preds \residual{\tp_0}{\tp_1}{\tp_2}\) as follows,
  by case analysis on the rule used for deriving \(\tp_1\pred \tp_2\):
\begin{itemize}
\item Case  \rn{PR-Prefix}:
  In this case, \(\tp_1=(\pat_1,\ldots,\pat_n)\) and \(\tp_2=(\pat'_1,\ldots,\pat'_n)\)
  with \(\pat_j=\pat_i\cdot \pat'_j\) and \(\pat_i\ne \epsilon\)
  for some \(i, j\) and \(\pat'_k=\pat_k\) for \(k\ne j\).
  If \(\tp_0\) is of the form \((q_1,\ldots,q_n)\) and \(q_i\) is a prefix of \(q_j\),
  then \(\residual{\tp_0}{\tp_1}{\tp_2}:= (q'_1,\ldots,q'_n)\) where
  \(q_j' = \ldiff{q_i}{q_j}\) and \(q_k'=q_k\) for \(k\ne j\).
  (Recall that \(\ldiff{q_i}{q_j}\) denotes \(q\) such that \(q_i\cdot q=q_j\).  
  We have \(\tp_0\pred \residual{\tp_0}{\tp_1}{\tp_2}\) if \(q_2\ne \epsilon\),
  and \(\tp_0=\residual{\tp_0}{\tp_1}{\tp_2}\) if \(q_2=\epsilon\).)
Otherwise, \(\residual{\tp_0}{\tp_1}{\tp_2}\) is undefined.
\item Case  \rn{PR-Suffix}:
  In this case, \(\tp_1=(\pat_1,\ldots,\pat_n)\) and \(\tp_2=(\pat'_1,\ldots,\pat'_n)\)
  with \(\pat_j=\pat'_j\cdot \pat_i\) and \(\pat_i\ne \epsilon\)
  for some \(i, j\) and \(\pat'_k=\pat_k\) for \(k\ne j\).
  If \(\tp_0\) is of the form \((q_1,\ldots,q_n)\) and \(q_i\) is a suffix of \(q_j\),
  then \(\residual{\tp_0}{\tp_1}{\tp_2}:= (q'_1,\ldots,q'_n)\) where
  \(q_j' = \rdiff{q_i}{q_j}\) and \(q_k'=q_k\) for \(k\ne j\).
Otherwise, \(\residual{\tp_0}{\tp_1}{\tp_2}\) is undefined.
\item Case  \rn{PR-CPrefix}:
  In this case, \(\tp_1=(\pat_1,\ldots,\pat_n)\) and \(\tp_2=(\pat'_1,\ldots,\pat'_n)\)
  with \(\pat_j= a\pat'_j\) for some \(j\) and \(\pat'_k=\pat_k\) for \(k\ne j\).
  If \(\tp_0\) is of the form \((q_1,\ldots,q_n)\) and \(a\) is a prefix of \(q_j\),
  then \(\residual{\tp_0}{\tp_1}{\tp_2}:= (q'_1,\ldots,q'_n)\) where
  \(q_j' = \ldiff{a}{q_j}\) and \(q_k'=q_k\) for \(k\ne j\).
  Otherwise,
\(\residual{\tp_0}{\tp_1}{\tp_2}\) is undefined.
\item Case  \rn{PR-CSuffix}:
  In this case, \(\tp_1=(\pat_1,\ldots,\pat_n)\) and \(\tp_2=(\pat'_1,\ldots,\pat'_n)\)
  with \(\pat_j= \pat'_j a\) for some \(j\) and \(\pat'_k=\pat_k\) for \(k\ne j\).
  If \(\tp_0\) is of the form \((q_1,\ldots,q_n)\) and \(a\) is a suffix of \(q_j\),
  then \(\residual{\tp_0}{\tp_1}{\tp_2}:= (q'_1,\ldots,q'_n)\) where
  \(q_j' = \rdiff{a}{q_j}\) and \(q_k'=q_k\) for \(k\ne j\).
  Otherwise,
\(\residual{\tp_0}{\tp_1}{\tp_2}\) is undefined.
\item Case  \rn{PR-Epsilon}:
  In this case, \(\tp_1=(\pat_1,\ldots,\pat_{j-1},\epsilon,\pat_{j+1},\ldots,\pat_n)\)
  and \(\tp_2=(\pat_1,\ldots,\pat_{j-1},\pat_{j+1},\ldots,\pat_n)\).
  If \(\tp_0\) is of the form \((\patalt_1,\ldots,\patalt_{j-1},\epsilon,\patalt_{j+1},\ldots,\patalt_n)\), then
  \(\residual{\tp_0}{\tp_1}{\tp_2}:=(\patalt_1,\ldots,\patalt_{j-1},\patalt_{j+1},\ldots,\patalt_n)\).
  Otherwise,
\(\residual{\tp_0}{\tp_1}{\tp_2}\) is undefined.
\end{itemize}
\end{definition}
\noindent
Note that \(\residualfun{\tp_1}{\tp_2}\) above depends on which rule has been used
for deriving \({\tp_1}\pred{\tp_2}\); for example, \(\residualfun{(ax,a)}{(x,a)}(ap_2p_1,ap_2)=(p_1,ap_2)\) if
\((ax,a)\pred(x,a)\) has been derived by \rn{R-Prefix} and \(\residualfun{(ax,a)}{(x,a)}(ap_2p_1,ap_2)=(p_2p_1,ap_2)\) if
\((ax,a)\pred(x,a)\) has been derived by \rn{R-CPrefix}.
Thus, when we write \(\residualfun{\tp_1}{\tp_2}\), we assume that
\(\tp_1\pred \tp_2\) carries information about the rule used for deriving \(\tp_1\pred \tp_2\).

We defined  \(\residual{\tp_0}{\tp_1}{\tp_2}\) above 
so that the reduction \(\tp_0\preds \residual{\tp_0}{\tp_1}{\tp_2}\)
``simulates'' the reduction \(\tp_1\pred \tp_2\) in the following sense.
\begin{lemma}
  \label{lem:minimality-sim}
  \begin{enumerate}
 \item  If \(\tp_1\pred \tp_1'\) and \(\lang(\tp_0)\subseteq \lang([\tp/\seq{x}]\tp_1)\), then
  \(\residual{\tp_0}{\tp_1}{\tp_1'}\) is well defined and \(\tp_0\preds \residual{\tp_0}{\tp_1}{\tp_1'}\).
\item If \(\residual{\tp_0}{\tp_1}{\tp_1'}\) is well defined,
  then \(\lang(\tp_0)\subseteq \lang([\tp/\seq{x}]\tp_1)\) if and only if
  \(\lang(\residual{\tp_0}{\tp_1}{\tp_1'})\subseteq \lang([\tp/\seq{x}]\tp_1')\);
  and \(\lang(\tp_0)\supseteq \lang([\tp/\seq{x}]\tp_1)\) if and only if
  \(\lang(\residual{\tp_0}{\tp_1}{\tp_1'})\supseteq \lang([\tp/\seq{x}]\tp_1')\).
\end{enumerate}
\end{lemma}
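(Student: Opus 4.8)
The plan is to prove both parts by a case analysis on the rule used to derive \(\tp_1\pred\tp_1'\) (one of \rn{PR-Prefix}, \rn{PR-CPrefix}, \rn{PR-Epsilon}), exploiting that each such step is witnessed by a pair of mutually inverse, monotone maps on \(n\)-tuples of strings. Take \rn{PR-Prefix} as the representative case, with principal position \(j\) and auxiliary position \(i\), so \(p_j = p_i\cdot p'_j\) with \(p_i\neq\epsilon\) and \(\tp_1'\) obtained by replacing \(p_j\) with \(p'_j\). Let \(E\) be the total injective map on \(n\)-tuples defined by \(E(s_1,\ldots,s_n) = (s_1,\ldots,s_{j-1},\, s_i\cdot s_j,\, s_{j+1},\ldots,s_n)\), and let \(C\) be its partial inverse \(C(t_1,\ldots,t_n) = (t_1,\ldots,t_{j-1},\, \ldiff{t_i}{t_j},\, t_{j+1},\ldots,t_n)\), defined exactly on tuples whose \(j\)-th component has the \(i\)-th as a prefix; for \rn{PR-CPrefix} one prepends/strips the constant \(a\), for \rn{PR-Epsilon} one inserts/deletes an \(\epsilon\) in position \(j\). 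Since substitution is a homomorphism and \(\tp_1,\tp_1'\) have the same variables, \(\theta(\tp_1) = E(\theta(\tp_1'))\) for every substitution \(\theta\), so \(\lang(\tp_1) = E(\lang(\tp_1'))\); and since \(p_j = p_i p'_j\), every tuple of \(\lang(\tp_1)\) lies in the domain of \(C\), whence \(\lang(\tp_1') = C(\lang(\tp_1))\). Inspecting Definition~\ref{df:residual-pattern}, the residual \(\tp_0'' := \residual{\tp_0}{\tp_1}{\tp_1'}\) is defined precisely when the same \(E,C\) relate \(\tp_0\) and \(\tp_0''\): when \(\tp_0=(q_1,\ldots,q_n)\) and the rule's syntactic side condition on \(\tp_0\)'s components holds, in which case \(\tp_0,\tp_0''\) again share their variables and \(\theta(\tp_0) = E(\theta(\tp_0''))\) for all \(\theta\).

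For part (1), note first that \(\lang(\tp_0)\) is nonempty (it contains the image of \(\tp_0\) under the everywhere-\(\epsilon\) substitution) and relations of different arities are disjoint, so \(\lang(\tp_0)\subseteq\lang(\tp_1)\) forces \(\tp_0\) to have arity \(n\); write \(\tp_0=(q_1,\ldots,q_n)\). Every tuple of \(\lang(\tp_1)\) satisfies the relevant rule's semantic property at position \(j\) — for \rn{PR-Prefix} its \(i\)-th component is a prefix of its \(j\)-th, for \rn{PR-CPrefix} its \(j\)-th begins with \(a\), for \rn{PR-Epsilon} its \(j\)-th equals \(\epsilon\) — hence so does every tuple of \(\lang(\tp_0)\); equivalently, \(\theta(q_i)\) is a prefix of \(\theta(q_j)\) (resp.\ \(\theta(q_j)\) begins with \(a\), resp.\ \(\theta(q_j)=\epsilon\)) for \emph{every} \(\theta\colon\Vars\to\Alpha^*\). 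The key step is then a syntactic-vs-semantic prefix lemma: if \(\theta(q)\) is a prefix of \(\theta(q')\) for all \(\theta\), then \(q\) is a prefix of \(q'\) as words over \(\Alpha\cup\Vars\). I would prove it contrapositively: if \(q\) is not such a prefix, consider the leftmost symbol position at which \(q\) and \(q'\) diverge (or at which \(q\) overruns \(q'\)), and, using two distinct letters \(a\neq b\) of \(\Alpha\), choose \(\theta\) instantiating the variable(s) at that position to \(a\) and to \(b\), so that \(\theta(q)\) and \(\theta(q')\) already disagree within their first few symbols, or \(\theta(q)\) is strictly longer than \(\theta(q')\); either way \(\theta(q)\) is not a prefix of \(\theta(q')\). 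The two analogous, easier facts for ``begins with \(a\)'' and ``equals \(\epsilon\)'' handle the other rules, so the residual's side condition holds and \(\tp_0''\) is defined.

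For part (2), assume \(\tp_0'' = \residual{\tp_0}{\tp_1}{\tp_1'}\) is defined. As above, \(\lang(\tp_0) = E(\lang(\tp_0''))\), and since the residual's side condition forces every tuple of \(\lang(\tp_0)\) into the domain of \(C\), also \(\lang(\tp_0'') = C(\lang(\tp_0))\). The four equivalences now follow by applying \(C\) (monotone, and total on \(\lang(\tp_0)\) and on \(\lang(\tp_1)\)) or \(E\) (monotone and total) to an inclusion: \(\lang(\tp_0)\subseteq\lang(\tp_1)\) gives \(\lang(\tp_0'')=C(\lang(\tp_0))\subseteq C(\lang(\tp_1))=\lang(\tp_1')\), and conversely \(\lang(\tp_0'')\subseteq\lang(\tp_1')\) gives \(\lang(\tp_0)=E(\lang(\tp_0''))\subseteq E(\lang(\tp_1'))=\lang(\tp_1)\); the two \(\supseteq\) equivalences come out identically, again using that \(C\) and \(E\) are total on every set that appears.

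The step I expect to be the main obstacle is the syntactic-vs-semantic prefix lemma in part (1): turning a ``for all substitutions'' prefix relation into a prefix relation over the extended alphabet \(\Alpha\cup\Vars\) needs a careful case split on whether the diverging symbols are letters or variables and an essential use of \(|\Alpha|\geq 2\). Everything else — that \(\tp_1,\tp_1'\) (resp.\ \(\tp_0,\tp_0''\)) have the same variables, that \(E\) and \(C\) are mutually inverse on the stated domains, and that \(\lang(\cdot)\) commutes with \(E\) — is a routine unfolding of the definitions.
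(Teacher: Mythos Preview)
Your proposal is correct and follows essentially the same approach as the paper: a case analysis on the rule deriving \(\tp_1\pred\tp_1'\), with part~(2) handled by the mutually inverse maps (your \(E,C\); the paper's set comprehensions \(\{(\leftinv{w_2}{w_1},w_2,\ldots)\mid\ldots\}\)). The only difference is that you spell out the ``syntactic-vs-semantic prefix lemma'' for part~(1), whereas the paper simply asserts that if every tuple of \(\lang(\tp_0)\) has its \(i\)-th component a prefix of its \(j\)-th then \(\tp_0\) itself has that shape; your detailed argument for this step is sound and fills a gap the paper leaves implicit.
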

\begin{proof}
  Case analysis on the rule used for deriving \(\tp_1\pred \tp_1'\).
  Since the other cases are similar or simpler, we discuss only the case for
  \rn{PR-Prefix}.
We may assume, without loss of generality,
    \(\tp_1=(p_2p_1,p_2,\ldots,p_n)\) and \(\tp_1'=(p_1,p_2,\ldots,p_n)\) with
    \(p_2\ne \epsilon\).
    To show (1), assume \(\lang(\tp_0)\subseteq \lang([\tp/\seq{x}]\tp_1)\).
    Then for any \((s_1,s_2,\ldots,s_n)\in\lang(\tp_0)\),
     \(s_2\) must be a prefix of \(s_1\). Thus
     \(\tp_0\) must also be of the form \((q_2q_1,q_2,\ldots,q_n)\).
     (To see this, suppose \(\tp_0=(q_1',q_2,\ldots)\) and \(q_2\) is not a prefix of \(q_1'\).
     Then either (i) \(q_1'q_1''=q_2\) with \(q_1''\ne \epsilon\) (i.e., \(q_1'\) is a strict prefix of \(q_2\)) or (ii)
     \(q_1' = q \alpha_1 q_1''\) and \(q_2 = q \alpha_2 q_2''\) with \(\alpha_1,\alpha_2\in\Alpha\cup \Vars\)
     and \(\alpha_1\ne \alpha_2\) (i.e., \(\alpha_1,\alpha_2\) are the first letters or variables that differ between
     \(q_1'\) and \(q_2\)). In case (i), let \(\theta\) be a substitution such that \(\theta q_1''\ne \epsilon\),
     and in case (ii), let \(\theta\) be a substitution such that
     \(\theta \alpha_1\ne \theta \alpha_2\). 
     Then \(s_2=\theta  q_2\) is not a prefix of \(s_1=\theta q_1'\).  Thus, \(q_2\) must be a prefix of \(q_1'\). )
     Therefore,  \(\residual{\tp_0}{\tp_1}{\tp_1'}\) is well defined, and \(\tp_0\preds \residual{\tp_0}{\tp_1}{\tp_1'}\).

     To show (2), suppose  \(\residual{\tp_0}{\tp_1}{\tp_1'}\) is well defined.
     Then, \(\tp_0=(q_2q_1,q_2,\ldots,q_n)\) and
     \(\tp_0' := \residual{\tp_0}{\tp_1}{\tp_1'}=(q_1,q_2,\ldots,q_n)\) where
     \(q_2\) may be \(\epsilon\).
     Thus, we have:
    \begin{align*}
      \lang(\tp_0')&=\set{(\leftinv{w_2}{w_1},w_2,\ldots,w_n)\mid (w_1,\ldots,w_n)\in\lang(\tp_0)}\\
      \lang(\tp_0)&=\set{(w_2w_1,w_2,\ldots,w_n)\mid (w_1,\ldots,w_n)\in\lang(\tp'_0)}\\
      \lang([\tp/\seq{x}]\tp_1')&=\set{(\leftinv{w_2}{w_1},w_2,\ldots,w_n)\mid (w_1,\ldots,w_n)\in\lang([\tp/\seq{x}]\tp_1)}\\
      \lang([\tp/\seq{x}]\tp_1)&=\set{(w_2w_1,w_2,\ldots,w_n)\mid (w_1,\ldots,w_n)\in\lang([\tp/\seq{x}]\tp'_1)},
    \end{align*}
    from which the required properties follow immediately.
\end{proof}

Next, we will prove the weak confluence of \(\pred\) (up to permutations).
To that end, we prepare the following lemma.
 \begin{lemma}
   \label{lem:conjugacy}
   Suppose \(pq_1=q_2p\) for \(p,q_1,q_2\in (\Alpha\cup \Vars)^*\).
   Then, there exists \(\tp\) such that \((q_1,p)\preds \tp \rpreds (q_2,p)\).
 \end{lemma}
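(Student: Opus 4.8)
The plan is to treat this as a statement of pure word combinatorics over the alphabet $\Alpha\cup\Vars$: here $p,q_1,q_2$ are just strings and $pq_1=q_2p$ is the classical conjugacy equation. I would argue by induction on $|p|$. The trivial case is $q_1=q_2$ (which in particular covers $p=\epsilon$, since then the equation reads $q_1=q_2$): simply take $\tp=(q_1,p)$, and both $(q_1,p)\preds\tp$ and $(q_2,p)\preds\tp$ hold in zero steps. So from now on assume $q_1\neq q_2$; comparing lengths in $pq_1=q_2p$ forces $|q_1|=|q_2|$, hence both $q_1,q_2$ are nonempty and $p\neq\epsilon$.

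Next I would split on the size of $p$ relative to $|q_1|=|q_2|$:
\begin{itemize}
\item If $|p|\le|q_1|$: the equation shows that $q_2$ begins with $p$ and $q_1$ ends with $p$; writing $q_2=pr$ and $q_1=sp$ and cancelling $p$ at both ends of $psp=prp$ gives $s=r$, so $q_1=rp$ and $q_2=pr$. Then $\tp:=(r,p)$ works, since $(rp,p)\pred(r,p)$ by \rn{PR-Postfix} (stripping the suffix $p\neq\epsilon$, the second component) and $(pr,p)\pred(r,p)$ by \rn{PR-Prefix} (stripping the prefix $p\neq\epsilon$). No induction is needed in this case.
\item If $|p|>|q_1|$: then $q_2$ is a proper prefix of $p$, say $p=q_2p'$ with $p'\neq\epsilon$. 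Substituting this into $pq_1=q_2p$ and cancelling $q_2$ on the left gives $p'q_1=p$, hence $p'q_1=q_2p'$, which is again the conjugacy equation, now with $|p'|=|p|-|q_2|<|p|$ (using $q_2\neq\epsilon$). The induction hypothesis yields some $\tp_0$ with $(q_1,p')\preds\tp_0\rpreds(q_2,p')$. Since $p=p'q_1=q_2p'$ with $q_1,q_2\neq\epsilon$, we also have $(q_1,p)\pred(q_1,p')$ by \rn{PR-Postfix} and $(q_2,p)\pred(q_2,p')$ by \rn{PR-Prefix}; combining these with the two halves of the induction hypothesis gives $(q_1,p)\preds\tp_0\rpreds(q_2,p)$, so we take $\tp:=\tp_0$.
\end{itemize}

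The only real content is the case analysis above; everything else is bookkeeping with \rn{PR-Prefix} and \rn{PR-Postfix} and their side conditions "$p_i\neq\epsilon$". The step I expect to need the most care is the $|p|>|q_1|$ case: one must check that the recursion is genuinely on a strictly shorter first component (which relies on $q_2\neq\epsilon$, i.e.\ on having discharged $q_1=q_2$ first) and that the two auxiliary $\pred$-steps correctly realign the pairs $(q_i,p)$ with $(q_i,p')$. For the postfix extension (Section~\ref{sec:ext-postfix}) this lemma is exactly what is needed to re-establish Lemma~\ref{lem:pred-wconf} in the new overlap case where a single component serves simultaneously as a prefix witness and a postfix witness; once it is in hand, the weak-confluence argument proceeds as before.
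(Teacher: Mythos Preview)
Your proof is correct and follows essentially the same route as the paper's: induction on $|p|$, handling the conjugacy equation by splitting on whether $p$ is a prefix of $q_2$ or vice versa, and in the latter case recursing on the strictly shorter $p'$. Your bookkeeping is in fact slightly tidier than the paper's (you discharge $q_1=q_2$ upfront and explicitly track the side conditions $q_1,q_2\neq\epsilon$ needed for the \rn{PR-Prefix}/\rn{PR-Postfix} steps), but the argument is the same.
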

 \begin{proof}
   This follows by induction on \(|p|\).
   If \(p=\epsilon\), then \(q_1=q_2\). Therefore, the property holds for \(\tp=(q_1,p)\).
   Otherwise,
   by the assumption \(pq_1=q_2p\), either \(p\) is a prefix of \(q_2\), or \(q_2\) is a prefix of \(p\).
   
   In the former case, \(q_2=pr\) with \(q_1=rp\).
   Thus, we have \((q_1,p)\pred \tp \rpreds (q_2,p)\) for \(\tp = (r,p)\).

   In the latter case, \(p=q_2 p'=p'q_1\).
   We may assume that \(q_2\ne \epsilon\), since \((q_1,p)=(\epsilon,p)= (q_2,p)\) otherwise.
   Thus, we have \(|p'|<|p|\).
   By the induction hypothesis, there exists \(\tp\) such that \((q_1,p')\preds \tp \rpreds (q_2,p')\).
   Thus, we have
   \((q_1,p)\pred (q_1,p')\preds \tp \rpreds (q_2,p')\rpred (q_2,p)\) as required.
 \end{proof}
 \begin{remark}
   The lemma above also follows immediately from the following property of words (cf. \cite{Lothaire}, Proposition~12.1.2):
   ``If \(p,q_1,q_2\in \Alpha^*\), \(pq_1=q_2p\) and \(q_1\ne \epsilon\), then there exists \(r,s\in \Alpha^*\) and an integer \(n\) such that
   \(p=(rs)^nr, q_1=sr\), and \(q_2=rs\).''
   By using this property, we obtain  \((q_1,p)\preds (sr, r) \pred (s,r) \rpred (rs, r) \rpreds (q_2,p)\). \qed
\end{remark}

We write \(\tp \tpequiv \tp'\) if \(\tp'\) is a permutation of \(\tp\). For example,
\((x, xy)\tpequiv (xy, x)\).
The following lemma states that \(\pred\) is weakly confluent up to \(\tpequiv\).
\begin{lemma}[weak confluence of $\pred$ up to $\tpequiv$]
  \label{lem:pred-wconf}
If \(\tp\pred \tp_1\) and \(\tp\pred \tp_2\), then there exist \(\tp_1'\)
  and \(\tp_2'\) such that
  \(\tp_1\preds \tp_1' \tpequiv \tp_2' \rpreds \tp_2\).
\end{lemma}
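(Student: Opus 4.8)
The plan is to prove this local confluence (up to $\tpequiv$) by a case analysis on the two rules applied in $\tp\pred\tp_1$ and $\tp\pred\tp_2$ and on how the columns they touch overlap. Write $\tp=(p_1,\ldots,p_n)$, say the first step uses rule $R_1$ with principal column $j_1$ (and, when $R_1=\rn{PR-Prefix}$, auxiliary column $i_1$), and symmetrically $R_2,j_2,i_2$ for the second step. The only external input needed is the standard free-monoid fact (Levi's lemma) that two prefixes of a common word are prefix-comparable; everything else is direct construction of the two joining reduction sequences.

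First I would dispatch the \emph{independent} case: if $j_1\ne j_2$ and whenever a step is $\rn{PR-Prefix}$ its auxiliary column is not the other step's principal column, then the two operations act on disjoint data and commute literally, giving $\tp_1\pred\tp_2'$, $\tp_2\pred\tp_1'$ with $\tp_1'=\tp_2'$ (using the convention that deleting a column keeps the others in order, so a $\rn{PR-Epsilon}$ step never disturbs the applicability of a step elsewhere). This already subsumes, e.g., two $\rn{PR-Prefix}$ steps on different principal columns that share the same auxiliary.

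Next come the overlapping cases. When $j_1=j_2=j$: a $\rn{PR-Epsilon}$ step forces $p_j=\epsilon$, incompatible with $\rn{PR-Prefix}$ (auxiliary $\ne\epsilon$) or $\rn{PR-CPrefix}$ ($p_j$ starts with a letter), so the live subcases have $R_1,R_2\in\{\rn{PR-Prefix},\rn{PR-CPrefix}\}$; here I use prefix-comparability to write (say) $p_{i_2}=p_{i_1}r$ and join by peeling $p_{i_1}$ (resp.\ the leading letter) off column $i_2$ and then, if $r\ne\epsilon$, peeling $r$ off column $j$ via $\rn{PR-Prefix}$ — and if $r=\epsilon$ the extra column has become $\epsilon$ and is removed by $\rn{PR-Epsilon}$. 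When $j_1\ne j_2$ but a $\rn{PR-Prefix}$ step has its auxiliary equal to the other step's principal column (say $i_1=j_2$), $p_{j_2}$ is a nonempty prefix of $p_{j_1}$, and I apply to column $j_1$ whatever $R_2$ does to column $j_2$ (possibly deleting it) and then re-peel the residue of $p_{j_2}$. The genuinely symmetric subcase is both steps being $\rn{PR-Prefix}$ with $i_1=j_2$ and $i_2=j_1$: a length argument forces $p_{j_1}=p_{j_2}$ with both residues $\epsilon$, and after the two $\rn{PR-Epsilon}$ steps one is left with $\tp$ minus its $j_1$-th entry versus $\tp$ minus its $j_2$-th entry, which (since $p_{j_1}=p_{j_2}$) are related exactly by the transposition of those columns — this is precisely where $\tpequiv$ is needed rather than equality.

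The main obstacle I anticipate is bookkeeping rather than conceptual depth: enumerating the roughly dozen subcases without omission — in particular the interaction of a $\rn{PR-Prefix}$ auxiliary with the other step, and the degenerate situations where a residual pattern turns out to be $\epsilon$ and triggers a further $\rn{PR-Epsilon}$ — and checking each time that the reduction steps I write down are legitimate, i.e.\ that every $\rn{PR-Prefix}$ application has a nonempty auxiliary pattern at the moment it is used and that the two constructed sequences end at patterns agreeing after a column permutation.
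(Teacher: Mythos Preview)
Your proposal is correct and follows essentially the same approach as the paper: a case analysis on the pair of rules used and on how the principal/auxiliary columns of the two steps overlap, with Levi's lemma handling the same-principal \rn{PR-Prefix} cases. Your enumeration is in fact slightly more careful than the paper's --- you explicitly treat the fully symmetric \rn{PR-Prefix}/\rn{PR-Prefix} case ($i_1=j_2$ and $i_2=j_1$), which the paper's listed cases do not isolate, and you correctly identify it as the place where a genuine permutation (rather than equality) arises.
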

\begin{proof}
  It is sufficient to consider the cases where \(\tp_1\ne \tp_2\) and
  the parts involved in the two reductions overlap.
  Without loss of generality, we may assume that the first element \(p_1\)
  is principal in \(\tp\pred \tp_1\).
  We thus need to consider only the following cases; due to the symmetry between
  prefix and suffix, the cases involving only suffixes are omitted.
  \begin{itemize}
  \item Case \(\tp=(p_2p'_1,p_2,p_3,\ldots,p_n)\pred (p'_1,p_2,p_3,\ldots,p_n)=\tp_1\)
    and \(\tp=(p_3p''_1,p_2,p_3,\ldots,p_n)\pred (p''_1,p_2,p_3,\ldots,p_n)=\tp_2\)
    (i.e., both reductions use \rn{PR-Prefix}, and \(p_1\) is principal in both
    reductions).\\
    By \(p_2p'_1=p_3p''_1\), \(p_2\) and \(p_3\) are in the prefix relation.
    By symmetry, we may assume that \(p_2=p_3p_2'\), with \(p''_1=p_2'p'_1\).
    Then we have:
    \begin{align*}
      & \tp_1 = (p'_1,p_3p_2',p_3,\ldots,p_n) \pred (p'_1,p_2',p_3,\ldots,p_n)\\
      & \tp_2=(p_2'p'_1,p_3p_2',p_3,\ldots,p_n)\pred
      (p_2'p'_1,p_2',p_3,\ldots,p_n)\preds(p'_1,p_2',p_3,\ldots,p_n),
    \end{align*}
    as required. In the second reduction step for \(\tp_2\),
    \((p_2'p'_1,p_2',p_3,\ldots,p_n)=(p'_1,p_2',p_3,\ldots,p_n)\) if \(p_2'=\epsilon\)
    and \((p_2'p'_1,p_2',p_3,\ldots,p_n)\pred (p'_1,p_2',p_3,\ldots,p_n)\) otherwise.
  \item Case \(\tp=(p_2p'_1,p_2,p_3,\ldots,p_n)\pred (p'_1,p_2,p_3,\ldots,p_n)=\tp_1\)
    and \(\tp=(p_1,p_2,p_1p'_3,\ldots,p_n)\pred (p_1,p_2,p'_3,\ldots,p_n)=\tp_2\)
    (i.e., both reductions use \rn{PR-Prefix}, and \(p_1\) is auxiliary in
    \(\tp\pred \tp_2\)).\\
    In this case, we have \(p_1=p_2p_1'\) and \(p_3=p_1p'_3=p_2p_1'p_3'\).
    Then we have:
    \begin{align*}
      & \tp_1 = (p'_1,p_2,p_2p_1'p_3',\ldots,p_n) \pred
      (p'_1,p_2,p_1'p_3',\ldots,p_n) \preds (p'_1,p_2,p_3',\ldots,p_n) \\
      & \tp_2=(p_2p_1',p_2,p'_3,\ldots,p_n)\pred (p'_1,p_2,p_3',\ldots,p_n),
    \end{align*}
    as required.
    In the second reduction step for \(\tp_1\),
   \( (p'_1,p_2,p_1'p_3',\ldots,p_n) = (p'_1,p_2,p_3',\ldots,p_n)\)  if \(p_1'=\epsilon\)
   and
\( (p'_1,p_2,p_1'p_3',\ldots,p_n) \pred (p'_1,p_2,p_3',\ldots,p_n)\)  otherwise.
  \item Case \(\tp=(p_2p'_1,p_2,\ldots,p_n)\pred (p'_1,p_2,\ldots,p_n)=\tp_1\)
and \(\tp=(ap''_1,p_2,\ldots,p_n)\pred (p''_1,p_2,\ldots,p_n)=\tp_2\)
    (i.e., \(\tp\pred\tp_1\) uses \rn{PR-Prefix}, while
    \(\tp\pred \tp_2\) uses \rn{PR-CPrefix}, and \(p_1\) is principal in both
    reductions).\\
    In this case, \(p_2=ap_2'\) with \(p''_1=p_2'p'_1\).
    Thus, we have:
    \begin{align*}
      & \tp_1 = (p'_1,ap_2',\ldots,p_n) \pred (p'_1,p_2',\ldots,p_n) \\
      & \tp_2=(p'_2p_1',ap_2',\ldots,p_n)\pred (p'_2p'_1,p'_2,\ldots,p_n)\preds
      (p'_1,p'_2,\ldots,p_n),
    \end{align*}
    as required. In the second step for \(\tp_2\),
    \((p'_2p'_1,p'_2,\ldots,p_n)=(p'_1,p'_2,\ldots,p_n)\) if \(p_2'=\epsilon\)
    and \((p'_2p'_1,p'_2,\ldots,p_n)\pred(p'_1,p'_2,\ldots,p_n)\) otherwise.
  \item Case \(\tp=(p_2p'_1,p_2,\ldots,p_n)\pred (p'_1,p_2,\ldots,p_n)=\tp_1\)
and \(\tp=(p_1,ap'_2,\ldots,p_n)\pred (p_1,p'_2,\ldots,p_n)=\tp_2\)
    (i.e., \(\tp\pred\tp_1\) uses \rn{PR-Prefix}, while
    \(\tp\pred \tp_2\) uses \rn{PR-CPrefix}, and the auxiliary element \(p_2\) in
    the former reduction is principal in the latter).\\
    In this case, \(p_2=ap_2'\).
    Thus, we have:
    \begin{align*}
      & \tp_1 = (p'_1,ap_2',\ldots,p_n) \pred (p'_1,p_2',\ldots,p_n)\\
      & \tp_2=(ap_2'p'_1,p_2',\ldots,p_n)\pred (p'_2p'_1,p'_2,\ldots,p_n)\preds
      (p'_1,p'_2,\ldots,p_n),
    \end{align*}
    as required.
    In the second step for \(\tp_2\),
    \((p'_2p'_1,p'_2,\ldots,p_n)=(p'_1,p'_2,\ldots,p_n)\) if \(p_2'=\epsilon\)
    and \((p'_2p'_1,p'_2,\ldots,p_n)\pred(p'_1,p'_2,\ldots,p_n)\) otherwise.
   \item Case where \(t=(p_2p_1',p_2,\ldots,p_n)\pred (p_1',p_2,\ldots,p_n)=t_1\) and
     \(t=(p_1''p_2,p_2,\ldots,p_n)\pred (p_1'',p_2,\ldots,p_n)=t_2\)
     (i.e., \(p_1\) is principal and \(p_2\) is auxiliary in both reductions, and \(t\pred t_1\) uses \rn{PR-Prefix}
     while \(t\pred t_2\) uses \rn{PR-Suffix}).
     In this case, \(p_2p_1'=p_1''p_2\).
    Thus, by Lemma~\ref{lem:conjugacy}, there exists \(\tp_3\) such that \(\tp_1\preds \tp_3\rpreds \tp_2\).
\item Case where \(t=(p_2p_1',p_2,\ldots,p_n)\pred (p_1',p_2,\ldots,p_n)=t_1\) and
     \(t=(p_1''p_3,p_2,p_3,\ldots,p_n)\pred (p_1'',p_2,p_3,\ldots,p_n)=t_2\)
     (i.e., \(p_1\) is principal in both reductions, and \(p_2\) is auxiliary in the former reduction while
     \(p_3\) is auxiliary in the latter;   \(t\pred t_1\) uses \rn{PR-Prefix}, \(t\pred t_2\) uses \rn{PR-Suffix}).
     In this case, \(p_2p_1'=p_1''p_3\); so, either (i) \(p_1''=p_2p_1'''\) and \(p_1'=p_1'''p_3\),
     or (ii) \(p_2=p_1''p_2'\) and \(p_3=p_2'p_1'\).
     In case (i), we have:
     \begin{align*}
       & t_1=(p_1'''p_3,p_2,p_3,\ldots,p_n)\pred (p_1''',p_2,p_3,\ldots,p_n)\\
       & t_2=(p_2p_1''',p_2,p_3,\ldots,p_n)\pred (p_1''',p_2,p_3,\ldots,p_n).
     \end{align*}
     In case (ii), we have:
     \begin{align*}
       & t_1=(p_1',p_1''p_2',p_2'p_1',\ldots,p_n)\pred
       (p_1',p_1''p_2',p_2',\ldots,p_n)\pred (p_1',p_1'',p_2',\ldots,p_n)\\
       & t_2=(p_1'',p_1''p_2',p_2'p_1',\ldots,p_n)\pred 
       (p_1'',p_2',p_2'p_1',\ldots,p_n)\pred (p_1'',p_2',p_1',\ldots,p_n)\\
      & (p_1',p_1'',p_2',\ldots,p_n)\tpequiv (p_1'',p_2',p_1',\ldots,p_n),
     \end{align*}
     as required.
     Here, notice that reductions are confluent only up to the permutation relation \(\tpequiv\).
   \item Case where
     \(t=(p_2p_1',p_2,\ldots,p_n)\pred (p_1',p_2,\ldots,p_n)=t_1\) and
     \(t=(p_1,p_2'p_3,p_3,\ldots,p_n)\pred (p_1,p'_2,p_3,\ldots,p_n)=t_2\)
     (i.e. the auxiliary element \(p_2\) in the former reduction is principal in the latter, and
     \(t\pred t_1\) uses \rn{PR-Prefix} while \(t\pred t_2\) uses \rn{PR-Suffix}).
     In this case, we have \(p_1=p_2p_1'\) and \(p_2=p_2'p_3\).
     Then, we have:
     \begin{align*}
       & t_1=(p_1',p_2'p_3,p_3,\ldots,p_n)\pred(p_1',p_2',p_3,\ldots,p_n)\\
       & t_2=(p_2'p_3p'_1,p'_2,p_3,\ldots,p_n)\pred
       (p_3p'_1,p'_2,p_3,\ldots,p_n)\pred (p'_1,p'_2,p_3,\ldots,p_n)
     \end{align*}
     as required.
  \end{itemize}
\end{proof}

Recall that \(\measure{(p_1,\ldots,p_k)}=|\pat_1\cdots\pat_k|+k\).
\begin{lemma}
  \label{lem:pred-decreases-measure}
  If \(\tp\pred \tp'\), then
  \(\measure{\tp}>\measure{\tp'}\).
\end{lemma}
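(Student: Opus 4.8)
The plan is a direct case analysis on the rule used to derive $\tp \pred \tp'$, checking in each of the three cases that $\measure{\tp'} < \measure{\tp}$. Write $\tp = (p_1,\ldots,p_n)$, so that by definition $\measure{\tp} = |p_1\cdots p_n| + n$.

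First, suppose \rn{PR-Prefix} was used, so $\tp' = (p_1,\ldots,p_{j-1},p'_j,p_{j+1},\ldots,p_n)$ with $p_j = p_i p'_j$ and $p_i \neq \epsilon$. Then $\tp'$ has the same arity $n$ as $\tp$, and $|p_1\cdots p_{j-1}\,p'_j\,p_{j+1}\cdots p_n| = |p_1\cdots p_n| - |p_i|$; since $|p_i| \geq 1$, this gives $\measure{\tp'} = \measure{\tp} - |p_i| < \measure{\tp}$. The case \rn{PR-CPrefix} is the identical computation with $|p_i|$ replaced by the length $1$ of the single letter $a$, yielding $\measure{\tp'} = \measure{\tp} - 1$.

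Finally, for \rn{PR-Epsilon} we have $p_j = \epsilon$ and $\tp' = (p_1,\ldots,p_{j-1},p_{j+1},\ldots,p_n)$. Here the total pattern length is unchanged, because $p_j$ contributes nothing to $|p_1\cdots p_n|$, but the arity drops from $n$ to $n-1$; hence $\measure{\tp'} = |p_1\cdots p_n| + (n-1) = \measure{\tp} - 1 < \measure{\tp}$. These three rules are exhaustive, so the claim follows. There is no real obstacle in this argument; the only point worth noting is that in the \rn{PR-Epsilon} case the strict decrease comes entirely from the ``$+\,n$'' summand in the definition of $\measure{\cdot}$ rather than from the concatenated length, which is precisely why that summand is built into the measure (and the same computation works verbatim for the additional rules \rn{PR-Postfix}, \rn{PR-CPostfix}, \rn{PR-RPrefix}, \rn{PR-RPostfix} introduced in Section~\ref{sec:ext}, since each of those also either removes at least one letter from one component or deletes an empty component).
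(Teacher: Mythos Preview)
Your proof is correct and follows exactly the approach the paper takes: a direct case analysis on the rule used for $\tp\pred\tp'$. The paper merely states that the lemma ``follows immediately by the case analysis,'' whereas you spell out the arithmetic in each case; your additional remark on why the ``$+\,n$'' summand is needed and on the extension rules is accurate but not required.
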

\begin{proof}
  This follows immediately by the case analysis on the rule used for deriving
  \(\tp\pred \tp'\). 
\end{proof}  

\begin{lemma}
  \label{lem:pred-tpequiv}
  If \(t_1\tpequiv \pred t_2\), then \(t_1\pred \tpequiv t_2\).
\end{lemma}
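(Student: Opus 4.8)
The plan is to do a case analysis on the rule witnessing the second step. Suppose $t_1 \tpequiv s$ and $s \pred t_2$. Since $t_1 \tpequiv s$, the tuples $t_1$ and $s$ are two arrangements of one and the same indexed family of patterns; writing $n=|t_1|=|s|$, there is a permutation $\rho$ of $\{1,\dots,n\}$ with $s=(p_1,\dots,p_n)$ and $t_1=(p_{\rho(1)},\dots,p_{\rho(n)})$, i.e.\ $(t_1)_k=p_{\rho(k)}$. The idea is simply that any rule applicable to $s$ is applicable to $t_1$ at the $\rho$-transported positions, and the outcome is again a permutation of $t_2$.

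First, suppose $s\pred t_2$ by \rn{PR-Prefix}, with principal index $j$ and auxiliary index $i$: $p_j=p_i\,p'_j$ with $p_i\neq\epsilon$, and $t_2$ is $s$ with its $j$-th component replaced by $p'_j$. In $t_1$ the pattern $p_j$ occurs at position $\rho^{-1}(j)$ and $p_i$ at position $\rho^{-1}(i)$, and these two positions are distinct because $i\neq j$ and $\rho$ is a bijection. Hence \rn{PR-Prefix} applies to $t_1$ with principal index $\rho^{-1}(j)$ and auxiliary index $\rho^{-1}(i)$, giving $t_1\pred t''$ where $t''$ is $t_1$ with the pattern at position $\rho^{-1}(j)$ replaced by $p'_j$. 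Now $t''$ and $t_2$ are two arrangements of the same family $(p_1,\dots,p_{j-1},p'_j,p_{j+1},\dots,p_n)$, so $t''\tpequiv t_2$, as required.

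The remaining cases are strictly easier. For \rn{PR-CPrefix} there is only a principal index $j$ (with $p_j=a\,p'_j$, $a\in\Alpha$), so no distinctness check is needed; apply \rn{PR-CPrefix} to $t_1$ at position $\rho^{-1}(j)$ and argue as before. For \rn{PR-Epsilon} the principal component is $p_j=\epsilon$; the component of $t_1$ at position $\rho^{-1}(j)$ equals $\epsilon$, so \rn{PR-Epsilon} deletes it, and the result is an arrangement of $(p_1,\dots,p_{j-1},p_{j+1},\dots,p_n)$, hence $\tpequiv t_2$. In the extended systems of Section~\ref{sec:ext}, the further rules \rn{PR-Postfix}, \rn{PR-CPostfix}, \rn{PR-RPrefix}, \rn{PR-RPostfix} are handled exactly like \rn{PR-Prefix}/\rn{PR-CPrefix}, since each again modifies exactly one principal component and reads at most one auxiliary component.

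There is no substantial obstacle here: the only thing requiring care is the index bookkeeping, i.e.\ checking that the single principal (and possible auxiliary) index of the rewrite step can be transported along the permutation $\rho$ without collision and that the transported step produces a tuple with the same multiset of components as $t_2$. Since permutations preserve distinctness of positions and every rule touches a bounded number of components, the translation from $s$ to $t_1$ is immediate.
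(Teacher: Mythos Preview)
Your proof is correct and follows essentially the same approach as the paper's: the paper's own proof is the one-liner ``This follows immediately from the definition of \(\pred\),'' and your case analysis simply spells out why---namely, that each rewrite rule refers only to the contents of finitely many components, so the same rule fires at the permuted positions in \(t_1\) and the result is a permutation of \(t_2\). The only quibble is that you assert \(i\neq j\) in the \rn{PR-Prefix} case as if it were a premise of the rule; it is not stated explicitly in the paper, but even if \(i=j\) were allowed your transport along \(\rho\) would still work, so this does not affect correctness.
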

\begin{proof}
  This follows immediately from the definition of 
  \(\pred \). 
\end{proof}
\begin{lemma}[Church-Rosser property of $\pred$ up to $\tpequiv$]
  \label{lem:pred-CR}
If \(\tp_{L,1}\rpreds \tp_L \tpequiv \tp_R \preds \tp_{R,1}\),
  then there exist \(\tp_{L,2}\) and \(\tp_{R,2}\) such that
  \(\tp_{L,1}\preds \tp_{L,2} \tpequiv \tp_{R,2}\rpreds \tp_{R,1}\).
\end{lemma}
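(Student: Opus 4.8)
The plan is to derive Lemma~\ref{lem:pred-CR} from three facts about $\pred$ that are already in hand: (a) $\pred$ is Noetherian, since every $\pred$-step strictly decreases $\measure{\cdot}$ (Lemma~\ref{lem:pred-decreases-measure}) while a $\tpequiv$-step, being a permutation of components, leaves $\measure{\cdot}=|\pat_1\cdots\pat_n|+n$ unchanged; (b) $\pred$ is weakly confluent up to $\tpequiv$ (Lemma~\ref{lem:pred-wconf}); and (c) $\tpequiv$ is a \emph{strong bisimulation} for $\pred$, i.e. if $\tp\tpequiv\tp'$ and $\tp\pred\tp_1$ then $\tp'\pred\tp_1'$ for some $\tp_1'\tpequiv\tp_1$ --- which is exactly Lemma~\ref{lem:pred-tpequiv} read with the two patterns exchanged, using that $\tpequiv$ is symmetric, and symmetrically for $\rpred$. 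Together (a)--(c) are the hypotheses of the rewriting-modulo-equivalence version of Newman's lemma (Huet), whose conclusion is Church-Rosser modulo $\tpequiv$, a reformulation of the claim; I would nonetheless give a self-contained argument as follows.

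First I would lift the bisimulation to many-step reductions: by induction on the length of $\tp\preds\tp_1$, applying Lemma~\ref{lem:pred-tpequiv} at each step, one obtains \textbf{(B)}: if $\tp\tpequiv\tp'$ and $\tp\preds\tp_1$, then $\tp'\preds\tp_1'$ for some $\tp_1'\tpequiv\tp_1$ (and the dual statement with $\rpreds$). Granting (B), it suffices to prove confluence of $\pred$ up to $\tpequiv$ \emph{from a single common ancestor}, namely \textbf{(*)}: if $\tp\preds\tp_1$ and $\tp\preds\tp_2$, then $\tp_1\preds\hat\tp_1\tpequiv\hat\tp_2\rpreds\tp_2$ for some $\hat\tp_1,\hat\tp_2$. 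Indeed, given the hypothesis $\tp_{L,1}\rpreds\tp_L\tpequiv\tp_R\preds\tp_{R,1}$ of the lemma, apply (B) to $\tp_L\tpequiv\tp_R$ and $\tp_L\preds\tp_{L,1}$ to get $\tp_R\preds w$ with $w\tpequiv\tp_{L,1}$; now $w$ and $\tp_{R,1}$ are both $\preds$-reducts of $\tp_R$, so (*) supplies $w\preds\hat\tp_1\tpequiv\hat\tp_2\rpreds\tp_{R,1}$; finally (B) applied to $\tp_{L,1}\tpequiv w$ and $w\preds\hat\tp_1$ gives $\tp_{L,1}\preds z$ with $z\tpequiv\hat\tp_1$, so that $\tp_{L,1}\preds z\tpequiv\hat\tp_1\tpequiv\hat\tp_2\rpreds\tp_{R,1}$, and collapsing the two middle $\tpequiv$-steps by transitivity yields the required diagram with $\tp_{L,2}=z$ and $\tp_{R,2}=\hat\tp_2$.

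It then remains to prove (*) by Noetherian induction on $\measure{\tp}$ (legitimate since $\measure{\cdot}$ ranges over the naturals). The cases $\tp=\tp_1$ or $\tp=\tp_2$ are immediate; otherwise write $\tp\pred u_i\preds\tp_i$ for $i=1,2$ and apply Lemma~\ref{lem:pred-wconf} to the peak $u_1\rpred\tp\pred u_2$ to get $u_1\preds a_1\tpequiv a_2\rpreds u_2$. Since $\measure{u_1},\measure{u_2}<\measure{\tp}$ and $\measure{a_2}\le\measure{u_2}<\measure{\tp}$ (because $u_2\preds a_2$ and $\measure{\cdot}$ is non-increasing along $\preds$), the induction hypothesis applies to the peaks $\tp_1\rpreds u_1\preds a_1$ and $\tp_2\rpreds u_2\preds a_2$, giving valleys $\tp_1\preds b_1\tpequiv c_1\rpreds a_1$ and $\tp_2\preds b_2\tpequiv c_2\rpreds a_2$; transporting $a_1\preds c_1$ along $a_1\tpequiv a_2$ via (B) produces a further peak below $a_2$, to which the induction hypothesis applies once more, and stitching the three valleys together with (B) and transitivity of $\tpequiv$ produces a common $\tpequiv$-joinable reduct of $\tp_1$ and $\tp_2$.

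The only genuinely delicate part --- the expected main obstacle --- is the bookkeeping of the $\tpequiv$-steps: each valley delivered by Lemma~\ref{lem:pred-wconf} carries a $\tpequiv$-step in its \emph{middle}, and every invocation of the induction hypothesis reintroduces one, so these steps must be shuttled repeatedly to the outer ends of a valley, which is precisely the role of (B). Verifying that all patterns occurring in the nested induction have $\measure{\cdot}$ strictly below $\measure{\tp}$ is then routine, once one records that $\measure{\cdot}$ is invariant under $\tpequiv$ and non-increasing along both $\preds$ and $\rpreds$.
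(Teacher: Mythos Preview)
Your argument is correct and follows essentially the same route as the paper's: both prove the Newman--Huet confluence-modulo-$\tpequiv$ result by well-founded induction on $\measure{\cdot}$, using three nested appeals to the induction hypothesis and repeatedly invoking Lemma~\ref{lem:pred-tpequiv} (your (B)) to push $\tpequiv$-steps past $\preds$. The only organizational difference is that you first factor the statement through the single-source confluence property (*), whereas the paper inducts directly on the two-source version; this is a cosmetic reshuffling of the same diagram chase.
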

\begin{proof}
This follows immediately from 
the standard fact that weak confluence (Lemma~\ref{lem:pred-wconf}) and strong normalization (implied by
Lemma~\ref{lem:pred-decreases-measure}) implies the Church-Rosser property.
For completeness, we provide a proof.
  We show the lemma by induction on \(\#\tp_L (= \#\tp_R)\).
  Suppose \(\tp_{L,1}\rpreds \tp_L \tpequiv \tp_R \preds \tp_{R,1}\).
  If \(\tp_L=\tp_{L,1}\) or \(\tp_R=\tp_{R,1}\), the result follows immediately by Lemma~\ref{lem:pred-tpequiv}.
  Otherwise, we have \(\tp_L\pred \tp'_{L,1}\preds \tp_{L,1}\) and \(\tp_R\pred \tp'_{R,1}\preds \tp_{R,1}\).
By Lemma~\ref{lem:pred-wconf}, there exist \(\tp''_{L,1}\) and \(\tp''_{R,1}\) such that
  \(\tp_{L,1}'\preds \tp''_{L,1}\tpequiv \tp_{R,1}''\rpreds \tp_{R,1}'\).
  By the induction hypothesis, we also have:
  \begin{align*}
  &  \tp_{L,1}\preds \tp_{L,3}\tpequiv \tp'_{L,3}\rpreds \tp''_{L,1}\\
    &  \tp_{R,1}\preds \tp_{R,3}\tpequiv \tp'_{R,3}\rpreds \tp''_{R,1}.
  \end{align*}
  By applying the induction hypothesis to
  \(\tp'_{L,3}\rpreds \tp''_{L,1}\tpequiv \tp''_{R,1}\preds \tp'_{R,3}\), we obtain
  \(\tp'_{L,3} \preds \tp'_{L,2} \tpequiv \tp'_{R,2} \rpreds \tp'_{R,3}\).
As \(\tp_{L,1}\preds \tp_{L,3}\tpequiv \tp'_{L,3}\preds \tp'_{L,2}\)
  and
\(\tp_{R,1}\preds \tp_{R,3}\tpequiv \tp'_{R,3}\preds \tp'_{R,2}\),
  by Lemma~\ref{lem:pred-tpequiv}, 
  there exist \(\tp_{L,2}\) and \(\tp_{R,2}\) such that
  \(\tp_{L,1}\preds \tp_{L,2}\tpequiv \tp'_{L,2}\) and
  \(\tp_{R,1}\preds \tp_{R,2}\tpequiv \tp'_{R,2}\).
  Thus, we have 
  \[\tp_{L,1}\preds \tp_{L,2} \tpequiv \tp'_{L,2}\tpequiv \tp'_{R,2}\tpequiv \tp_{R,2}\rpreds \tp_{R,1},\]
as required. 
\end{proof}

\subsubsection{Correspondence between \(\red\) and \(\pred\)}
\label{sec:red-vs-pred}

\begin{lemma}
  \label{lem:red-implies-pred}
  If \((\seq{x},\Subst{\seq{x}}{\dat})\red (\tp,\Subst{\seq{y}}{\dat'})\),
  then \(\tp\pred \seq{y}\). \end{lemma}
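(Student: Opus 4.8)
The plan is a direct case analysis on which of the three rules \rn{R-Prefix}, \rn{R-CPrefix}, \rn{R-Epsilon} is applied in the single step $(\seq{x},\Subst{\seq{x}}{\dat})\red (\tp,\Subst{\seq{y}}{\dat'})$. The key observation that makes this trivial is that the pattern on the left-hand side is the \emph{trivial} tuple $\seq{x}=(x_1,\ldots,x_n)$ of distinct variables, so after a single rewrite step the pattern $\tp$ has a very rigid syntactic shape; for each rule I exhibit the corresponding \rn{PR-xx} rule that reduces $\tp$ back to the trivial tuple $\seq{y}$ produced by that very step.

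Concretely: in the \rn{R-Prefix} case, suppose column $i$ is a prefix of column $j$. Then $\tp = [x_i x_j'/x_j]\seq{x} = (x_1,\ldots,x_{j-1}, x_i x_j', x_{j+1},\ldots,x_n)$ and $\seq{y} = (x_1,\ldots,x_{j-1}, x_j', x_{j+1},\ldots,x_n)$. Reading $\tp$ as $(p_1,\ldots,p_n)$, its $j$-th component is $p_j = p_i\cdot p_j'$ with $p_i = x_i$ and $p_j' = x_j'$, and $p_i = x_i \neq \epsilon$ since a single variable has length $1$; hence \rn{PR-Prefix} applies and gives $\tp \pred \seq{y}$. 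The \rn{R-CPrefix} case is analogous: $\tp = (x_1,\ldots,x_{j-1}, a x_j', x_{j+1},\ldots,x_n)$ with $a\in\Alpha$ and $\seq{y} = (x_1,\ldots,x_{j-1}, x_j', x_{j+1},\ldots,x_n)$, so \rn{PR-CPrefix} (with the same constant $a$) yields $\tp \pred \seq{y}$. The \rn{R-Epsilon} case is even simpler: $\tp = (x_1,\ldots,x_{j-1}, \epsilon, x_{j+1},\ldots,x_n)$ and $\seq{y} = (x_1,\ldots,x_{j-1}, x_{j+1},\ldots,x_n)$, so \rn{PR-Epsilon} erases the empty component and gives $\tp \pred \seq{y}$.

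I do not anticipate any real obstacle. The one subtlety worth stating explicitly is that the \emph{data}-level side condition $\dat[*][i]\neq\seq{\epsilon}$ of \rn{R-Prefix} is not the hypothesis needed to fire \rn{PR-Prefix}; what is needed there is the \emph{pattern}-level condition $p_i \neq \epsilon$, and this holds automatically because $p_i$ is the single variable $x_i$. (This lemma only treats one $\red$ step from the initial state; the correspondence for longer reduction sequences is recovered afterwards by combining it with Lemma~\ref{lem:rewriting-decomp} and the closure of $\pred$ under substitution, Lemma~\ref{lem:pred-closed-under-subst}.)
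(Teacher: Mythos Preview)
Your proposal is correct and follows exactly the same approach as the paper, which simply states that the lemma ``follows by a straightforward case analysis on $(\seq{x},\Subst{\seq{x}}{\dat})\red (\tp,\Subst{\seq{y}}{\dat'})$.'' Your write-up actually supplies more detail than the paper does, including the useful observation that the side condition $p_i\neq\epsilon$ in \rn{PR-Prefix} is automatic because $p_i$ is a single variable.
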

\begin{proof}
  This follows by a straightforward case analysis on 
  \((\seq{x},\Subst{\seq{x}}{\dat})\red (\tp,\Subst{\seq{y}}{\dat'})\). 
 \end{proof}
\begin{lemma}
  \label{lem:pred-implies-red}
  \label{lem:completeness}
  Suppose \(\dat \models \tp\).
  If \(\tp \pred\tp'\),
  then \((\seq{x},\Subst{\seq{x}}{\dat})\red (\tp'',\Subst{\seq{y}}{\dat'})\) for some \(\tp''\), \(\seq{y}\), and \(\dat'\).
  Furthermore, if \(\dat\smodels \tp\), then \(\tp=[\tp'/\seq{y}]\tp''\) and \(\dat'\smodels\tp'\).
\end{lemma}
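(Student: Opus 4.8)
The plan is a case analysis on the last rule applied in the derivation of $\tp \pred \tp'$. In each case I would produce an explicit single application of a $\red$-rule to the initial state $(\seq{x},\Subst{\seq{x}}{\dat})$ and then verify the two conclusions by direct computation. Throughout, fix a witness substitution $\Theta$ for $\dat\models\tp$, so that $\Theta\tp = \dat$; for the "furthermore" part take $\Theta$ to be a witness for $\dat\smodels\tp$, i.e.\ one sending no variable of $\tp$ to $\seq{\epsilon}$. The whole argument is local to one step, so it uses only the shape of the rules, not Lemmas~\ref{lem:rewriting-decomp} or \ref{lem:red-invariant}.

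For the two degenerate cases I expect little trouble. If $\tp\pred\tp'$ is by \rn{PR-Epsilon} (erasing an $\epsilon$ in position $j$), then $\dat[*][j] = \Theta(\epsilon) = \seq{\epsilon}$, so \rn{R-Epsilon} fires on column $j$. If it is by \rn{PR-CPrefix} with $p_j = a\,p_j'$, then $\dat[*][j] = a\cdot(\Theta p_j')$, so \rn{R-CPrefix} fires on column $j$ with residual column $\seq{s} = \Theta p_j'$. In both cases, writing $(\tp'',\Subst{\seq{y}}{\dat'})$ for the result, a one-line substitution check gives $[\tp'/\seq{y}]\tp'' = \tp$ (undo the removal, resp.\ the prepending of $a$, in position $j$), and since $\FV(\tp')\subseteq\FV(\tp)$ and $\Theta\tp' = \dat'$ by construction, the restriction of $\Theta$ to $\FV(\tp')$ witnesses $\dat'\smodels\tp'$ whenever $\Theta$ was an $\smodels$-witness. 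Note these cases use no side condition of the form $\dat[*][i]\neq\seq{\epsilon}$.

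The substantive case is \rn{PR-Prefix}: $\tp = (p_1,\dots,p_n)$ with $p_j = p_i\,p_j'$, $p_i\neq\epsilon$, and $\tp'$ obtained by replacing $p_j$ with $p_j'$. From $\Theta\tp=\dat$ I read off $\dat[*][i] = \Theta p_i$ and $\dat[*][j] = \dat[*][i]\cdot\seq{s}$ with $\seq{s} = \Theta p_j'$. If $\dat[*][i]\neq\seq{\epsilon}$, then \rn{R-Prefix} applies to columns $i,j$; the induced $(\tp'',\seq{y},\dat')$ satisfies $[\tp'/\seq{y}]\tp'' = \tp$ (the fresh variable $x_j'$ maps to $p_j'$ and $x_i$ to $p_i$, so position $j$ becomes $p_i p_j' = p_j$) and $\dat'\smodels\tp'$ via $\Theta|_{\FV(\tp')}$, exactly as before. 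If instead $\dat[*][i] = \seq{\epsilon}$, then \rn{R-Epsilon} applies to column $i$, which is already enough for the first assertion of the lemma, since that asks only for \emph{some} reduction step.

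The one delicate point, which I regard as the main obstacle, is that under the stronger hypothesis $\dat\smodels\tp$ the second subcase of \rn{PR-Prefix} cannot occur, so that the step we exhibit really is the one matching $\tp\pred\tp'$. This is exactly where $\smodels$ (rather than $\models$) is needed: since $p_i\neq\epsilon$, either $p_i$ contains a letter, in which case every component of $\Theta p_i$ has positive length; or the first symbol of $p_i$ is a variable $\alpha$, and $\Theta(\alpha)\neq\seq{\epsilon}$ because $\Theta$ is an $\smodels$-witness, so some component of $\Theta p_i$ is nonempty. Either way $\dat[*][i] = \Theta p_i\neq\seq{\epsilon}$, putting us in the first subcase, and the proof is complete. (The example $\tp=(x,x)$, $\dat=(\epsilon\ \epsilon)$ shows the bad subcase genuinely arises under $\models$ alone, so the two-part formulation is sharp.)
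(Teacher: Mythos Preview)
Your proposal is correct and follows essentially the same case analysis as the paper's proof: split on the rule deriving $\tp\pred\tp'$, fire the corresponding $\red$-rule, and in the \rn{PR-Prefix} case handle the degenerate subcase $\dat[*][i]=\seq{\epsilon}$ by \rn{R-Epsilon} for the first assertion and rule it out via $\smodels$ for the second. Your justification of why $\dat\smodels\tp$ forces $\dat[*][i]\neq\seq{\epsilon}$ is slightly more explicit than the paper's (which simply asserts it), but the argument is the same.
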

\begin{proof}
   Suppose \(\dat,\Theta \models \tp\) and \(\tp\pred \tp'\).
   We perform case analysis on the rule used for deriving \(\tp \pred\tp'\).
  \begin{itemize}
  \item  Case \rn{PR-Prefix}: 
     We may assume, without loss of generality,
     \(\tp=(p_2p_1,p_2,\ldots,p_n)\) and \(\tp'=(p_1,p_2,\ldots,p_n)\), with \(p_2\ne \epsilon\).
    By \(\dat,\Theta\models \tp\), we have \(\dat[*][1]=\dat[*][2]\cdot \seq{s}\)
where \(\seq{s}=\Theta p_1\).
    If \(\dat[*][2]=\seq{\epsilon}\), then \rn{R-Epsilon} is applicable to 
    \((\seq{x},\Subst{\seq{x}}{\dat})\).
    Otherwise,
    let \(\seq{y}=(x_1',x_2,\ldots,x_n)\) with
    \(\tp_1=(x_2x_1',x_2,\ldots,x_n)\) and
    \(\dat'=(\seq{s},\dat[*][2],\ldots,\dat[*][n])\).
    Then, we have
    \((\seq{x},\Subst{\seq{x}}{\dat})\red (\tp_1,\Subst{\seq{y}}{\dat'})\).
    Furthermore, if \(\dat,\Theta\smodels \tp\), then \(\dat[*][2]\ne\seq{\epsilon}\).
    Therefore, the latter case applies and we have 
    \(\tp=[\tp'/\seq{y}]\tp_1\) and \(\dat',\Theta\models \tp'\) as required.
  \item  Case \rn{PR-CPrefix}: 
    In this case \(\tp=(ap_1,p_2,\ldots,p_n)\) and \(\tp'=(p_1,\ldots,p_n)\).
    By \(\dat,\Theta \models \tp\), we have \(\dat[*][i]=a\cdot \seq{s}\). Thus, \rn{R-CPrefix} is applicable to obtain
    \[(\seq{x},\Subst{\seq{x}}{\dat})\red (\tp'', \Subst{(x'_1,x_2,\ldots,x_n)}{\dat'})\]
    for \(\tp''=(ax'_1,x_2,\ldots,x_n)\), with
     \(\dat'[*][i]=\seq{s}\) and \(\dat'[*][j]=\dat[*][j]\) for \(j\ne i\).
    Thus, we also have \(\tp=[\tp'/\seq{y}]\tp''\), and \(\dat,\Theta\smodels \tp\) implies \(\dat',\Theta\smodels\tp'\).
  \item  Cases for \rn{PR-Suffix} and \rn{PR-CSuffix}: analogous to the cases for \rn{PR-Prefix} and \rn{PR-CPrefix} respectively.
  \item  Case \rn{PR-Epsilon}:
    We may assume, without loss of generality,
    \(\tp=(\epsilon,p_2,\ldots,p_n)\) and \(\tp'=(p_2,\ldots,p_n)\).
    The required result holds for \(\tp''=(\epsilon,x_2,\ldots,x_n)\), \(\dat'=\remove{\dat}{1}\) and \(\seq{y}=(x_2,\ldots,x_n)\).
\end{itemize}
 \end{proof}

\subsubsection{Proof of Soundness (Theorems~\ref{th:soundness} and \ref{th:soundness2})}

Theorem~\ref{th:soundness} is a trivial consequence of Lemma~\ref{lem:red-invariant}.
\begin{proof}[Proof of Theorem~\ref{th:soundness}]
  Suppose
  \((\seq{x},\Subst{\seq{x}}{\dat})\reds (t,\Dat)\).
  By Lemma~\ref{lem:red-invariant}, we have
  \(\dat = \Subst{\seq{x}}{\dat}\seq{x}= \Dat t\), i.e., \(\dat,\Dat\models t\) as required.
\end{proof}

\begin{proof}[Proof of Theorem~\ref{th:soundness2}]
  This follows by induction on the length of  the rewriting sequence
  \((\seq{x},\Subst{\seq{x}}{\dat})\red^* (t,\Dat)\).
  The base case where
  \((\seq{x},\Subst{\seq{x}}{\dat})= (t,\Dat)\) follows immediately.
  Suppose \[(\seq{x},\Subst{\seq{x}}{\dat})\red (t_1,\Subst{\seq{y}}{\dat_1})
  \red^k  (t,\Dat).\]
  By Lemma~\ref{lem:rewriting-decomp}, there exist \(\tp'\) such that
  \((\seq{y},\Subst{\seq{y}}{\dat_1})\red^k (\tp',\Dat)\) with \(t=[\tp'/\seq{y}]t_1\).
  By the induction hypothesis, \(\tp'\in \Tp\).
  By Lemma~\ref{lem:red-implies-pred}, we have \(t_1\pred \seq{y}\).
  By Lemma~\ref{lem:pred-closed-under-subst},
  we have \(\tp=[\tp'/\seq{y}]\tp_1\preds [\tp'/\seq{y}]\seq{y}=\tp'\).
  Thus, we have \(\tp\in \Tp\) as required. 
\end{proof}

\subsubsection{Proofs of Completeness (Theorems~\ref{th:completeness} and \ref{th:completeness-ctpinf})}

\begin{proof}[Proof of Theorem~\ref{th:completeness}]
  For (I), suppose \(\dat \smodels t\) and \(t\in \Tp_n\). By the latter condition, there exists a rewriting sequence
  \(t\preds \seq{y}\) for mutually distinct variables \(\seq{y}\).
  We show 
  \((\seq{x},\Subst{\seq{x}}{\dat})\reds (t,\Dat)\)
  for some \(\Dat\) by induction on the length of the rewriting sequence \(t\preds \seq{y}\).
  The base case where \(t=\seq{y}\) follows immediately.
  Suppose
  \[ t \pred t' \preds \seq{y}.\]
  By Lemma~\ref{lem:completeness}, we have
  \begin{align*}
    & (\seq{x},\Subst{\seq{x}}{\dat})\red (\tp_1,\Subst{\seq{z}}{\dat'})\qquad \tp = [\tp'/\seq{z}]\tp_1 \qquad \dat'\smodels \tp'.
  \end{align*}
  By the induction hypothesis, we have
  \begin{align*}
    & (\seq{z},\Subst{\seq{z}}{\dat'})\reds (\tp',\Dat)
  \end{align*}
  for some \(\Dat\).
  By Lemma~\ref{lem:rewriting-decomp},
  we have \((\tp_1,\Subst{\seq{z}}{\dat'})\reds ([\tp'/\seq{z}]\tp_1,\Dat)=(\tp, \Dat)\).
  Thus, we have
  \((\seq{x},\Subst{\seq{x}}{\dat})\reds (\tp,\Dat)\) as required.

  For (II), suppose \(\dat,\Dat\models \tp\).
  Let \(\seq{x} = x_1,\ldots,x_\ell\) be the variables \(x\) such that
  \(\Dat(x)=\seq{\epsilon}\).
  Then we have \(\dat=\Dat([\seq{\epsilon}/\seq{x}]\tp)\),
  i.e., \(\dat\smodels [\seq{\epsilon}/\seq{x}]\tp\).
  By (I), we have \((\seq{y},\Subst{\seq{y}}{\dat})\reds ([\seq{\epsilon}/\seq{x}]\tp,\Theta')\) for some \(\Theta'\).
  Let \(\tp'\) be an STP such that \(([\seq{\epsilon}/\seq{x}]\tp,\Theta')\reds (\tp',\Theta'')\nred\).
  Then \(\lang(\tp')\subseteq \lang([\seq{\epsilon}/\seq{x}]\tp) \subseteq \lang(\tp)\) as required.
\end{proof}

\begin{proof}[Proof of Theorem~\ref{th:completeness-ctpinf}]
  The condition \(\dat\models \CTPinf(\dat)\) follows immediately from Theorem~\ref{th:soundness}.
  
  Suppose \(\ctp= \tp_1\land \cdots \land \tp_m\) is a CSTP and \(\dat \models \ctp\).
  Then there exists \(\Theta_k\) such that \(\Theta_k\tp_k=\dat\) for each \(k\in\set{1,\ldots,m}\).
  Let \(\seq{x} = x_1,\ldots,x_\ell\) be the variables \(x\) such that
  \(\Theta_k(x)=\seq{\epsilon}\).
  Then we have \(\dat=\Theta_k([\seq{\epsilon}/\seq{x}]\tp_k)\),
  i.e., \(\dat\smodels [\seq{\epsilon}/\seq{x}]\tp_k\).
  By Theorem~\ref{th:completeness}, we have \((\seq{y},\Subst{\seq{y}}{\dat})\reds ([\seq{\epsilon}/\seq{x}]\tp_k,\Theta)\).
  Let \(\tp'_k\) be an STP such that \(([\seq{\epsilon}/\seq{x}]\tp_k,\Theta)\reds (\tp'_k,\Theta')\nred\).
  Then \(\tp'_k\) is a conjunct of
    \(\CTPinf(\dat)\) and \(\lang(\tp'_k)\subseteq \lang([\seq{\epsilon}/\seq{x}]\tp_k)\).
    Thus, we have \(\lang(\CTPinf(\dat))\subseteq \lang(\tp'_k)\subseteq
    \lang([\seq{\epsilon}/\seq{x}]\tp_k) \subseteq \lang(\tp_k)\) for every \(k\in\set{1,\ldots,m}\),
    which implies \(\lang(\CTPinf(\dat))\subseteq \lang(\ctp)\).
\end{proof}
\subsubsection{Proof of Minimality (Theorem~\ref{th:minimality})}

We prepare a few lemmas.
\begin{lemma}
  \label{lem:minimality-base}
  If \((\seq{x},\Subst{\seq{x}}{\dat})\nred\) and \(\dat\models \tp\) with \(\tp\in \Tp_n\),
  then \(\tp\) is of the form \((x_1,\ldots,x_n)\) where \(x_1,\ldots,x_n\) are mutually distinct variables.
\end{lemma}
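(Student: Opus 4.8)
The plan is to argue by contradiction, reducing the claim to a direct application of Lemma~\ref{lem:pred-implies-red} (the $\pred$-vs-$\red$ correspondence used for completeness). So I would assume that $\tp$ is \emph{not} of the form $(x_1,\ldots,x_n)$ with $x_1,\ldots,x_n$ mutually distinct, and derive that some rewrite rule is in fact applicable to $(\seq{x},\Subst{\seq{x}}{\dat})$, contradicting the hypothesis $(\seq{x},\Subst{\seq{x}}{\dat})\nred$.

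First I would show that, under this assumption, $\tp$ admits a $\pred$-step, i.e.\ $\tp\pred\tp'$ for some $\tp'$. Since $\tp\in\Tp_n$, by definition $|\tp|=n$ and there is a reduction $\tp\preds(y_1,\ldots,y_{n'})$ with $y_1,\ldots,y_{n'}$ mutually distinct. This reduction cannot have length $0$: otherwise $\tp=(y_1,\ldots,y_{n'})$, and since $|\tp|=n$ and $\shapeof{\dat}=(m,n)$ this forces $n'=n$ and $\tp=(y_1,\ldots,y_n)$ with distinct variables, contradicting the assumption. Hence the reduction has length at least $1$, and its first step is the desired $\tp\pred\tp'$.

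Then I would invoke Lemma~\ref{lem:pred-implies-red}: from $\dat\models\tp$ (a hypothesis of the lemma) and $\tp\pred\tp'$, it yields $(\seq{x},\Subst{\seq{x}}{\dat})\red(\tp'',\Subst{\seq{z}}{\dat'})$ for some $\tp''$, $\seq{z}$, and $\dat'$. This directly contradicts $(\seq{x},\Subst{\seq{x}}{\dat})\nred$, so $\tp$ must be of the stated form. The argument is short, and I do not expect a genuine obstacle; the only point requiring care is the arity bookkeeping in the middle step — the hypotheses $\tp\in\Tp_n$ and $\shapeof{\dat}=(m,n)$ together pin the arity of $\tp$ to exactly $n$, so that a $\pred$-normal form inside $\Tp_n$ is necessarily the \emph{full} trivial pattern $(x_1,\ldots,x_n)$ and not a trivial pattern of smaller arity (which is why the negation of the conclusion really does rule out $\tp$ being in $\pred$-normal form). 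Note also that only the $\models$ part of Lemma~\ref{lem:pred-implies-red} is needed, not the stronger $\smodels$ conclusion.
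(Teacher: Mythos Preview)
Your proposal is correct and follows essentially the same approach as the paper's own proof: assume $\tp$ is not of the trivial form, use $\tp\in\Tp_n$ to obtain a $\pred$-step, then invoke Lemma~\ref{lem:pred-implies-red} to produce a $\red$-step contradicting the hypothesis. Your version is in fact more explicit than the paper's about the arity bookkeeping (why a $\pred$-normal form in $\Tp_n$ must have exactly $n$ distinct variables), which the paper leaves implicit.
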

\begin{proof}
  If \(\tp\in \Tp_n\) is not of the given form, then \(\tp\pred \tp'\) for some \(\tp'\).
  By Lemma~\ref{lem:pred-implies-red},
  \((\seq{x},\Subst{\seq{x}}{\dat})\red (\tp'',\Dat)\) for some \(\tp''\) and \(\Dat\),
  which contradicts the assumption \((\seq{x},\Subst{\seq{x}}{\dat})\nred\). 
\end{proof}

The following is a key lemma.
\begin{lemma}
  \label{lem:pred-commut}
  Suppose \((p_1,\ldots,p_n)\pred (p'_1,\ldots,p'_m)\).
  Then, \((p_1,\ldots,p_n)\in \Tp_n\) if and only if
  \((p'_1,\ldots,p'_m)\in \Tp_m\).
\end{lemma}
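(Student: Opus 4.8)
The plan is to prove the two implications separately; the backward direction is essentially immediate, and the forward direction rests on the Church--Rosser property of $\pred$ up to $\tpequiv$ (Lemma~\ref{lem:pred-CR}) together with a small observation about normal forms.

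For the \emph{if} direction, suppose $(p'_1,\ldots,p'_m)\in\Tp_m$, so that $(p'_1,\ldots,p'_m)\preds (x_1,\ldots,x_k)$ for some mutually distinct variables $x_1,\ldots,x_k$. Prepending the given step $(p_1,\ldots,p_n)\pred (p'_1,\ldots,p'_m)$ yields $(p_1,\ldots,p_n)\preds (x_1,\ldots,x_k)$, and since $(p_1,\ldots,p_n)$ has arity $n$, this is exactly what is needed to conclude $(p_1,\ldots,p_n)\in\Tp_n$.

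For the \emph{only if} direction, I would first record that every trivial pattern $(x_1,\ldots,x_k)$ with pairwise-distinct variables is a $\pred$-normal form: a short inspection of the three rules shows that none can fire, since $p_j=p_i\cdot p'_j$ with $p_i\ne\epsilon$, $p_j=a\cdot p'_j$, and $p_j=\epsilon$ are all impossible when each component is a single, pairwise-distinct variable. Now assume $(p_1,\ldots,p_n)\in\Tp_n$, i.e.\ $(p_1,\ldots,p_n)\preds (x_1,\ldots,x_k)$ with the $x_i$ distinct. Together with $(p_1,\ldots,p_n)\pred (p'_1,\ldots,p'_m)$ (hence $(p_1,\ldots,p_n)\preds (p'_1,\ldots,p'_m)$) this forms a peak $(p'_1,\ldots,p'_m)\rpreds (p_1,\ldots,p_n)\preds (x_1,\ldots,x_k)$, so Lemma~\ref{lem:pred-CR} supplies patterns $\tp_L,\tp_R$ with $(p'_1,\ldots,p'_m)\preds \tp_L \tpequiv \tp_R \rpreds (x_1,\ldots,x_k)$. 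Since $(x_1,\ldots,x_k)$ is a normal form, $\tp_R=(x_1,\ldots,x_k)$, and hence $\tp_L$ is a permutation of $(x_1,\ldots,x_k)$, which is still a trivial pattern with pairwise-distinct variables. Thus $(p'_1,\ldots,p'_m)\preds \tp_L$, and since $(p'_1,\ldots,p'_m)$ has arity $m$, we conclude $(p'_1,\ldots,p'_m)\in\Tp_m$.

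The only slightly delicate points are confirming that trivial patterns are $\pred$-normal forms (the case analysis above) and noticing that $\tpequiv$ preserves the property of being a trivial pattern with distinct variables, so that the permutation $\tp_L$ still witnesses membership in $\Tp_m$; everything else is bookkeeping. I do not expect a genuine obstacle, since Lemma~\ref{lem:pred-CR} already packages the real work (Newman-style confluence from weak confluence plus termination, established via Lemmas~\ref{lem:pred-wconf} and~\ref{lem:pred-decreases-measure}).
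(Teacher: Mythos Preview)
Your proof is correct and follows essentially the same approach as the paper's: the `if' direction is immediate from the definition, and the `only if' direction applies the Church--Rosser property (Lemma~\ref{lem:pred-CR}) to the peak formed by the given step and the reduction to a trivial pattern, then uses that trivial patterns are $\pred$-normal forms to conclude. You spell out the normal-form check and the stability of triviality under $\tpequiv$ a bit more explicitly than the paper does, but the argument is the same.
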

\begin{proof}
  The 'if' direction follows immediately from the definition.
  To show the converse, suppose
  \((p_1,\ldots,p_n)\pred (p'_1,\ldots,p'_m)\) and
  \((p_1,\ldots,p_n)\) is solvable, i.e.,
  \((p_1,\ldots,p_n)\preds (x_1,\ldots,x_k)\) for some mutually distinct variables \(x_1,\ldots,x_k\).
  By Lemma~\ref{lem:pred-CR}, there exist \(\tp_3\) and \(\tp_4\) such that
  \((p'_1,\ldots,p'_m)\preds \tp_3\tpequiv \tp_4 \rpreds(x_1,\ldots,x_k)\).
   The last condition implies
\((x_1,\ldots,x_k)\tpequiv \tp_3\), and thus, \((p'_1,\ldots,p'_m)\) is solvable. 
\end{proof}
\begin{proof}[Proof of Theorem~\ref{th:minimality}]
    The proof proceeds by induction on the length of the rewriting sequence
  \((\seq{x},\Subst{\seq{x}}{\dat})\reds (\tp_1,\Dat_1)\).

  In the base case,
  \((\seq{x},\Subst{\seq{x}}{\dat})= (\tp_1,\Dat_1)\not\red\).
  By Lemma~\ref{lem:minimality-base}, \(\tp_0\) must be a tuple of mutually distinct variables \(\seq{y}\).
  Thus, \(\lang(\tp_1)=\lang(\tp_0)=\underbrace{\Alpha^*\times\cdots\times\Alpha^*}_n\) as required.

  In the induction step, we have \((\seq{x},\Subst{\seq{x}}{\dat})\red
  (\tp',\Subst{\seq{y}}{\dat'})\red^k (\tp_1,\Dat_1)\).
  By Lemma~\ref{lem:red-implies-pred}, we have \(\tp'\pred \seq{y}\).
  By Lemma~\ref{lem:rewriting-decomp}, 
  there exists \(\tp_2\) such that 
  \((\seq{y},\Subst{\seq{y}}{\dat'})\red^k (\tp_2,\Dat_1)\) and \(\tp_1=[\tp_2/\seq{y}]\tp'\).
  By the assumption \(\lang(\tp_0)\subseteq \lang(\tp_1)=[\tp_2/\seq{y}]\tp'\)
  and Lemma~\ref{lem:minimality-sim} with \(\tp_1\pred \seq{y}\),
  we have (i) \(\tp'_0 := \residual{\tp_0}{\tp'}{\seq{y}}\) is well-defined;
    (ii) \(\lang(\tp'_0)\subseteq \lang([\tp_2/\seq{y}]\seq{y}) = \lang(\tp_2)\);
  and (iii) \(\lang(\tp'_0)\subseteq \lang(\tp_2)\) implies \(\lang(\tp_0)=\lang(\tp_1)\).
  Furthermore,  by  \(\Tp\ni \tp_0\preds \tp_0'\) and
  Lemma~\ref{lem:pred-commut}, \(\tp_0'\) is also solvable.
    By (ii) and \(\tp_0'\in\Tp\), we can apply the induction hypothesis to obtain \(\lang(\tp_0')= \lang(\tp_2)\), and
    by (iii), we obtain \(\lang(\tp_0)=\lang(\tp_1)\) as required.
\end{proof}

\subsubsection{Proof of the Existence of Characteristic Data (Theorem~\ref{th:data-size})}
\label{sec:proof-others}
To prove Theorem~\ref{th:data-size}, we first define
\emph{characteristic} learning data \(\dat^\tp\) for each solvable tuple pattern \(\tp\).
Let \(\tp=(p_1,\ldots,p_n)\) be a solvable tuple pattern.
Since \(\tp\) is solvable, \(\tp\) contains at most \(n\) distinct variables \(x_1,\ldots,x_k\) with \(k\le n\).
  (Note that, at each reduction step by \(\pred\), the set of variables occurring in the tuple pattern
  does not change, and the number of elements may only monotonically decrease.)
  Let \(a, b\) be two distinct variables of \(\Alpha\), and
  \(\alpha_1,\ldots,\alpha_n\) be binary codewords for \(1,\ldots,n\) using \(a,b\)
  such that \(|\alpha_i|=O(\log n)\) and for all \(i,j (i\neq j)\), \(\alpha_i\) is neither a prefix nor a suffix of \(\alpha_j\).
  Let \(\beta_i\, (i=1,\ldots,n)\) be \([a/b, b/a]\alpha_i\).
  We define \(\dat^\tp\) as:\footnote{The characteristic data \(\dat^\tp\) depends on
  the choice of \(a,b\in\Alpha\) and binary prefix coding, but  the choice does not matter for the discussion below.}
  \[
  \left(
  \begin{array}{lll}
       [\seq{\alpha}/\seq{x}]p_1 & \cdots & [\seq{\alpha}/\seq{x}]p_n\\{}
           [\seq{\beta}/\seq{x}]p_1 & \cdots & [\seq{\beta}/\seq{x}]p_n
  \end{array}
  \right).
  \]
  For example, let \(\tp\) be \((ax_1x_2, x_3bx_2, x_1x_2x_3)\) and
  \(\alpha_1=aa, \alpha_2=ab, \alpha_3=ba\).
  Then, 
  \[
  \dat^\tp = \left(\begin{array}{rrr}
    aaaab & babab & aaabba\\
   abbba & abbba & bbbaab
    \end{array}\right).
  \]
  
   We use the following properties of characteristic data.
   \begin{lemma}
     \label{lem:canonical}
    Let \(\tp\) and \(\tp_1\) be solvable tuple patterns.
    If \(\dat^\tp \models \tp_1\), then \(\lang(\tp)\subseteq \lang(\tp_1)\).
  \end{lemma}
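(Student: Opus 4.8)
The plan is to prove the lemma by strong induction on \(\measure{\tp_1}\), peeling off one \(\pred\)-step of \(\tp_1\) at a time, transferring it to \(\tp\) via the residual operation \(\residualfun{\tp_1}{\tp_1'}\) of Definition~\ref{df:residual-pattern}, and closing the loop with Lemma~\ref{lem:minimality-sim}(2). Throughout I fix one prefix code \(\alpha_1,\alpha_2,\dots\) over \(\{a,b\}\) and assign \(\alpha_i\) (and its complement \(\beta_i=[a/b,b/a]\alpha_i\)) to \(x_i\), so that two patterns with the same variable set produce canonical data built from the same codewords; by the footnote to the construction of \(\dat^\tp\) this is without loss of generality. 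If \(\tp_1\) is already a tuple of distinct variables, then \(\lang(\tp_1)=\Alpha^*\times\cdots\times\Alpha^*\supseteq\lang(\tp)\) and we are done. Otherwise, since \(\tp_1\) is solvable, some rule applies, i.e.\ \(\tp_1\pred\tp_1'\); then \(\tp_1'\) is solvable (Lemma~\ref{lem:pred-commut}) and \(\measure{\tp_1'}<\measure{\tp_1}\) (Lemma~\ref{lem:pred-decreases-measure}).

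The heart of the argument is a \emph{decoding claim}: for patterns \(p,p'\) over \(\{x_1,\dots,x_k\}\cup\Alpha\), the pattern \(p'\) is a prefix of \(p\) iff \([\seq\alpha/\seq x]p'\) is a prefix of \([\seq\alpha/\seq x]p\) and \([\seq\beta/\seq x]p'\) is a prefix of \([\seq\beta/\seq x]p\); likewise \(p\) begins with a constant \(a\) iff both images do, and \(p=\epsilon\) iff \([\seq\alpha/\seq x]p=\epsilon\). The ``only if'' directions are trivial. For ``if'', note that because \(\beta_i\) is the bit-complement of \(\alpha_i\), a position of \([\seq\alpha/\seq x]p\) agrees with the corresponding position of \([\seq\beta/\seq x]p\) exactly when it lies in a literal-constant block of \(p\) and disagrees exactly when it lies in the expansion of some variable; each maximal run of disagreement is then segmented uniquely into codewords by prefix-freeness (codewords being nonempty). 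A short case analysis then shows that the endpoint of the prefix \([\seq\alpha/\seq x]p'\) inside \([\seq\alpha/\seq x]p\) must fall on a block boundary of \(p\), and comparing the (unique) decodings yields that \(p'\) is a pattern-prefix of \(p\).

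Now fix a witness substitution \(\Theta\) for \(\dat^\tp\models\tp_1\). Each of the two rows of \(\dat^\tp\) lies in \(\lang(\tp_1)\), so each satisfies the necessary shape condition imposed by the rule used in \(\tp_1\pred\tp_1'\) (a prefix relation between two components, a leading constant, or a component equal to \(\epsilon\)); applying the decoding claim to the two components of \(\tp\) involved then shows that \(\tp\) itself has exactly that shape, hence \(\tp^\sharp:=\residual{\tp}{\tp_1}{\tp_1'}\) is well defined with \(\tp\preds\tp^\sharp\), so \(\tp^\sharp\) is solvable. Moreover, canonical data commutes with this \(\pred\)-step in the evident way, so the column operation on \(\dat^\tp\) that corresponds to \(\tp_1\pred\tp_1'\) produces \emph{both} \(\dat^{\tp^\sharp}\) and \(\Theta\tp_1'\); hence \(\dat^{\tp^\sharp}\models\tp_1'\) with the same witness \(\Theta\). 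By the induction hypothesis \(\lang(\tp^\sharp)\subseteq\lang(\tp_1')\), and since \(\residual{\tp}{\tp_1}{\tp_1'}\) is well defined, Lemma~\ref{lem:minimality-sim}(2) gives \(\lang(\tp)\subseteq\lang(\tp_1)\iff\lang(\tp^\sharp)\subseteq\lang(\tp_1')\); the right-hand side holds, so \(\lang(\tp)\subseteq\lang(\tp_1)\).

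The main obstacle is the decoding claim, specifically the boundary-alignment step that forces the prefix endpoint onto a block boundary; everything else is bookkeeping with \(\residualfun{\cdot}{\cdot}\) and the behaviour of canonical data under a single \(\pred\)-step. I expect no essential difficulty extending this to the rules of Section~\ref{sec:ext}, since those only add further prefix/postfix/reverse-flavoured cases handled by the same scheme.
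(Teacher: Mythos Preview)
Your proposal is correct and follows essentially the same route as the paper's proof: induction on \(\measure{\tp_1}\), peel off one \(\pred\)-step of \(\tp_1\), use the structure of \(\dat^\tp\) to see that \(\tp\) must have the matching shape, and recurse. The only differences are presentational: you make the ``decoding claim'' explicit (the paper simply asserts ``by the definition of \(\dat^\tp\), \(\tp\) must be of the form \((p_2p_1',p_2,\ldots,p_n)\)'' without spelling out the agree/disagree argument), and you route the final step through \(\residualfun{\tp_1}{\tp_1'}\) and Lemma~\ref{lem:minimality-sim}(2) rather than inlining the observation that \(\lang(\tp')\subseteq\lang(\tp_1')\) lifts to \(\lang(\tp)\subseteq\lang(\tp_1)\).
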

   \begin{proof}
     This follows by induction on \(\measure{\tp_1}\).
     If \(\tp_1\npred\), then \(\tp_1\) must be mutually distinct variables. Thus, \(\lang(\tp_1)=\Alpha^*\times \cdots \times \Alpha^*\supseteq
     \lang(\tp)\).
     Suppose \(\dat^\tp,\Dat\models \tp_1\) and \(\tp_1\pred \tp_1'\). We perform case analysis on the rule used in the reduction
     \(\tp_1\pred \tp_1'\).
     \begin{itemize}
     \item Case \rn{PR-Prefix}:
       We may assume without loss of generality that \(\tp_1=(q_2q'_1,q_2,\ldots,q_n)\) and \(\tp_1'=(q'_1,q_2,\ldots,q_n)\), with
       \(q_2\ne \epsilon\).
       By the assumption \(\dat^\tp,\Dat \models \tp_1\), \(\dat^\tp[*][1]\) must be of the form
       \(\dat^\tp[*][2]\cdot \seq{s}\).
       By the definition of \(\dat^\tp\), \(\tp\) must be of the form \((p_2p'_1,p_2,\ldots,p_n)\), and
       \(\dat^{\tp'},\Dat\models \tp_1'\) for \(\tp'=(p'_1,p_2,\ldots,p_n)\).
       Since \(\measure{\tp_1'}<\measure{\tp_1}\) (by Lemma~\ref{lem:pred-decreases-measure}),
       by the induction hypothesis, we have \(\lang(\tp')\subseteq \lang(\tp_1')\).
       Thus, we have \(\lang(\tp)\subseteq \lang(\tp_1)\).
     \item Cases \rn{PR-CPrefix}, \rn{PR-Suffix}, and \rn{PR-CSuffix}: Similar to the above case.
     \item Case \rn{PR-Epsilon}:
       We may assume without loss of generality that \(\tp_1=(\epsilon,q_2,\ldots,q_n)\) and \(\tp_1'=(q_2,\ldots,q_n)\).
       By the assumption \(\dat^\tp,\Dat \models \tp_1\), \(\dat^\tp[*][1]=\seq{\epsilon}\).
       Therefore, \(\tp\) must be of the form \((\epsilon,p_2,\ldots,p_n)\), with
        \(\dat^{\tp'}, \Dat \models \tp_1'\) for \(\tp'=(p_2,\ldots,p_n)\).
       By the induction hypothesis, we have \(\lang(\tp')\subseteq \lang(\tp_1')\).
       Thus, we have \(\lang(\tp)\subseteq \lang(\tp_1)\) as required.
     \end{itemize}
  \end{proof}

\begin{lemma}
  \label{lem:canonicalsubst}
  Let \(\tp\) be a solvable tuple pattern, and \(\Dat\) be
      \( \left(\begin{array}{ccc}
      x_1 & \cdots & x_k\\
      \alpha_1 & \cdots & \alpha_k\\
      \beta_1 & \cdots & \beta_k\\
      \end{array}\right)\), where \(\alpha_i,\beta_i\) are the codewords for
      \(x_i\)      as defined above.
  If \(\Dat'\tp = \dat^\tp\), 
  then \(\Dat'(x_i) = \Dat(x_i)\) for every \(x_i\in\FV(\tp)\).
\end{lemma}
\begin{proof}
  Since \(\tp\) is solvable, 
  \(\tp\preds (x_1,\ldots,x_k)\).
  We prove the required property by induction on the length of 
  the reduction sequence \(\tp\preds (x_1,\ldots,x_k)\).
  If \(\tp= (x_1,\ldots,x_k)\), the result follows immediately, since \(\dat^\tp=\Dat\,\tp\).
  Otherwise, we have \(\tp\pred \tp' \preds (x_1,\ldots,x_k)\).
  We perform the case analysis on the rule used for deriving \(\tp\pred \tp'\).
  Since the other cases are simpler or simpler, we discuss only the
  case for \rn{PR-Prefix}, where
   we may assume \(\tp=(\pat_2\pat_1',\pat_2,\ldots,\pat_n)\) and
    \(\tp'=(\pat_1',\pat_2,\ldots,\pat_n)\).
    By the assumption \(\Dat'\tp = \dat^\tp\), we have \(\Dat'\tp' = \dat^{\tp'}\).
    By the induction hypothesis, \(\Dat'(x_i) = \Dat(x_i)\) for every
    \(x_i\in\FV(\tp')=\FV(\tp)\), as required.
\end{proof}
  \begin{proof}[Proof of Theorem~\ref{th:data-size}]
  Let \(\tp=(p_1,\ldots,p_n)\) be a solvable tuple pattern such that \(|p_1\cdots p_n|=m\).
    We show \(\dat=\dat^\tp\) satisfies the required properties.
    By the definition of \(\dat^\tp\), \(\sizeof{\dat^\tp} = O((m+n)\log n)\).
    Suppose \(\dat\subseteq \dat'\subseteq \lang(\tp)\).
    We can assume that the first two rows of \(\dat'\) (as a matrix) coincide with \(\dat\).
    By \(\dat'\smodels \tp\) and Theorem~\ref{th:completeness}, we have
    \((\seq{x},\Subst{\seq{x}}{\dat^\tp})\reds (\tp, \Dat)\) for some \(\Dat\).
    For (ii), it remains to show \((\tp, \Dat)\nred\).
    By Theorem~\ref{th:soundness}, we have \(\Dat\tp = \dat'\).
    By Lemma~\ref{lem:canonicalsubst}, \(\Dat\) must be:
    \[ \left(\begin{array}{ccc}
      x_1 & \cdots & x_k\\
      \alpha_1 & \cdots & \alpha_k\\
      \beta_1 & \cdots & \beta_k\\
      \cdots & \cdots & \cdots 
    \end{array}\right).\]
    Because there exist no \(i\) and \(j\) such that \(\alpha_i\) is a prefix or suffix of \(\alpha_j\),
    the rules \rn{R-Prefix} and \rn{R-Suffix} are inapplicable.
    Since \(a\) and \(b\) are swapped between \(\alpha_i\)'s and \(\beta_i\)'s,
    the rules \rn{R-CPrefix} and \rn{R-CSuffix} are also inapplicable.
Thus, we have \((\tp, \Dat)\nred\) as required.
    To show (iii), suppose \((\seq{x},\Subst{\seq{x}}{\dat'})\reds (\tp', \Dat')\nred\).
    By Theorem~\ref{th:soundness}, \(\tp'\models \dat'\).
    By Lemma~\ref{lem:canonical}, we have \(\lang(\tp)\subseteq \lang(\tp')\).
    By Theorem~\ref{th:minimality}, we have \(\lang(\tp)=\lang(\tp')\) as required.
    Obviously, the size of the characteristic data \(\dat^\tp\) is \(O((m+n)\log n)\), and it can be computed 
    also in time \(O((m+n)\log n)\).
\end{proof}

  \subsubsection{Proofs for Section~\ref{sec:learnability}}

  \begin{lemma}
    \label{lem:data-increase}
    Suppose \(\dat' \subseteq \dat\) and \((\seq{x},\Subst{\seq{x}}{\dat})\reds (\tp, \Theta)\).
    Then there exist \(\tp'\) and \(\Theta'\) such that
    \((\seq{x},\Subst{\seq{x}}{\dat'})\reds (\tp', \Theta')\) and \(\tp' = [\seq{\epsilon}/\seq{y}]\tp\)
    for some variables \(\seq{y}\).
  \end{lemma}
  \begin{proof}
    This follows by induction on the length \(n\) of the reduction sequence
    \((\seq{x},\dat)\reds (\tp, \Theta)\).
    If \(n=0\), then \((\seq{x},\Subst{\seq{x}}{\dat})= (\tp, \Theta)\nred\).
    Thus, the required result holds for the empty sequence of variables \(\seq{y}\).

    If \(n>0\), then we have 
    \((\seq{x},\Subst{\seq{x}}{\dat})\red (\tp_1,\Subst{\seq{y}}{\dat_1})\red^{n-1} (\tp, \Theta)\).
    By Lemma~\ref{lem:rewriting-decomp},
    we have
    \((\seq{y},\Subst{\seq{y}}{\dat_1})\red^{n-1} (\tp_2, \Theta)\)
    with \(\tp = [\tp_2/\seq{y}]\tp_1\).
    
    We perform case analysis on the rule used for deriving
    \((\seq{x},\dat)\red (\tp_1,\Theta_1)\). Since the other cases are similar or
    easier, we discuss only the cases for \rn{R-Epsilon} and \rn{R-Prefix}.
    \begin{itemize}
    \item Case \rn{R-Epsilon}:
      In this case, we have:
      \begin{align*}
        & \dat[*][i]=\seq{\epsilon}\\
        & (\seq{y}) = (x_1,\ldots,x_{i-1},x_{i+1},\ldots, x_k)\\
        & \tp_1 = (x_1,\ldots,x_{i-1},\epsilon,x_{i+1},\ldots, x_k)\\
        & \dat_1 = \remove{\dat}{i}.
      \end{align*}
      Since \(\dat'\subseteq \dat\), we have \(\dat'[*][i]=\seq{\epsilon}\).
      Thus, we can apply the same rule to obtain:
      \((\seq{x},\Subst{\seq{x}}{\dat'})\red (\tp_1,\Subst{\seq{y}}{\remove{\dat'}{i}})\). Since \(\remove{\dat'}{i}\subseteq \remove{\dat}{i}=\dat_1\), we can apply induction hypothesis to obtain
      \(\tp_2'\) such that \((\seq{y}, \Subst{\seq{y}}{\remove{\dat'}{i}})\reds (\tp_2',\Theta')\) and \(\tp_2'=[\seq{\epsilon}/\seq{z}]\tp_2\).
      Thus, the required result holds for \(\tp'=[\tp_2'/\seq{y}]\tp_1=[[\seq{\epsilon}/\seq{z}]\tp_2/\seq{y}]\tp_1 = [\seq{\epsilon}/\seq{z}][\tp_2/\seq{y}]\tp_1 = [\seq{\epsilon}/\seq{z}]\tp\).
    \item Case \rn{R-Prefix}:
     In this case, We have:
     \begin{align*}
       & \dat[*][i]\ne\seq{\epsilon}\\
       & \dat[*][j]=\dat[*][i]\cdot \seq{s}\\
        & (\seq{y}) = (x_1,\ldots,x_{j-1},x_j',x_{j+1},\ldots, x_k)\\
        & \tp_1 = (x_1,\ldots,x_{j-1},x_ix_j',x_{j+1},\ldots, x_k)\\
        & \dat_1 = \replace{\dat}{j}{\seq{s}}.
     \end{align*}
     For simplicity, we assume  \(j=2\) and \(i=1\) below.
     If \(\dat'[*][1]\ne \seq{\epsilon}\), then we can apply the same rule
     to \((\seq{x},\Subst{\seq{x}}{\dat'})\) and obtain the required result.
     If \(\dat'[*][1]= \seq{\epsilon}\), then we have
     \(\dat'\subseteq \dat_1\).
     Thus, by the induction hypothesis, we have
     \((\seq{x},\Subst{\seq{x}}{\dat'})\reds (\tp',\Theta')\)
     with \(\tp'=[\seq{\epsilon}/\seq{z}]\tp_2\) for some \(\tp'\).
     Let \(\tp'=(p_1',\ldots,p_k')\).
     Since \(\Theta'\tp'=\dat'\) and \(\dat'[*][1]=\seq{\epsilon}\),
     for every variable \(y\) occurring in \(p_1'\), \(\Theta'(y)=\seq{\epsilon}\).
     Thus, by applying \rn{R-Epsilon} for such variables,
     we obtain \(\tp''\) such that \((\tp',\Theta')\reds (\tp'',\Theta'')\),
     \(\tp''=(\epsilon,p_2'',\ldots,p_k'')\), and \(\tp''=[\seq{\epsilon}/\seq{u}]\tp'=[\seq{\epsilon}/\seq{v}]\tp_2\)
     for \(\seq{v} = \seq{u},\seq{z}\).
     Let \(\tp_2=(p_1,\ldots,p_k)\). Then \(\tp= (p_1,p_1p_2,p_3,\ldots,p_k)\).
     Since \(\epsilon=[\seq{\epsilon}/\seq{v}]p_1\),
       we have 
     \(\tp''=[\seq{\epsilon}/\seq{v}]\tp_2 = [\seq{\epsilon}/\seq{v}]\tp\),
     as required.
    \end{itemize}
  \end{proof}

  For data \(\dat\) such that \(\colsof{\dat}=n\),
  we define \(\CSTPs(\dat)\) as \(\CSTPnorm(\ctpset_0,\ldots,\ctpset_n)\) where:
  \begin{align*}
 &   \ctpset_i = \set{\tp \mid (\seq{x},\Subst{\seq{x}}{\dat}\reds (\tp,\Theta)
      \mbox{ and } |\FV(\tp)|=i}\\
    &  \CSTPnorm(\ctpset_0,\ldots,\ctpset_n)=(\ctpset'_0,\ldots,\ctpset'_n)\\
    & \ctpset'_i = \set{\tp\in \ctpset_i\mid \mbox{there exists no \(\tp'\in\ctpset_j\)
        such that \(j<i\) and \(\tp'=[\seq{\epsilon}/\seq{y}]\tp\).}}
  \end{align*}
  For example, for \(\dat=\left(\begin{array}{rrr} a & ab & abab\\
    bb & b & bb
  \end{array}\right)\),
  \(\CSTPs(\dat)\) is:
  \[(\emptyset, \emptyset, \set{(x,y, yy)}, \set{(x,y, z), (x,y,xz), (x,y,yz),(x,y,zy)}).\]
  (Note that we also have \(((x,y,z),\Subst{(x,y,z)}{\dat})\reds ((x,y,yyz),\Theta)\), but
\((x,y,yyz)\) is subsumed by \((x,y,yy)\), as \([\epsilon/z](x,y,yyz)=(x,y,yy)\).)
  
  We claim:
  \begin{lemma}
    \label{lem:CSTPs}
  Suppose \(\dim(\dat)=n\). If \(\CSTPs(\dat)=(\ctpset_0,\ldots,\ctpset_n)\),
  then \(\lang(\CTPinf(\dat)) = \lang(\bigwedge_{\tp\in \ctpset_0\cup \cdots \cup \ctpset_n} \tp)\).
  \end{lemma}
  \begin{proof}
    To show \(\lang(\CTPinf(\dat)) \supseteq
    \lang(\bigwedge_{\tp\in \ctpset_0\cup \cdots \cup \ctpset_n} \tp)\),
    suppose \(\tp\) is a conjunct of \(\CTPinf(\dat)\).
    Since \(\tp\) contains at most \(n\) variables, by the definition of
    \(\CSTPs(\dat)\), 
    there exists \(\tp'\in \ctpset_j\)
    such that \(\tp'=[\seq{\epsilon}/\seq{y}]\tp\).
    Thus, \(\lang(\tp)\supseteq \lang(\tp')\supseteq \lang(\bigwedge_{\tp\in \ctpset_0\cup \cdots \cup \ctpset_n} \tp)\). Therefore, we have
    \(\lang(\CTPinf(\dat)) \supseteq
    \lang(\bigwedge_{\tp\in \ctpset_0\cup \cdots \cup \ctpset_n} \tp)\), as required.

    For the converse, suppose \(\tp\in \ctpset_i\). Then,
    we have \((\seq{x},\Subst{\seq{x}}{\dat})\reds (\tp,\Theta)\) for some
    \(\Theta\). Let \(\tp'\) be a tuple pattern such that
    \((\tp,\Theta)\reds (\tp',\Theta')\nred\). Then, \(\tp'\) is a conjunct of \(\CTPinf(\dat)\),
    and therefore, we have \(\lang(\CTPinf(\dat) \subseteq \lang(\tp')\subseteq \lang(\tp)\).
    Thus, we have \(\lang(\CTPinf(\dat)) \subseteq
    \lang(\bigwedge_{\tp\in \ctpset_0\cup \cdots \cup \ctpset_n} \tp)\), as required.
  \end{proof}
  We define the order \((\ctpset_0,\ldots,\ctpset_n)<(\ctpset'_0,\ldots,\ctpset'_n)\)
  by: \((\ctpset_0,\ldots,\ctpset_n)<(\ctpset'_0,\ldots,\ctpset'_n)\)
  iff \(\exists i.
  (\ctpset_i\subsetneq \ctpset'_i \land \forall j<i.\ctpset_j= \ctpset'_j )\).
  Note that \(<\) is well-founded.

  \begin{lemma}
    \label{lem:convergence_of_CTPinf}
    Suppose \(\dat_0,\dat_1,\dat_2,\ldots\) is an infinite, strictly decreasing sequence
    of non-empty data,  i.e.,
    \(\emptyset\subsetneq \dat_0\subsetneq \dat_1\subsetneq \dat_2\subsetneq \cdots\).
    Then, there exists \(j\) such that for all \(i\ge j\), \(\lang(\CTPinf(\dat_i))=\lang(\CTPinf(\dat_j))\).
  \end{lemma}
  \begin{proof}
    By Lemma~\ref{lem:data-increase}, we have:
    \[\CSTPs(\dat_0)\ge \CSTPs(\dat_1)\ge \CSTPs(\dat_2)\ge \cdots\]
    Since \(<\) is well-founded,
    there exists \(j\) such that for all \(i\ge j\),  \(\CSTPs(\dat_i)=\CSTPinf(\dat_j)\).
   By Lemma~\ref{lem:CSTPs}, we have
\(\lang(\CTPinf(\dat_i))=\lang(\CTPinf(\dat_j))\) for all \(i\ge j\).
  \end{proof}

  \begin{proof}[Proof of Theorem~\ref{th:learnability}]
    Suppose \(\tp\) is an STP, and \(\lang(\tp) = \set{\seq{s}_i\mid i\in\omega}\) with
    \(\dat_i = \set{\seq{s}_j\mid 0\le j\le i}\). By Theorem~\ref{th:data-size},
    there exists \(\dat\) such that \(\TPinf(\dat')\) always returns an STP equivalent to \(\tp\)
    for any \(\dat'\supseteq \dat\),
    and by the soundness
    of \(\TPinf\) (Theorem~\ref{th:soundness}), we have \(\dat\subseteq \lang(\tp)\).
    Since \(\lang(\tp) = \set{\seq{s}_i\mid i\in\omega}\), there exists \(j\) such that
    \(\dat\subseteq \dat_j\). Therefore, for every \(k\ge j\), \(\TPinf(\dat_k)\) always returns an STP equivalent to \(\tp\).

    To show the case for CSTPs, suppose \(\ctp\) is a CSTP and \(\lang(\ctp)=\set{\seq{s}_i\mid i\in\omega}\) with \(\dat_i = \set{\seq{s}_j\mid 0\le j\le i}\).
    By Lemma~\ref{lem:convergence_of_CTPinf},
    there exist \(\ctp'\) and \(j\) such that for all \(i\ge j\), \(\lang(\CTPinf(\dat_i))=\lang(\ctp')\).
    By the soundness of \(\TPinf\) (Theorem~\ref{th:soundness}),
    we have \(\lang(\ctp)=\bigcup_{i\in\omega} \dat_i \subseteq \lang(\ctp')\).
    By Theorem~\ref{th:completeness-ctpinf}, we also have \(\lang(\ctp)\supseteq \lang(\ctp')\).
Therefore, we have \(\lang(\ctp)= \lang(\ctp')\), as required.
  \end{proof}

  \begin{proof}[Proof of Theorem~\ref{th:cstp-lfp}]
    We first show the termination of the procedure by contradiction.
    Suppose the procedure does not terminate.
    Then, there exist infinite sequences \(\seq{s}_0,\seq{s}_1,\seq{s}_2,\ldots\)
    and \(\ctp_0,\ctp_1,\ctp_2,\ldots\) such that:
    \begin{align*}
      & \ctp_0 = (a,\seq{\epsilon})\land (\epsilon,\seq{\epsilon})\qquad \dat_j = \set{\seq{s}_i \mid i<j}.\\
      & \CTPinf(\dat_j)=\ctp_j \mbox{ for $j\ge 1$}\\
      &   \oracle(\ctp_j)=Some(\seq{s}_j) \mbox{ for $j\ge 0$}.
    \end{align*}
    These \(\ctp_j\), \(\seq{s}_j\), and \(\dat_j\) are the values of \(\ctp\), \(\seq{s}\), and \(\dat\)
    at the \(j\)-th iteration of the loop in the procedure.
By the last condition and the assumption on \(\oracle\), we have
    \(\seq{s}_j\in \mathcal{F}(\lang(\ctp_j))\setminus \lang(\ctp_j)\).
    By the soundness of \(\CTPinf\) (Theorem~\ref{th:soundness}),
    we have \(\dat_j\subseteq \lang(\ctp_j)\). Therefore, \(\seq{s}_j\not\in \dat_j\),
    and hence, \(\dat_0=\emptyset \subsetneq \dat_1\subsetneq \dat_2\subsetneq \dat_3\subsetneq\cdots\)
    is a strictly increasing sequence. By Lemma~\ref{lem:convergence_of_CTPinf},
    there exist \(k\) and \(\ctp\) such that \(\lang(\ctp_j)=\lang(\ctp)\) for all \(j\ge k\).
    This however, contradicts 
    \(\seq{s}_j\in \mathcal{F}(\lang(\ctp_j))\setminus \lang(\ctp_j)\) and
    \(\seq{s}_j\in \dat_{j+1}\subseteq \lang(\ctp_{j+1})\).

    Next, we show that the procedure returns the least CSTP \(\ctp\)
    such that \(\lang(\ctp)\supseteq \mathcal{F}(\lang(\ctp))\).
    Suppose \(\lang(\ctp')\supseteq \mathcal{F}(\lang(\ctp'))\) for a CSTP \(\ctp'\).
    Let \(\ctp_j\), \(\seq{s}_j\), and \(\dat_j\) be the values of \(\ctp\), \(\seq{s}\), and \(\dat\)
    at the \(j\)-th iteration of the loop in the procedure, as given above.
    It suffices to show \(\lang(\ctp_j)\subseteq \lang(\ctp')\) by induction on \(j\).
    The case \(j=0\) is obvious, as \(\lang(\ctp_j)=\emptyset\).
    For the induction step, suppose \(\lang(\ctp_j)\subseteq \lang(\ctp')\)  and
    \(\oracle(\ctp_j)\) returns \(Some(\seq{s}_j)\). 
    By the monotonicity of \(\mathcal{F}\), we have 
    \[\mathcal{F}(\lang(\ctp_j))\subseteq \mathcal{F}(\lang(\ctp'))\subseteq \lang(\ctp').\]
    Thus, \(\seq{s}_j\in \mathcal{F}(\lang(\ctp_j))\setminus \lang(\ctp_j) \subseteq \lang(\ctp')\).
    Therefore, we have \[\dat_{j+1}=\dat_j\cup\set{\seq{s}_j}\subseteq
    \lang(\ctp_j)\cup\set{\seq{s}_j}\subseteq \lang(\ctp').\]
    By Theorem~\ref{th:completeness-ctpinf}, we have
    \(\ctp_{j+1}=\CTPinf(\dat_{j+1})\subseteq \lang(\ctp')\), as required.
  \end{proof}

  \begin{proof}[Proof of Theorem~\ref{th:stp-lfp}]
    Let \(\mathtt{enumerate}\) be the procedure defined as below, and suppose \(\seq{s}_0\in \mathcal{F}(\emptyset)\).
    Then \(\mathtt{enumerate}(\set{\seq{s}_0})\) outputs a superset of
    all the minimal \(\tp\)'s such that \(\lang(\tp)\supseteq \mathcal{F}(\lang(\tp))\).
    We can then filter out non-minimal STPs by using the algorithm of Theorem~\ref{th:decision-problems} (3).
    \begin{algorithm}
      \SetKwFunction{Enumerate}{enumerate}
      \SetKwProg{Fn}{Function}{:}{}
      \Fn{\Enumerate{$\dat$}}{
{$U$ := $\set{\tp \mid (\seq{x},\Subst{\seq{x}}{\dat})\reds (\tp,\Dat)\nred}$;}\\
        \While{$U\ne \emptyset$}{
          $\tp$ := pick($U$); $U$ := $U\setminus\set{\tp}$;\\
          \uIf{$\oracle(\tp)=Some(\seq{s})$}
              {$U := U\setminus \set{\tp' \mid \seq{s}\in \lang(\tp')}$;\\
                \Enumerate($\dat\cup\set{\seq{s}}$);}
              \uElse
                  {$\mathit{output}$($\tp$);}
          }
        }
    \end{algorithm}
    \\
    Note that $U$ in the above algorithm is a local variable, prepared for each recursive call.\\
    We first show the termination of the algorithm.
    Suppose \(\mathtt{enumerate}(\set{\seq{s}_0})\) does not terminate.
    Since $\set{\tp \mid (\seq{x},\Subst{\seq{x}}{\dat})\reds (\tp,\Dat)\nred}$
    is a finite set, there must be an infinite chain of recursive calls
    \(\mathtt{enumerate}(\dat_0), \mathtt{enumerate}(\dat_1), \mathtt{enumerate}(\dat_2), \ldots\)
    where \(\dat_0=\set{\seq{s}_0}\) and \(\dat_{i}=\dat_{i-1}\cup\set{\seq{s}_i}\) for \(i>0\).
    Let \(U_i\) be the value of \(U\) initialized at the beginning of the call \(\mathtt{enumerate}(\dat_i)\),
    and \(\tp_i\) be the value of \(\tp\) when \(\mathtt{enumerate}(\dat_{i+1})\) is called.
    Thus, \(\dat_i\subseteq \lang(\tp_i)\) and \(\seq{s}_{i+1}\not\in \lang(\tp_i)\).
    By Theorem~\ref{th:learnability}, there exists \(k\) and \(\tp'\) such that
    $\lang(\tp_i) = \lang(\tp')$ for all \(i\ge k\).
    But this contradicts the conditions \(\seq{s}_{i}\in \dat_{i}\subseteq \lang(\tp_{i})\) but
    \(\seq{s}_{i}\not\in\lang(\tp_{i-1})\).

    Next, suppose that \(\tp\) is a minimal STP such that \(\lang(\tp)\supseteq \mathcal{F}(\lang(\tp))\).
    It suffices to show that for any \(\dat\) such that \(\dat\subseteq \lang(\tp)\),
    \(\mathtt{enumerate}(\dat)\) outputs (an STP equivalent to) \(\tp\), or makes a recursive call \(\mathtt{enumerate}(\dat')\)
    for \(\dat'\)
    such that \(\dat\subsetneq \dat'\subseteq \lang(\tp)\) (as we have already shown the termination of the algorithm).
    
    By the completeness (Theorem~\ref{th:completeness}) and minimality (Theorem~\ref{th:minimality}),
    the value of \(U\) initialized at the beginning of the call of \(\mathtt{enumerate}(\dat)\)
    must contain \(\tp'\) such that \(\dat\subseteq \lang(\tp')\subseteq \lang(\tp)\).
    We perform case analysis on whether \(\tp'\) is picked up in the while-loop. 
    \begin{itemize}
      \item
    Suppose \(\tp'\) is eventually picked up in the while-loop. Then if \(\lang(\tp)=\lang(\tp')\), then
    \(\oracle(\tp')\) returns \(\None\) and \(\tp'\) is output.
    If \(\lang(\tp')\subsetneq \lang(\tp)\), then 
    \(\oracle(\tp')\) returns \(Some(\seq{s})\) such that \(\seq{s}\in \mathcal{F}(\lang(\tp'))\subseteq
    \mathcal{F}(\lang(\tp))\subseteq \lang(\tp)\). Thus, \(\mathtt{enumerate}(\dat')\) is called for
    \(\dat'=\dat\cup\set{\seq{s}}\subseteq \lang(\tp)\), as required.
    \item
    Suppose \(\tp'\) is never picked up in the while-loop. Then \(\tp'\) must have been removed by the assignment
    \(U := U\setminus \set{\tp' \mid \seq{s}\in \lang(\tp')}\).
    But then it must be the case that \(\seq{s}\in\lang(\tp')\). Therefore, \(\mathtt{enumerate}(\dat')\) is called for
    \(\dat'=\dat\cup\set{\seq{s}}\subseteq \lang(\tp')\subseteq \lang(\tp)\), as required.
    \end{itemize}
  \end{proof}

  \subsubsection{Proofs for Section~\ref{sec:decision}}
  \label{sec:proof-decidability}
\begin{proof}[Proof of Theorem~\ref{th:decision-problems}]
  (1): By Lemma~\ref{lem:pred-wconf}, if \(\tp\pred\tp'\), then
  \(\tp\) is solvable if and only if \(\tp'\) is solvable.
  By Lemma~\ref{lem:pred-decreases-measure}, the length
  of a reduction sequence \(\tp\pred \tp_1\pred \cdots \pred \tp_n\) is bounded
  by \(\measure{\tp}\). Thus, to check whether \(\tp\) is solvable,
  it suffices to repeatedly reduce
  \(\tp\) (in a non-deterministic manner) until no reduction is applicable,
  and check whether the resulting tuple pattern is of the form
  \((x_1,\ldots,x_k)\). As each step can be performed in polynomial
  time, the algorithm runs in polynomial time.

  (3): Based on Lemma~\ref{lem:minimality-sim},
  we can use the algorithm in Algorithm~\ref{alg:inclusion}.
\iftrue
  \begin{algorithm}
\SetKwFunction{DecideInc}{decide\_inc}
\SetKwProg{Fn}{Function}{:}{}
\Fn{\DecideInc{$\tp_1,\tp_2$}}{
  \uIf{$\tp_2$ is of the form $(x_1,\ldots,x_n)$ where $x_1,\ldots,x_n$ are pairwise distinct}{
    \KwRet{\textsc{True}}
  }
  pick $\tp_2'$ such that $\tp_2 \pred \tp_2'$ \;
  \uIf{$\tp_1' := \residual{\tp_1}{\tp_2}{\tp_2'}$ is well defined}{
    \KwRet{\DecideInc{$\tp_1',\tp_2'$}}
  }\uElse{
    \KwRet{\textsc{False}}
  }
}
\caption{Algorithm to decide \(\lang(\tp_1)\stackrel{?}{\subseteq} \lang(\tp_2)\).}
\label{alg:inclusion}
\end{algorithm}
\else
\begin{algorithmic}[1]
  \Function{decide\_inc}{$\tp_1, \tp_2$}
  \If{$\tp_2$ is of the form $(x_1,\ldots,x_n)$ (where $x_1,\ldots,x_n$ are mutually different)}
  \State{return TRUE};
\EndIf
\State{pick \(\tp_2'\) such that \(\tp_2\pred \tp_2'\);}
\If{\(\tp_1' := \residual{\tp_1}{\tp_2}{\tp_2'}\) is well defined}
\State{return \textsc{decide\_inc}($\tp_1',\tp_2'$)};
\Else \State{return FALSE;}
\EndIf
\EndFunction
\end{algorithmic}
\fi
\ \\
\noindent
Here, we assume \(|\tp_1|=|\tp_2|\); otherwise, we can immediately conclude
that \(\lang(\tp_1)\not\subseteq \lang(\tp_2)\).
The algorithm repeatedly reduces \(\tp_2\) and checks whether the reduction
of \(\tp_2\) can be ``simulated by'' the corresponding reduction of \(\tp_1\).
If \(\tp_2\) is of the form \((x_1,\ldots,x_n)\),
we can immediately conclude that \(\tp_1\subseteq \tp_2\) holds,
as \(\lang(\tp_2)=\Alpha^*\times \cdots\times \Alpha^*\).
Otherwise, we can always pick \(\tp_2'\) such that $\tp_2 \pred \tp_2'$ by the assumption that \(\tp_2\) is solvable.
By Lemma~\ref{lem:minimality-sim}, if \(\residual{\tp_1}{\tp_2}{\tp_2'}\)
is well defined, then it suffices to check whether \(\lang(\tp_1')\subseteq \lang(\tp_2')\);
 otherwise, we can immediately conclude that \(\lang(\tp_1)\not\subseteq
\lang(\tp_2)\) (on the last line).

Each operation can be clearly performed in time polynomial in the size of
the initial inputs \(\tp_1\) and \(\tp_2\).
Since \(\measure{\tp_2}> \measure{\tp_2'}\ge 0\), the depth of the recursive call
is bounded by \(\measure{\tp_2}\). Thus, the algorithm runs in polynomial time.
(As this proof shows, \(\tp_1\) need not be solvable.)

(2): This is a special case of (3), where \(\tp_2=\tp\) and \(\tp_1\) is
the constant pattern \((s_1,\ldots,s_n)\) (whose language is a singleton set
\(\set{(s_1,\ldots,s_n)}\)).

(4): This is an immediate consequence of (3), as \(\lang(\tp_1)=\lang(\tp_2)
\IFF \lang(\tp_1)\subseteq \lang(\tp_2)\land
\lang(\tp_2)\subseteq \lang(\tp_1)\).
\end{proof}

Before proving Theorem~\ref{th:satisfiability}, we define the semantics of formulas.
A \emph{valuation} is a map from the set of variables to \(\Alpha^*\).
For a valuation \(\rho\) and \(w\in (\Alpha\cup \Vars)^*\) such that
\(\FV(w)\subseteq \dom(\rho)\), we define \(\rho w\) by:
\[ \rho \epsilon = \epsilon \qquad \rho (aw) = a\cdot (\rho w)\qquad \rho(xw) = \rho(x)\cdot (\rho w).\]
For a valuation \(\rho\) and a quantifier-free STP-formula \(\form\),
the relation \(\rho\models \form\) is defined by induction on \(\form\), as follows.
\begin{align*}
&  \rho \models w_1=w_2 \mbox{ if } \rho w_1 = \rho w_2\\
&  \rho \models w_1\ne w_2 \mbox{ if } \rho w_1 \ne \rho w_2\\
&  \rho \models (w_1,\ldots,w_n)\in\lang(\tp) \mbox{ if } (\rho w_1,\ldots,\rho w_n)\in \lang(\tp)\\
&  \rho \models (w_1,\ldots,w_n)\not\in\lang(\tp) \mbox{ if } (\rho w_1,\ldots,\rho w_n)\not\in \lang(\tp)\\
&  \rho \models \form_1\land \form_2 \mbox{ if } \rho\models \form_1 \mbox{ and }\rho\models \form_2\\
&  \rho \models \form_1\lor \form_2 \mbox{ if } \rho\models \form_1 \mbox{ or }\rho\models \form_2.
\end{align*}
We say a formula \(\form\) is \emph{satisfiable} if \(\rho\models \form\) for some \(\rho\).
Without subformulas \(\seq{w}\in\lang(\tp)\) and \(\seq{w}\not\in\lang(\tp)\),
it is known that the satisfiability of quantifier-free formulas is decidable~\cite{Makanin,Plandowski,10.1145/337244.337255}.
\begin{proof}[Proof of Theorem~\ref{th:satisfiability}]
  Since the satisfiability of word equations is decidable and in PSPACE~\cite{Makanin,Plandowski,10.1145/337244.337255},
  it suffices to show that
  atomic formulas of the form \(w_1\ne w_2\), \((w_1,\ldots,w_k)\in \lang(\tp)\), or
  \((w_1,\ldots,w_k)\not\in \lang(\tp)\) can be encoded into formulas using only word equations of polynomial size.
  We can assume, without loss of generality, that the alphabet \(\Alpha\) is finite:
  note that if \(\form\) is satisfiable for an infinite alphabet, then
  \(\form\) is also satisfiable for \(\Alpha_\form\cup \set{a,b}\) where \(\Alpha_\form\) is the set of letters
  occurring in \(\form\) and \(a,b\) are distinct letters not occurring in \(\form\),
  and vice versa.
  First, we encode the relation \(w_1\not\preceq w_2\), which means ``\(w_1\) is not a prefix of \(w_2\).''
  It can be expressed by \((\bigvee_{a,b\in \Alpha, a\ne b}w_1=xay\land w_2=xbz)\lor
  \bigvee_{a\in \Sigma} w_1=w_2ax\), where \(x,y,z\) are fresh variables. (This encoding of
  \(\not\preceq\) is standard; see, e.g. \cite{DIEKERT2005105,10.1145/337244.337255}.)
  We can then express
  \(w_1\ne w_2\) as \(w_1\not\preceq w_2\lor w_2\not\preceq w_1\).

  The encoding of \((w_1,\ldots,w_k)\in \lang(\tp)\) is straightforward.
  Suppose \(\tp = (\pat_1,\ldots,\pat_\ell)\). If  \(k\ne \ell\), then
  \((w_1,\ldots,w_k)\in \lang(\tp)\) can be replaced with \(\false{}\) (which can be expressed as \(a=\epsilon\)).
  If \(k=\ell\), then \((w_1,\ldots,w_k)\in \lang(\tp)\) is expressed as
  \(w_1=\pat'_1\land \cdots \land w_k=\pat'_k\), where \((\pat'_1,\ldots,\pat'_k)\) are obtained from
  \(\tp\) by \(\alpha\)-renaming, so that
  the variables in \(\pat'_1,\ldots,\pat'_k\) are fresh.

  Now we consider \((w_1,\ldots,w_k)\not\in \lang(\tp)\), where \(\tp=(\pat_1,\ldots,\pat_\ell)\).
  If \(k\ne \ell\), then it can be replaced with \(\true{}\) (which can be expressed as \(\epsilon=\epsilon\)).
  Thus, we assume \(k=\ell\). Since \(\tp\) is solvable,
  there exists a reduction sequence \(\tp \preds (x_1,\ldots,x_m)\).
  We encode the formula by induction on the length \(n\) of the reduction sequence.
  If \(n=0\), then \(\lang(\tp)=\lang(x_1,\ldots,x_k) = \Alpha^*\times \cdots \times \Alpha^*\).
  Thus, \((w_1,\ldots,w_k)\not\in \lang(\tp)\) can be replaced with \(\false\).
  If \(n>0\), then \(\tp \pred \tp'\preds (x_1,\ldots,x_m)\).
  We perform case analysis on the rule used for deriving \(\tp\pred \tp'\).
  Since the other cases are similar, we discuss only the cases for \rn{PR-Prefix} and \rn{PR-Epsilon}.
  \begin{itemize}
  \item Case \rn{PR-Prefix}:
  In this case, there exist \(i,j\) such that \(\tp'=(\pat'_1,\ldots,\pat'_k)\) with
  \(\pat_j = \pat_i\pat'_j\) and \(\pat'_\ell=\pat_\ell\) for all \(\ell\ne j\).
  Thus, \((w_1,\ldots,w_k)\not\in \lang(\tp)\) can be replaced with
  \(w_i\not\preceq w_j \lor (w_j=w_i x\land (w'_1,\ldots,w'_k)\not\in\lang(\tp'))\),
  where \(x\) is a fresh variable, 
  \(w'_j=x\), and \(w'_\ell=w_\ell\) for all \(\ell\ne j\). By the induction hypothesis,
  \((w'_1,\ldots,w'_k)\not\in\lang(\tp')\) can also be encoded into word equations.

  \item Case \rn{PR-Epsilon}:
    In this case,
    there exists \(j\) such that \(\pat_j=\epsilon\) and \(\tp'=(\pat_1,\ldots,\pat_{j-1},\pat_{j+1},\ldots,\pat_k)\).
  Thus, \((w_1,\ldots,w_k)\not\in \lang(\tp)\) can be replaced with
  \(w_j\ne \epsilon \lor (w_1,\ldots,w_{j-1},w_{j+1},\ldots,w_k)\not\in\lang(\tp')\).
  By the induction hypothesis,
  \((w'_1,\ldots,w'_k)\not\in\lang(\tp')\) can also be encoded into word equations.
  \end{itemize}
  The size of the resulting formula is polynomial in the size of the original formula, as
  the length of the reduction sequence \(\tp \preds (x_1,\ldots,x_m)\) is linear in \(\measure{\tp}\) and
  the increase of the formula size in each step of the encoding above is polynomial in the size of \(\seq{w}\not\in\lang(\tp)\).
\end{proof}

\begin{remark}
  \label{rem:satisfiability}
  The encoding of \(\seq{w}\not\in\lang(\tp)\) above for the inductive case 
  is actually based on a special case of Lemma~\ref{lem:minimality-sim}
  where \(\tp_0\) is a constant pattern \(\seq{s}\).
  By Lemma~\ref{lem:minimality-sim}, \(\seq{s}\not\in \lang(\tp)\) if and only if
   \(\residual{\seq{s}}{\tp}{\tp'}\) is undefined or  \(\lang(\residual{\seq{s}}{\tp}{\tp'})
   \not\subseteq \lang(\tp')\). The (un)definedness of \(\residual{\seq{s}}{\tp}{\tp'}\)
   can be expressed in terms of the prefix/suffix relations, which can further be
   expressed by word equations, as discussed above.
\end{remark}
 \subsection{Additional Definitions and Proofs for Section~\ref{sec:ext}}
\label{sec:extproof}

\subsubsection{The Extension in Section~\ref{sec:ext-rev}}

First, we extend \(\sim\) (used in Lemma~\ref{lem:pred-wconf})
to the least equivalence relation closed under not only the permutations and
but also the rule \((p_1,\ldots,p_{k}) \sim (p_1,\ldots,p_{i-1},p_i^R,p_{i+1},\ldots, p_{k})\),
and slightly relax the requirement for solvability:
a tuple pattern \(\tp\) is \emph{solvable} if \(\tp \preds \sim (x_1,\ldots,x_n)\) for
some mutually distinct variables \(x_1,\ldots,x_n\).
Thus, \((x^R, y)\) is also solvable, as \((x^R,y)\sim (x,y)\). Without the reverse pattern constructor,
the definition of solvability is the same.

All the main properties are preserved by the extension of patterns
with the reverse constructor.
This can be intuitively understood by noting that
 the rules \rn{R-RPrefix} and \rn{R-RSuffix} 
can be emulated by \rn{R-Prefix} and \rn{R-Suffix}, if we augment the data \(\dat\)
by adding, for each column \(\dat[*][i]\) of \(\dat\), a new column consisting of the reverse of elements of \(\dat[*][i]\).
For example, 
  \begin{align*}
 &   ((x_1,x_2), \Subst{(x_1,x_2)}{\left(\begin{array}{rrr}
    ab &\ bac \\
  \end{array}\right)})\\
    &  \red ((x_1, \rpat{x_1}x'_2), \Subst{(x_1, x'_2)}{
    \left(\begin{array}{rrr}
    ab &\ c \\
    \end{array}\right)}) & \mbox{ (\rn{R-RPrefix})}
  \end{align*}
can be emulated by:
  \begin{align*}
 &   ((x_1,x_2,y_1,y_2), \Subst{(x_1,x_2,y_1,y_2)}{\left(\begin{array}{rrrr}
    ab &\ bac &\ ba &\ cab\\
  \end{array}\right)})\\
    &  \red ((x_1, y_1x'_2,y_1,y_2), \Subst{(x_1, x'_2,y_1,y_2)}{
    \left(\begin{array}{rrrr}
    ab &\ c &\ ba &\ cab\\
    \end{array}\right)}) & \mbox{ (\rn{R-Prefix})}\\
    &  \red ((x_1, y_1x'_2,y_1,y'_2x_1), \Subst{(x_1, x'_2,y_1,y'_2)}{
    \left(\begin{array}{rrrr}
    ab &\ c &\ ba &\ c\\
    \end{array}\right)}) & \mbox{ (\rn{R-Suffix})}
  \end{align*}
  It should be obvious that
  the extended pattern \((\pat_1,\ldots,\pat_n)\) (consisting of variables \(z_1,\ldots,z_m\))
  is inferable if and only if
  \((\theta_1\pat_1,\ldots,\theta_1\pat_n, \theta_2\rpat{\pat_1},\ldots,\theta_2\rpat{\pat_n})\) is inferable from the augmented data,
  where \(\theta_1 = [x_1/z_1,\ldots,x_m/z_m,y_1/\rpat{z_1},\ldots,y_m/\rpat{z_m}]\) and
  \(\theta_2 = [y_1/z_1,\ldots,y_m/z_m,x_1/\rpat{z_1},\ldots,x_m/\rpat{z_m}]\).

 A little more formally, we can confirm that the main properties listed in Table~\ref{tab:lemmas},
 by checking that all the lemmas in the table remain to hold (modulo a minor adjustment due to the extension of
 solvability).  The lemmas required for soundness and completeness (Theorem~\ref{th:soundness},
 Theorem~\ref{th:soundness2} and Theorem~\ref{th:completeness}) trivially hold.
 More care is necessary for minimality (Theorem~\ref{th:minimality}).
 For Lemma~\ref{lem:minimality-sim}, we extend the definition of
 \(\residual{\tp_0}{\tp_1}{\tp_1'}\) (Definition~\ref{df:residual-pattern}) by the following cases:
 \begin{itemize}
   \item Case  \rn{PR-RPrefix}:
  In this case, \(\tp_1=(\pat_1,\ldots,\pat_n)\) and \(\tp_2=(\pat'_1,\ldots,\pat'_n)\)
  with \(\pat_j=\pat_i^R\cdot \pat'_j\) and \(\pat_i\ne \epsilon\)
  for some \(i, j\) and \(\pat'_k=\pat_k\) for \(k\ne j\).
  If \(\tp_0\) is of the form \((q_1,\ldots,q_n)\) and \(q_j=q_i^Rq_j'\), then \(\residual{\tp_0}{\tp_1}{\tp_2}:= (q'_1,\ldots,q'_n)\) where
 for \(k\ne j\).
  Otherwise, \(\residual{\tp_0}{\tp_1}{\tp_2}\) is undefined.
   \item Case  \rn{PR-RSuffix}:
  In this case, \(\tp_1=(\pat_1,\ldots,\pat_n)\) and \(\tp_2=(\pat'_1,\ldots,\pat'_n)\)
  with \(\pat_j=\pat'_j\cdot \pat_i^R\) and \(\pat_i\ne \epsilon\)
  for some \(i, j\) and \(\pat'_k=\pat_k\) for \(k\ne j\).
  If \(\tp_0\) is of the form \((q_1,\ldots,q_n)\) and \(q_j=q_j'\cdot q_i^R\), then \(\residual{\tp_0}{\tp_1}{\tp_2}:= (q'_1,\ldots,q'_n)\) where
 for \(k\ne j\).
  Otherwise, \(\residual{\tp_0}{\tp_1}{\tp_2}\) is undefined.
 \end{itemize}
Then Lemma~\ref{lem:minimality-sim} holds as required.

Lemma~\ref{lem:pred-wconf} (weak confluence of \(\pred\) up to \(\sim\))
also remains to hold.
We discuss only one new case; the other cases are similar.
\begin{itemize}
  \item  Case \(\tp=(p_2p'_1,p_2,p_3,\ldots,p_n)\pred (p'_1,p_2,p_3,\ldots,p_n)=\tp_1\)
    and \(\tp=(p_3^Rp''_1,p_2,p_3,\ldots,p_n)\pred (p''_1,p_2,p_3,\ldots,p_n)=\tp_2\)
    (i.e., the two reductions use \rn{PR-Prefix} and \rn{PR-RPrefix} respectively,
    and \(p_1\) is principal in both   reductions).\\
    By \(p_2p'_1=p_3^Rp''_1\), \(p_2\) and \(p_3^R\) are in the prefix relation.
    Let us consider the case where \(p_3^R\) is a prefix of \(p_2\);
    the other case is similar.
    In that case, we have:
    \begin{align*}
      & p_2 = p_3^Rp_2' \qquad p''_1 = p_2'p'_1\\
      & \tp_1 = (p'_1,p_3^Rp_2',p_3,\ldots,p_n) \pred (p'_1,p_2',p_3,\ldots,p_n)\\
      & \tp_2=(p_2'p'_1,p_3^Rp_2',p_3,\ldots,p_n)\pred
      (p_2'p'_1,p_2',p_3,\ldots,p_n)\preds(p'_1,p_2',p_3,\ldots,p_n).
    \end{align*}
\end{itemize}

For the existence of characteristic data (Theorem~\ref{th:data-size}),
it suffices to strengthen the requirement for the codewords
\(\alpha_1,\ldots,\alpha_n\), by further requiring that 
for all \(i,j\in\set{1,\ldots,n}\), \(\alpha_i\) is neither a prefix nor a suffix of \(\alpha_j^R\).
Note that we can still choose such codewords such that \(|\alpha_i|=O(\log{n})\).
(For example, for the codewords \(\alpha_1',\ldots,\alpha_n'\) of the same length that satisfy
the original conditions, let \(\alpha_i = a\alpha_ib\).)
Then, we have a slightly weakened version of Lemma~\ref{lem:canonicalsubst},
where ``\(\Dat'(x_i) = \Dat(x_i)\) for every \(x_i\in\FV(\tp)\)'' is replaced with
``\(\Dat'(x_i)\) is \(\Dat(x_i)\) or \((\Dat(x_i))^R\) for every \(x_i\in\FV(\tp)\)'',
which is sufficient for proving Theorem~\ref{th:data-size}.

The lemmas for learnability in Table~\ref{tab:lemmas} remain to hold,
and therefore, Theorems~\ref{th:learnability}
and \ref{th:cstp-lfp} also remain to hold.

For Theorem~\ref{th:satisfiability} (the decidability of
quantifier-free STP formulas), the set of word expressions are extended to
\(w\in (\Alpha\cup \Vars\cup \Vars^R)^*\), and the definition of \(\rho w\) is extended in an obvious
manner.
Then, Theorem~\ref{th:satisfiability} remains to hold,
thanks to the decidability
of word equations with involution~\cite{Diekert06,DBLP:journals/iandc/DiekertJP16}.

\subsubsection{The Extensions with Multisets and Sets}
We now discuss the extensions sketched in Section~\ref{sec:set}.

We first provide more precise definitions of multiset patterns.
The sets of \emph{multiset tuple patterns} (MTPs)
and \emph{conjunctive multiset tuple patterns} (CMTPs), ranged over by \(\tp\) and \(\ctp\) respectively,
are defined by:
\begin{align*}
  & \tp ::= (\pat_1,\ldots,\pat_k) \qquad
   \pat \in (\Alpha\cup \Vars)^* \qquad \gamma ::= \tp_1\land \cdots \land \tp_m. 
\end{align*}
Patterns are identified up to permutations; e.g., \(xyz = yxz\)
(thus, a pattern \(\pat\) may actually be viewed as a multiset consisting of
elements of \(\Alpha\) and \(\Vars\)).
A multiset tuple pattern is \emph{solvable} if
\(\tp \preds (x_1,\ldots,x_k)\) for some mutually distinct variables \(x_1,\ldots,x_k\).
A conjunctive multiset tuple pattern \(\ctp = \tp_1\land \cdots \land \tp_k\)
is \emph{solvable} if \(\tp_i\) is solvable for every \(i\in\set{1,\ldots,k}\).
Solvable multiset tuple patterns and conjunctive multiset tuple patterns are
abbreviated as SMTP and CSMTP respectively.

For a map \(\rho\) from \(\Vars\) to the set of multisets over \(\Alpha\),
and a pattern \(p\) such that \(\FV(p)\subseteq \dom(\rho)\), we define \(\rho p\) by:
\[ \rho \epsilon = \emptyset\qquad
\rho (ap) = \set{a}\uplus (\rho p)\qquad
\rho (xp) = \rho(x) \uplus (\rho p),\]
where \(\uplus\) denotes the multiset union.

For an MTP \(\tp\) and a CMTP \(\ctp\), \(\lang(\tp)\) and \(\lang(\ctp)\)
are defined by:
\begin{align*}
&  \lang(p_1,\ldots,p_n) = \set{(\rho p_1,\ldots,\rho p_n)\mid \dom(\rho)\supseteq
    \Vars(p_1\cdots p_n)}\\
  & \lang(\tp_1\land \cdots \land \tp_m) = \lang(\tp_1)\cap \cdots \cap \lang(\tp_m).
\end{align*}

Among the properties listed in Table~\ref{tab:lemmas},
the minimality property (Theorem~\ref{th:minimality}) fails.
For example, consider the data: \(\dat = (abcd, ab, ac, ccd)\).
We can reduce \((\seq{x},\Subst{\seq{x}}{\dat})\) in the following two ways:
\begin{align*}
  & ((x,y,z,w), \left(\begin{array}{rrrr}
    x & y & z & w\\
    abcd & ab & ac & ccd
  \end{array}
    \right)
    )
    \red 
    ((x'y,y,z,w), \left(\begin{array}{rrrr}
    x' & y & z & w\\
    cd & ab & ac & ccd
    \end{array}
    \right)
    )\\&
    \red 
    ((x'y,y,z,x'w'), \left(\begin{array}{rrrr}
    x' & y & z & w'\\
    cd & ab & ac & c
    \end{array}
    \right)
    )
    \red 
    ((x'y,y,z'w',x'w'), \left(\begin{array}{rrrr}
    x' & y & z' & w'\\
    cd & ab & a & c
    \end{array}
    \right)
    )\\&
    \red 
    ((x'y'z',y'z',z'w',x'w'), \left(\begin{array}{rrrr}
    x' & y' & z' & w'\\
    cd & b & a & c
    \end{array}
    \right)
    )\\&
    \red 
    ((x''w'y'z',y'z',z'w',x''w'w'), \left(\begin{array}{rrrr}
    x'' & y' & z' & w'\\
    d & b & a & c
    \end{array}
    \right)
    )\nred \\&
   ((x,y,z,w), \left(\begin{array}{rrrr}
    x & y & z & w\\
    abcd & ab & ac & ccd
  \end{array}
    \right)
    )
    \red 
    ((x'z,y,z,w), \left(\begin{array}{rrrr}
    x' & y & z & w\\
    bd & ab & ac & ccd
    \end{array}
    \right)
    )\nred \\&
 \end{align*}
Thus, \(\TPinf(\dat)\) outputs
\((xyzw,yz,zw,xww)\) or \((xz,y,z,w)\) non-deterministically,
but the latter is not minimal; in fact,
\(\lang(xwyz,yz,zw,xww) \subsetneq \lang(xz,y,z,w)\).

The failure of the minimality is attributed to the failure of
weak confluence of \(\pred\) (Lemma~\ref{lem:pred-wconf}).
In fact, the reductions
\((xyzw,yz,zw,xww) \pred (xw,yz,zw,xww)\)
and 
\((xyzw,yz,zw,xww) \pred (xy,yz,zw,xww)\)
are not confluent, because \((xy,yz,zw,xww)\npred\).

Due to the failure of minimality, 
Theorem~\ref{th:data-size} (about characteristic data) and
the first part of Theorem~\ref{th:learnability}
(about the learnability of SMTPs) fail.
However, the following weaker version of Theorem~\ref{th:data-size}
(obtained by weakening the condition (iii)) holds.

\begin{theorem}[Characteristic Data (Weaker Version)]
  \label{th:data-size-weak}
  Let \(\tp=(p_1,\ldots,p_n)\) be an STP such that \(|p_1\cdots p_n|=m\).
  Then, there exists \(\dat\) such that (i) \(\sizeof{\dat}\) is polynomial
  in \(m+n\),
  (ii) for any \(\dat'\) such that \(\dat\subseteq \dat'\subseteq \lang(\tp)\),
  there exists \(\Dat\) such that \((\seq{x},\Subst{\seq{x}}{\dat'})\reds (\tp, \Dat)\nred{}\),
  and  (iii) for any \(\dat'\) such that \(\dat\subseteq \dat'\subseteq \lang(\tp)\),
  \((\seq{x},\Subst{\seq{x}}{\dat'})\reds (\tp', \Dat')\nred{}\)
  implies \(\lang(\tp)\subseteq\lang(\tp')\).
  Furthermore, given \(\tp\), \(\dat\) can be constructed in polynomial time. \end{theorem}
\begin{proof}[Proof Sketch]
  For an SMTP \(\tp\), we construct the characteristic data \(\dat^\tp\)
  as follows.
  Let \(\tp = (p_1,\ldots,p_n)\), with \(\FV(p_1\cdots p_n) = \set{x_1,\ldots,x_\ell}\).
  Let \(\rho_i\;(i\in\set{1,\ldots,\ell})\) be the map such that
  \(\rho_i(x_i)=a\) and \(\rho_i(x_j)=\epsilon\) for every \(j\ne i\).
  Let \(\dat^\tp\) be:
  \begin{align*}
    & \left(
    \begin{array}{ccc}
      \rho_1 p_1 & \cdots & \rho_1 p_n\\
      \cdots & \cdots & \cdots \\
      \rho_\ell p_1 & \cdots & \rho_\ell p_n
    \end{array}\right).
  \end{align*}
  For example, the characteristic data for \((xyzw,yz,zw,xww)\) above is
  \begin{align*}
    & \left(
    \begin{array}{cccc}
      a & \epsilon & \epsilon & a\\
      a & a & \epsilon & \epsilon\\
      a & a & a & \epsilon\\
      a & \epsilon & a & aa
    \end{array}\right).
  \end{align*}
  
  Then, \(\dat:=\dat^\tp\) satisfies the conditions required in
  Theorem~\ref{th:data-size-weak}.
    \end{proof}

The class of SMTPs is still learnable if \(\TPinf\) is modified
so that it outputs a minimal STP \(\tp_i\) such that \((\seq{x},\Subst{\seq{x}}{M_i})\reds (\tp_i,\Theta)\nred\).
To do so, it suffices for \(\TPinf\) to first compute
the set \(U' := \set{\tp\mid (\seq{x},\Subst{\seq{x}}{M_i})\reds (\tp_i,\Theta)\nred}\),
filter out non-minimal STPs by computing \(U := \set{\tp\in U' \mid \neg\exists \tp'\in U'.\lang(\tp')\subsetneq \lang(\tp)}\),
and output an element of \(U\) non-deterministically. Then Theorem~\ref{th:learnability} holds for the modified version of \(\TPinf\).

 As for Theorem~\ref{th:decision-problems},
 the decidability remains to hold, but the complexity of
 deciding whether a given multiset tuple pattern \(\tp\) is solvable may not belong to P,
 due to the failure of weak confluence of \(\pred\).
 The membership and inclusion problems can still be solved in
 polynomial time if we are given a witness for the solvability
 of \(\tp\) (i.e., a reduction sequence \(\tp \preds (x_1,\ldots,x_k)\)).

 The other properties---soundness (Theorem~\ref{th:soundness}),
 completeness (Theorems~\ref{th:completeness} and \ref{th:completeness-ctpinf}),
 learnability of CSTPs (the second part of Theorem~\ref{th:learnability} and
 Theorems~\ref{th:cstp-lfp} and \ref{th:stp-lfp}), and the decidability of quantifier-free formulas
 (Theorem~\ref{th:satisfiability})---remain to hold.
 Indeed, we can confirm that
 the required lemmas listed in Table~\ref{tab:lemmas} remain to hold.
 In the proof of Theorem~\ref{th:stp-lfp}, the first line of the procedure \(\mathtt{enumerate}\):
 \[U := \set{\tp \mid (\seq{x},\Subst{\seq{x}}{\dat})\reds (\tp,\Dat)\nred}\]
 should be replaced with:
 \[U' := \set{\tp \mid (\seq{x},\Subst{\seq{x}}{\dat})\reds (\tp,\Dat)\nred};\;
   U := \set{\tp\in U' \mid \neg\exists \tp'\in U'. \lang(\tp')\subsetneq \lang(\tp)},\]
to filter out non-minimal STPs.
     
 The decidability of (the satisfiability of) quantifier-free formulas
 relies for the multiset-case relies on that of linear integer arithmetic,
 instead of word equations.
 First, let us fix the syntax of quantifier-free SMTP formulas:\footnote{We
   can further add primitive constraints accommodated by \cite{multiset}.}
 \begin{definition}
   \label{df:SMTP-formula}
   The set of (quantifier-free) SMTP formulas, ranged over by \(\form\), is defined by:
 \[ \form ::= |w_1|=|w_2| \mid w_1\subseteq w_2 \mid \seq{w}\in \lang(\tp)\mid
 \form_1\land \form_2 \mid \form_1 \lor \form_2\mid \neg \form.\]
 Here, \(w\in (\Alpha\cup\Vars)^*\), where the empty sequence is interpreted as
 the empty multiset, and the concatenation is interpreted as the multiset union.
 \(|w|\) denotes the size of the multiset (expressed by) \(w\).
 \end{definition}
 
 The semantics of quantifier-free SMTP formulas is defined as follows.
 Let \(\rho\) be a map from \(\Vars\) to the set of finite multisets over \(\Alpha\).
 We define \(\rho w\) (where we assume \(\dom(\rho)\supseteq \FV(w)\))
 and \(\rho \models \form\) by:
 \begin{align*}
   & \rho \epsilon = \emptyset\qquad \rho (a w) = \set{a}\uplus \rho w\qquad
   \rho (xw) = \rho(x) \uplus \rho w\\
   & \rho \models |w_1|=|w_2| \mbox{ if }|\rho w_1| = |\rho w_2|\\
   & \rho \models w_1\subseteq w_2 \mbox{ if }\rho w_1 \subseteq \rho w_2\\
   & \rho \models \seq{w}\in \lang(\tp) \mbox{ if }\rho \seq{w}\in \lang(\tp)\\
   & \rho \models \seq{w}\not\in \lang(\tp) \mbox{ if }\rho \seq{w}\not\in \lang(\tp)\\
&  \rho \models \form_1\land \form_2 \mbox{ if } \rho\models \form_1 \mbox{ and }\rho\models \form_2\\
&  \rho \models \form_1\lor \form_2 \mbox{ if } \rho\models \form_1 \mbox{ or }\rho\models \form_2.
&  \rho \models \neg\form \mbox{ if } \rho\models \form \mbox{ does not hold.}
 \end{align*}
 We say that a quantifier-free SMTP formula \(\form\) is \emph{satisfiable}
 if \(\rho \models \form\) for some \(\rho\).
 The following is a multiset version of Theorem~\ref{th:satisfiability}.

\begin{theorem}
  \label{th:satisfiability-multiset}
  Given a quantifier-free SMTP formula \(\form\),
  one can effectively construct an equi-satisfiable SMTP formula \(\form'\) that
  contains no subformulas of the form \(\seq{w}\in \lang(\tp)\), or \(\seq{w}\not\in \lang(\tp)\).
 Therefore,  the satisfiability of quantifier-free SMTP formulas is decidable.
\end{theorem}
\begin{proof}
  It suffices to show encodings of \(\seq{w}\in\lang(\tp)\) and
  \(\seq{w}\not\in\lang(\tp)\) (by considering the negation normal form).
  Let \(\seq{w}=(w_1,\ldots,w_n)\) and \(\tp=(p_1,\ldots,p_\ell)\).
  If \(n\ne \ell\), then
  \(\seq{w}\in\lang(\tp)\) and
  \(\seq{w}\not\in\lang(\tp)\) can be replaced by \(\false\) and \(\true\) respectively.
  Suppose \(n=\ell\).
  Then, \(\seq{w}\in\lang(\tp)\) can be replaced by
  \(w_1=p_1'\land \cdots w_n=p_n'\), where \((p_1',\ldots,p_n')\) is obtained
  by \(\alpha\)-renaming of \(\tp\) and all the variables in
  \((p_1',\ldots,p_n')\) are fresh.
  \(\seq{w}\not\in\lang(\tp)\) can be encoded by induction
  on the sequence \(\tp \preds (x_1,\ldots,x_k)\),  in
  the same manner as the case for STPs.
  For example, if \(\tp = (p_2p'_1,p_2,\ldots)\pred (p'_1,p_2,\ldots)\preds
  (x_1,\ldots,x_k)\), then
  \(\seq{w}\not\in\lang(\tp)\) can be replaced by:
  \(\neg (w_2\subseteq w_1)\lor (w_1=w_2x \land (x,w_2,\ldots,w_n)\not\in (p'_1,p_2,\ldots))\).

  By the decidability of multiset constraints~\cite{multiset},
  we can conclude that the satisfiability of quantifier-free SMTP formulas is decidable.
\end{proof}

Analogous results hold for the case of sets, where the conjunction is interpreted as
the disjoint union. The decidability of quantifier-free SSTP formulas
follows from the decidability of set constraints (e.g., see \cite{DBLP:journals/jar/KuncakNR06}).

 \subsection{Additional Definitions and Proofs for Section~\ref{sec:chc}}
\label{sec:proof-chc}

\subsubsection{Additional Details on Section~\ref{sec:chc-words}}

As defined in Section~\ref{sec:chc-words},
a CHC over words is of the form
\(P_1(\seq{y}_1)\land \cdots \land
P_n(\seq{y}_n)\land \cform \imp H\),
where \(P_i\) is a predicate variable, \(\cform\) is a quantifier-free STP-formula,
and \(H\) is of the form \(P(\seq{y})\) or \(\false\).
(Here, \(\false\) may be viewed as an STP formula \(a=\epsilon\).)

An \emph{interpretation} for predicate variables is
a function that maps each \(k\)-ary predicate variable to a subset of \((\Alpha^*)^k\).
Let \(\pmodel\) be an interpretation for predicates,
and \(\rho\) be a map from \(\Vars\) to \(\Alpha^*\).
We write \(\pmodel, \rho \models P(y_1,\ldots,y_k)\) if \((\rho y_1,\ldots,\rho y_k)\in \pmodel(P)\)
and \(\pmodel,\rho\models \cform\) if \(\rho\models \cform\) (recall Section~\ref{sec:proof-decidability} for the definition of the latter).
Let \(\CHCone\) be a CHC 
\(P_1(\seq{y}_1)\land \cdots \land
P_n(\seq{y}_n)\land \cform \imp H\).
We write  \(\pmodel\models \CHCone\)
if \(\pmodel,\rho\models H\) holds
for every \(\rho\) such that
\(\pmodel,\rho\models \cform\) and 
\(\pmodel,\rho \models P_i(\seq{y}_i)\)
for each \(i\in\set{1,\ldots,n}\).
Let \(\CHC\) be a system of CHCs \(\set{\CHCone_1,\ldots,\CHCone_m}\).
We write \(\pmodel\models \CHC\), and call \(\pmodel\) a \emph{model} of
\(\CHC\) if \(\pmodel\models \CHCone_i\) for every \(\CHCone_i\in\CHC\).
A system \(\CHC\) of CHCs is \emph{satisfiable} if there exists a model of \(\CHC\).

A \emph{CSTP-interpretation} for predicates is 
a map \(\set{P_1\mapsto \cstp_1,\ldots,P_n\mapsto \cstp_n}\)
where \(\cstp_i\)'s are CSTPs.
A {CSTP-interpretation} \(\CSTPmodel = \set{P_1\mapsto \cstp_1,\ldots,P_n\mapsto \cstp_n}\)
is a \emph{CSTP-model} of \(\CHC\), written \(\CSTPmodel\models \CHC\),
if \(\set{P_1\mapsto \lang(\cstp_1),\ldots,P_n\mapsto \lang(\cstp_n)}\models \CHC\).
By abuse of notation, we often just write
\(\set{P_1\mapsto \cstp_1,\ldots,P_n\mapsto \cstp_n}\) for
\(\set{P_1\mapsto \lang(\cstp_1),\ldots,P_n\mapsto \lang(\cstp_n)}\).

\begin{lemma}
  \label{lem:cstp-model}
  Given a CSTP-interpretation \(\CSTPmodel\) and a system \(\CHC\) of CHCs,
  it is decidable whether \(\CSTPmodel\) is a CSTP-model of \(\CHC\).
  Furthermore, there is an algorithm which,
  given \(\CSTPmodel\) and a finite set \(\CHC_D\) of definite clauses,
  outputs ``\(\mathit{None}\)'' if \(\CSTPmodel\models \CHC_D\), and
  otherwise outputs 
  \(Some(P,\seq{s})\) such that there exists a valuation \(\rho\) and a clause
  \(P_1(\seq{y}_1)\land \cdots \land
  P_n(\seq{y}_n)\land \cform \imp P(\seq{y})\) in \(\CHC_D\),
  with \(\CSTPmodel,\rho\models \form\)  and \(\CSTPmodel,\rho \models P_i(\seq{y}_i)\) for every \(i\),
  but \(\rho(\seq{y})=\seq{s}\not\in \CSTPmodel(P)\).
\end{lemma}
\begin{proof}
  This follows immediately from Theorem~\ref{th:satisfiability}.
  Note that \(\CSTPmodel\) is a CSTP-model of a CHC
  \(P_1(\seq{y}_1)\land \cdots \land
  P_n(\seq{y}_n)\land \cform \imp H\), if and only if
  \(\CSTPmodel(P_1(\seq{y}_1))\land \cdots \land
  \CSTPmodel(P_n(\seq{y}_n))\land \cform \land \neg \CSTPmodel(H)\) is unsatisfiable,
  where \(\CSTPmodel(P(\seq{y}))\) denotes the formula \(\seq{y}\in \CSTPmodel(P)\).
  Thus, if \(\CSTPmodel \models \CHC_D\) does not hold, then
  \(\CSTPmodel(P_1(\seq{y}_1))\land \cdots \land\CSTPmodel(P_n(\seq{y}_n))\land \cform \land \neg \CSTPmodel(P(\seq{x}))\) is
  satisfiable for some definite clause \( P_1(\seq{y}_1)\land \cdots \land
  P_n(\seq{y}_n)\land \cform \imp P(\seq{y})\).
  We can construct a model \(\rho\) for the formula and return \(Some(P,\rho(\seq{x}))\).
\end{proof}

To prove Theorem~\ref{th:decidability-chc} for the case where CHCs have
multiple predicates, we need to extend Theorem~\ref{th:cstp-lfp}.
Let \(\mathcal{I}\) be the set of interpretations for predicate variables.
\begin{theorem}
  \label{th:chc-lfp}
  Let \(\CHC_D\) be a finite set of definite clauses consisting of predicates \(P_1,\ldots,P_n\).
  Let \(g\) be the algorithm of Lemma~\ref{lem:cstp-model}.
  Then the procedure below eventually terminates and
  returns the least CSTP-model of \(\CHC_D\).
\begin{algorithm}
  \SetKwProg{Fn}{Function}{}{}
  \SetKwIF{If}{ElseIf}{Else}{if}{}{else if}{else}{end}
  \SetKwBlock{uSwitch}{switch}{}
  \SetKwBlock{uCase}{case}{}
$\pmodel \gets \set{P_1\mapsto \emptyset,\ldots,P_n\mapsto \emptyset}$;
  $\CSTPmodel \gets \set{P_1\mapsto (a,\seq{\epsilon})\land (\epsilon,\seq{\epsilon}),
    \ldots, P_n\mapsto (a,\seq{\epsilon})\land (\epsilon,\seq{\epsilon})}$;\\
\textbf{\upshape while} true \textbf{\upshape do}\\ \quad   \textbf{\upshape if }{$g(\CSTPmodel,\CHC_D)= Some(P,\seq{s})$}\\ \quad \textbf{\upshape then}{      $(\pmodel \gets \pmodel\set{P\mapsto
      \pmodel(P) \cup \{\seq{s}\}};  \CSTPmodel \gets \CSTPmodel\set{P\mapsto
      \CTPinf(\pmodel(P))}$} \\\quad \textbf{\upshape else }{\Return{$\CSTPmodel$}  }
\end{algorithm}
\end{theorem}
\begin{proof}
  We first prove the termination of the procedure.  Suppose that the procedure does not terminate.
  Let \(\pmodel_i, \CSTPmodel_i, \seq{s}_i\) be the values of
  \(\pmodel, \CSTPmodel, \seq{s}\) at the \(i\)-th iteration of the loop.
  We have a strictly increasing infinite sequence
  \(\pmodel_0\subsetneq \pmodel_1\subsetneq \pmodel_2 \subsetneq \cdots\).
For every \(P_i\)
 such that 
  \(\pmodel_0(P_i), \pmodel_1(P_i),\pmodel_2(P_i),\ldots\) strictly increases infinitely often,
  by Lemma~\ref{lem:convergence_of_CTPinf},  there exists \(k_i\) such that
  \(\lang(\CTPinf(\pmodel_j(P_i)))=\lang(\CTPinf(\pmodel_{k_i}(P_i)))\) for all \(j\ge k_i\).
  For every \(P_i\) such that 
  \(\pmodel_0(P_i), \pmodel_1(P_i),\pmodel_2(P_i),\ldots\) strictly increases only finitely often,
  let \(k_i\) be the index such that \(\pmodel_j(P_i))=\pmodel_{k_i}(P_i)\) for all \(j\ge k_i\).
  Let \(k = \max(k_1,\ldots,k_n)\). Then, we have \(\lang(\CSTPmodel_j)=\lang(\CSTPmodel_{k})\) for all \(k\ge j\).
  But then that contradicts \(g(\CSTPmodel_k,\CHC_D)=Some(P_i,\seq{s}_k)\) (which implies
  \(\seq{s}_k\not\in \lang(\CSTPmodel_k(P_i))\)) and
  \(\seq{s}_k\in \pmodel_{k+1}(P_i)\subseteq \lang(\CSTPmodel_{k+1}(P_i))\).

  By the assumption on \(g\),
  when the algorithm terminates at the \(k\)-th step, \(\CSTPmodel_k\) is a model of \(\CHC_D\).
  Suppose there exists another CSTP-model \(\CSTPmodel'\) of \(\CHC_D\).
  Then it follows by induction on \(j\) that \(\CSTPmodel_j \le \CSTPmodel'\) for every \(j\le k\)
  (by the same argument as in the proof of Theorem~\ref{th:cstp-lfp}).
\end{proof}

We can now prove Theorem~\ref{th:decidability-chc}.
\begin{proof}[Proof of Theorem~\ref{th:decidability-chc}.]
    Let \(\CHC = \CHC_D\cup\CHC_G\), where \(\CHC_D\) and \(\CHC_G\) respectively consist of definite and
  goal clauses. 
  By Theorem~\ref{th:chc-lfp}, we can compute the least CSTP-model \(\CSTPmodel\) of
  \(\CHC_D\). Then, \(\CHC\) has a CSTP-model if and only if \(\CSTPmodel\) is also a CSTP-model
  of \(\CHC_G\). (Note that ``only if'' direction holds since
  for any goal clause \(\CHCone\), \(\pmodel'\models \CHCone\) and \(\pmodel\le \pmodel'\)
  imply \(\pmodel\models \CHCone\).)
  The latter is decidable by Theorem~\ref{lem:cstp-model}.
\end{proof}

Theorem~\ref{th:decidability-chc-stp} is an immediate corollary of the following lemma,
which is obtained as a straightforward generalization of Theorem~\ref{th:stp-lfp}.
\begin{lemma}
  \label{th:chc-lfp-stp}
  Let \(\CHC_D\) be a finite set of definite clauses consisting of predicates \(P_1,\ldots,P_n\).
  Then there exists an algorithm that enumerates all the minimal STP-models of \(\CHC_D\).
\end{lemma}
\begin{proof}
      Let \(\mathtt{enumModel}\) be the procedure defined as below, and
      let \(P_1,\ldots,P_n\) be the predicate variables occurring in \(\CHC_D\).
      Let \(g\) be the algorithm of Lemma~\ref{lem:cstp-model}.
      Then it suffices to call
      \(\mathtt{enumModel}(\set{P_1\mapsto \emptyset,\ldots,P_n\mapsto \emptyset})\) to obtain a superset of
    all the minimal STP-models of \(\CHC_D\), and then filter out non-minimal models.
      \begin{algorithm}\small
      \SetKwFunction{Enumerate}{enumModel}
      \SetKwProg{Fn}{Function}{:}{}
      \Fn{\Enumerate{$\pmodel$}}{
{$U$ := $\set{\set{P_1\mapsto \tp_1,\ldots,P_n\mapsto \tp_n}
              \mid {\forall i.(\pmodel(P_i)=\emptyset\land \tp_i=\tpempty) \lor (\seq{x},\Subst{\seq{x}}{\pmodel(P_i)})\reds (\tp_i,\Dat)\nred}$;}}\\
        \While{$U\ne \emptyset$}{
          $\CSTPmodel$ := pick($U$); $U$ := $U\setminus\set{\CSTPmodel}$;\\
          \uIf{$g(\CSTPmodel,\CHC_D)=Some(P,\seq{s})$}
              {$U := U\setminus \set{\CSTPmodel'\in U \mid \seq{s}\in \lang(\CSTPmodel'(P))}$;\\
                \Enumerate($\pmodel\set{P\mapsto \pmodel(P)\cup\set{\seq{s}}}$);}
              \uElse
                  {$\mathit{output}$($\tp$);}
          }
        }
    \end{algorithm}\\
The correctness of the algorithm follows by the same argument as that of Theorem~\ref{th:stp-lfp}.
\end{proof}

  \begin{example}
    Let us apply the algorithm in the proof above to the \(\Reva{}\) example
    (cf. Example~\ref{ex:solving-reva} and Remark~\ref{rem:CSTPvsSTP}).
    The computation of \(\enumM(\set{\Reva\mapsto \tpempty})\) proceeds as follows.
    \begin{enumerate}
      \item     \(U\) is first set to \(\set{\set{\Reva\mapsto \tpempty}}\) and \(\CSTPmodel_0 :=\set{\Reva\mapsto \tpempty}\) is picked.
    Suppose \(g(\CSTPmodel_0,\CHC_D)\) returns \(Some(\Reva,(\epsilon,ab,ab))\).
  \item     Then \(\enumM(\set{\Reva\mapsto \set{(\epsilon,ab,ab)}})\) is recursively called, and
    \(U\) is set to \(\set{\set{\Reva\mapsto (\epsilon,x,x)}}\), and \(\CSTPmodel_1 := \set{\Reva\mapsto (\epsilon,x,x)}\) is picked.
    Now \(g(\CSTPmodel_1,\CHC_D)\) may return \(Some(\Reva,(a,b,ab))\).
    \item Then \(\enumM(\set{\Reva\mapsto \set{(\epsilon,ab,ab),(a,b,ab)}})\) is recursively called, and
      \(U\) is set to \(\set{\set{\Reva\mapsto (x,y,xy)},\set{\set{\Reva\mapsto (x,y,x^Ry)}}}\).
      Suppose \(\CSTPmodel_{2,1} := \set{\Reva\mapsto (x,y,xy)}\) is picked and \(U\) is updated to
      \(\set{\set{\Reva\mapsto (x,y,x^Ry)}}\) (inside this recursive call).
      \(g(\CSTPmodel_{2,1},\CHC_D)\) may return \(Some(\Reva,(cab,\epsilon,abc))\).
    \item  Now, \(\enumM(\set{\Reva\mapsto \set{(\epsilon,ab,ab),(a,b,ab),(cab,\epsilon,abc)}})\) is recursively called, and
      \(U\) is set to \(\set{\set{\Reva\mapsto (x,y,zy)}}\), and \(\CSTPmodel_{3}:=\set{\Reva\mapsto (x,y,zy)}\) is picked.
      Now \(g(\CSTPmodel_3,\CHC_D)\) returns \(\None\). So, the algorithm outputs \(\CSTPmodel_3\) and returns to the middle of Step 3 above.
    \item From \(U=\set{\set{\Reva\mapsto (x,y,x^Ry)}}\), \(\CSTPmodel_{2,2}:= \set{\Reva\mapsto (x,y,x^Ry)}\) is picked.
      Since \(g(\CSTPmodel_{2,2},\CHC_D)\) returns \(\None\), the algorithm outputs \(\CSTPmodel_{2,2}\).
      Now \(U\) becomes empty in chain of recursive calls, and the algorithm terminates.
    \end{enumerate}
    Above, we have obtained \(\set{\CSTPmodel_3,\CSTPmodel_{2,2}}\) as models of \(\CHC_D\). By filtering out the non-minimal model
    \(\CSTPmodel_3\), we obtain \(\set{\CSTPmodel_{2,2}}\) as the set of minimal models of \(\CHC_D\).
    Since \(\CSTPmodel_{2,2}=\set{\Reva\mapsto (x,y,x^Ry)}\) is also a model of \(\CHC_G\), we can conclude that \(\CHC_D\cup\CHC_G\)
    is satisfiable.
    \qed
  \end{example}

\subsubsection{Additional Details on Section~\ref{sec:chc-multisets}}

A \emph{CHC over multisets} (CHC over sets, resp.) is of the form
\(P_1(\seq{y}_1)\land \cdots \land
P_n(\seq{y}_n)\land \cform \imp H\),
where \(P_i\) is a predicate variable, \(\cform\) is a quantifier-free SMTP-formula
(SSTP-formula) as defined in Definition~\ref{df:SMTP-formula},
and \(H\) is of the form \(P(\seq{y})\) or \(\false\).

The semantics and satisfiability of CHCs over multisets (sets, resp.) is defined in the same manner
as those of CHCs over words, except that \(\rho\) maps each variable to a multiset (a set, resp.)
over \(\Alpha\)
and \(\pmodel\) maps each \(k\)-ary predicate to a \(k\)-ary relation on multisets (sets, resp.)
over \(\Alpha\).
Theorems~\ref{th:decidability-chc-set} and \ref{th:decidability-chc-set-stp} follow by the same argument as 
Theorems~\ref{th:decidability-chc} and Theorems~\ref{th:decidability-chc-stp}.

\begin{example}
 Recall the following CHCs over multisets.
  \begin{align*}
  & |x|=1\imp \Pred{Insert}(x, \epsilon, x).     \qquad l=yl' \land |x|=|y|=1\imp \Pred{Insert}(x,l,xl).  \\  &
  l=yl' \land |x|=|y|=1\land \Pred{Insert}(x,l',r)\imp \Pred{Insert}(x,l,yr).\\
  & \Pred{Sort}(\epsilon, \epsilon). \qquad |x|=1\land l=xl'\land \Pred{Sort}(l', r')\land \Pred{Insert}(x,r',r)\imp \Pred{Sort}(l,r). \qquad \Pred{Sort}(l,r)
    \imp l=r. \end{align*}
  We can compute the least CMTP-model for the definite clauses (i.e., the above CHCs
  except the last one) as follows. We write \(\CMTPinf\) for the multiset version of
  \(\CTPinf\), and disable the rule \rn{R-CSubset} for simplicity.

  We first set:
  \begin{align*}
    &   \pmodel_0 = \set{\Pred{Insert} \mapsto \emptyset, \Pred{Sort}\mapsto \emptyset}\\
    & \CSTPmodel_0 =
    \set{\Pred{Insert} \mapsto (a,\epsilon,\epsilon)\land(\epsilon,\epsilon,\epsilon),
      \Pred{Sort}\mapsto (a,\epsilon)\land(\epsilon,\epsilon)}.
  \end{align*}
  By calling \(g(\CSTPmodel_0,\CHC_D)\), we may obtain \(Some(\Pred{Insert},(a,\epsilon,a))\).
  As \(\CMTPinf(\set{(a,\epsilon,a)})=(x,\epsilon,x)\), we now have:
  \begin{align*}
    &   \pmodel_1 = \set{\Pred{Insert} \mapsto \set{(a,\epsilon,a)}, \Pred{Sort}\mapsto \emptyset}\\
    & \CSTPmodel_1 =
    \set{\Pred{Insert} \mapsto (x,\epsilon,x),
      \Pred{Sort}\mapsto (a,\epsilon)\land(\epsilon,\epsilon)}.
  \end{align*}
  By calling \(g(\CSTPmodel_1,\CHC_D)\), we may obtain \(Some(\Pred{Sort},(\epsilon,\epsilon))\).
  We now have:
  \begin{align*}
    &   \pmodel_2 = \set{\Pred{Insert} \mapsto \set{(a,\epsilon,a)}, \Pred{Sort}\mapsto (\epsilon,\epsilon)}\\
    & \CSTPmodel_2 =
    \set{\Pred{Insert} \mapsto (x,\epsilon,x),
      \Pred{Sort}\mapsto (\epsilon,\epsilon)}.
  \end{align*}
  By calling \(g(\CSTPmodel_2,\CHC_D)\), we may get \(Some(\Pred{Insert},(a,b,ab))\) and:
  \begin{align*}
    &   \pmodel_3 = \set{\Pred{Insert} \mapsto \set{(a,\epsilon,a),(a,b,ab)}, \Pred{Sort}\mapsto (\epsilon,\epsilon)}\\
    & \CSTPmodel_3 =
    \set{\Pred{Insert} \mapsto (x,y,xy),
      \Pred{Sort}\mapsto (\epsilon,\epsilon)}.
  \end{align*}
  By calling \(g(\CSTPmodel_3,\CHC_D)\), we may get \(Some(\Pred{Sort},(a,a))\).
  We now have:
  \begin{align*}
    &   \pmodel_4 = \set{\Pred{Insert} \mapsto \set{(a,\epsilon,a),(a,b,ab)}, \Pred{Sort}\mapsto (\epsilon,\epsilon), (a,a)}\\
    & \CSTPmodel_4 =
    \set{\Pred{Insert} \mapsto (x,y,xy),
      \Pred{Sort}\mapsto (x,x)}.
  \end{align*}
  At this point, \(g(\CSTPmodel_4,\CHC_D)\) returns \(\None\).
  Thus, we have obtained the least CSMTP-model \(\CSTPmodel_4 =
  \set{\Pred{Insert} \mapsto (x,y,xy),
    \Pred{Sort}\mapsto (x,x)}\).
  Since \(\CSTPmodel_4\) is a model of the goal clause
  \(\Pred{Sort}(l,r)
    \imp l=r\), we can conclude that \(\CHC\) is satisfiable. \qed
\end{example}
 \section{Verification of Functional Queues Using Piecewise Conjunctive Solvable Tuple Patterns}
\label{sec:pcstp}

We have considered conjunctive STPs, but not arbitrary Boolean combinations of
STPs. While inferring arbitrary Boolean combinations seems infeasible,
we can allow a restricted form of them called \emph{piecewise} CSTPs.
Let us fix a finite set \(\mathcal{P}\) of predicates of the form \(\lambda \seq{x}.\form\) where \(\form\) is
a (quantifier-free) STP-formula such that \(\FV(\form)\subseteq \set{\seq{x}}\).
We call \(|\seq{x}|\) the arity of the predicate \(\lambda \seq{x}.\form\), and write
\(\arity(\lambda \seq{x}.\form)\) for it. We sometimes omit ``\(\lambda \seq{x}.\)'' when it is clear from
context.
A (\(k\)-ary) \emph{piecewise CSTPs} over \(\mathcal{P}\) is an expression of the form
\[ (p_1\imp \ctp_1)\land \cdots \land (p_\ell\imp \ctp_\ell),\]
where \(p_1,\ldots,p_\ell\) are \(k\)-ary predicates, and \(\ctp_1,\ldots,\ctp_\ell\) are \(k\)-ary CSTPs.
The language represented by a \(k\)-ary piecewise CSTPs is defined by:
\[ \lang((p_1\imp \ctp_1)\land \cdots \land (p_\ell\imp \ctp_\ell))
= \bigcap_{i\in\set{1,\ldots,\ell}} \set{\seq{s} \mid p_i(\seq{s}) \imp \seq{s}\in\lang(\ctp_i)
  }.
  \]
Here, for \(p_i=\lambda \seq{x}.\form_i\), \(p_i(\seq{s})\) denotes \([\seq{s}/\seq{x}]\form_i\).
For example, if \(\mathcal{P}=\set{p_1,p_2}\) where
 \(p_1= \lambda (x_1,x_2,x_3).x_1=\epsilon\) and \(p_2=\lambda (x_1,x_2,x_3).x_1\ne \epsilon\).
then \((p_1\imp (x, y, \epsilon))\land (p_2\imp (x, y, y))\)
is a piecewise CSTP. It represents the set of triples of the form \((\epsilon,s_2,\epsilon)\)
or \((s_1,s_2,s_2)\) where \(s_1\ne \epsilon\).

Typically, the set \(\mathcal{P}\) of predicates would be
\(\set{\lambda \seq{x}.x_i=\epsilon \mid i\in\set{1,\ldots,k}}\cup
\set{\lambda \seq{x}.x_i\ne\epsilon \mid i\in\set{1,\ldots,k}}\).
CSTPs can be considered a special case of piecewise CSTPs where \(\mathcal{P}=\set{\true}\).

Theorem~\ref{th:decidability-chc} can be extended for piecewise CSTPs.
\begin{theorem}
  \label{th:decidability-piecewisechc}
   Given a finite set \(\mathcal{P}\) of predicates as defined above and 
   a system \(\CHC\) of CHCs on words, it is decidable whether \(\CHC\) has
  a piecewise CSTP-model over \(\mathcal{P}\).
\end{theorem}
\begin{proof}
This follows from the fact that \(\CHC\) has a piecewise CSTP-model if and only if
\(\CHC'\) obtained from \(\CHC\) by replacing each \(k\)-ary predicate
\(P(x_1,\ldots,x_k)\) with
\((p_1(x_1,\ldots,x_k)\imp P_1(x_1,\ldots,x_k))\land \cdots \land
(p_\ell(x_1,\ldots,x_k)\imp P_\ell(x_1,\ldots,x_k))\)
has a CSTP-model. Here, \(P_i\)'s are fresh predicates and \(\set{p_1,\ldots,p_\ell}\)
is the set of \(k\)-ary predicates in \(\mathcal{P}\).
\end{proof}

Below we apply the result above to the verification of functional queues.
\begin{example}
  \label{ex:queue}
  Let us consider the following OCaml program.
  \begin{quote}
\begin{verbatim}
let initq = ([], [])
let rec reva l1 l2 =
  match l1 with
    [] -> l2
  | x::l1' -> reva l1' (x::l2)
let enq x (l1,l2) = (l1, x::l2)
let deq (l1,l2) =
  match l1 with
    [] ->
     (match reva l2 [] with
        [] -> ([], l1, l2)
      | x::l -> ([x], l, [])
     )
  | x::l1' -> ([x], l1', l2)
let rec enqall l (l1,l2) =
  match l with
    [] -> (l1,l2)
  | x::l' -> enqall l' (enq x (l1,l2))
let rec deqall (l1,l2) =               
  match deq(l1,l2) with
    ([], _, _) -> []
  | (l, l1',l2') -> l@(deqall (l1', l2'))
let main l =
  assert(deqall (enqall l initq) = l)
\end{verbatim}
  \end{quote}
Here, a queue is implemented as a pair of lists \((l_1,l_2)\) where \(l_1l_2^R\) is the sequence of
elements in the queue, so that
the amortized cost of each enqueue or dequeue operation is \(O(1)\)~\cite{FuncQueue}.
The dequeue function \verb|deq| either returns \([\,]\) (when the queue is empty, i.e., \(l_1=l_2=[\,]\))
or a singleton set \([x]\) consisting of the first element \(x\) of the queue, along with the updated
queue.
If \(l_1=[\,]\), then the first element of the queue is computed by extracting the first element of
the reverse of \(l_2\).
Given a list \(l\) of elements as input,
the main function enqueues all the elements of \(l\), dequeues all the elements from the queue,
and then asserts that the result equals \(l\).

The correctness of the above program (i.e., the lack of assertion failures) is reduced to
the satisfiability problem for the following CHCs.
\newcommand\Deq{\mathit{Deq}}
\newcommand\Enq{\mathit{Enq}}
\newcommand\Deqa{\mathit{DeqAll}}
\newcommand\Enqa{\mathit{EnqAll}}
\begin{align*}
  & |x|=1\imp \Enq(x,l_1,l_2,xl_1,l_2).\\
&  \Deq(\epsilon,\epsilon,\epsilon,\epsilon,\epsilon).\\
  &  |x|=1\land \Reva(l_2,\epsilon,xl)\imp \Deq(\epsilon,l_2,x,l,\epsilon).\\
  & |x|=1\imp \Deq(xl_1', l_2, x, l_1', l_2).\\
  & \Enqa(\epsilon, l_1,l_2,l_1,l_2).\\
  & l=xl'\land |x|=1\land\Enq(x, l_1,l_2,l_1',l_2')\land  \Enqa(l', l_1',l_2', r_1,r_2) \imp
  \Enqa(l, l_1,l_2,r_1,r_2).\\
  & \Deq(l_1,l_2,\epsilon,l_1',l_2')\imp \Deqa(l_1,l_2,\epsilon).\\
  & \Deq(l_1,l_2,l,l_1',l_2')\land l\ne \epsilon\land \Deqa(l_1',l_2',r)
  \imp \Deqa(l_1,l_2,lr).\\
  & \Enqa(l, \epsilon, \epsilon,l_1,l_2)\land \Deqa(l_1,l_2,r)\imp r=l.
\end{align*}
Here, each predicate represents the relation between inputs and outputs of the corresponding function.
For example, \(\Enq(x,l_1,l_2,r_1,r_2)\) means that \(\mathtt{enq}\;x\;(l_1,l_2)\) may return \((r_1,r_2)\).
The clauses for \(\Reva\) (which are found in Section~\ref{sec:chc-words}) have been omitted.

The system of CHCs above has the following piecewise CSTP-model:
\begin{align*}
  \{\quad&\\  &
  \Reva\mapsto (l_1,l_2,l_1^Rl_2),\\&
  \Enq\mapsto (l, l_1,l_2, l_1,l^Rl_2),\\&
  \Deq\mapsto ((\lambda (l_1,l_2,x,r_1,r_2).l_1=l_2=\epsilon) \imp (\epsilon,\epsilon,\epsilon,\epsilon,\epsilon))\\&
  \qquad\ \land (\lambda (l_1,l_2,x,r_1,r_2).l_1=\epsilon\land l_2\ne \epsilon) \imp
  (\epsilon,lx,x^R,l^R,\epsilon))\\&
  \qquad\ \land (\lambda (l_1,l_2,x,r_1,r_2).l_1\ne\epsilon\land l_2\ne \epsilon) \imp
  (xl_1,l_2,x,l_1,l_2)),\\&
  \Enqa\mapsto (l, l_1,l_2, l_1, l^Rl_2),\\&
  \Deqa\mapsto (l_1,l_2,l_1l_2^R)\\
  \}.\quad
\end{align*}
Thus, if we set \(\mathcal{P}\) to
\(\set{p_1,p_2,p_3,p_4,p_5}\)  where:
\begin{align*}
& p_1 = \lambda (l_1,l_2,l_3,l_4,l_5).l_1=l_2=\epsilon\\
& p_2 = \lambda (l_1,l_2,l_3,l_4,l_5).l_1=\epsilon \land l_2\ne\epsilon\\
  & p_3 = \lambda (l_1,l_2,l_3,l_4,l_5).l_1\ne\epsilon \land l_2\ne\epsilon\\
  & p_4 = \lambda (l_1,l_2,l_3).\true\\
  & p_5 = \lambda (l_1,l_2,l_3,l_4,l_5).\true,
\end{align*}
then we can automatically prove the satisfiability of the CHCs above.
We expect that necessary predicates in \(\mathcal{P}\) can typically be mined from
the source program; in fact, in the above case, the function \texttt{deq} performs case analysis on whether \(l_1\) and \(l_2\) are
empty lists.
A general method to find an appropriate set \(\mathcal{P}\) of predicates is left for future work.\qed
\end{example}
 \end{document}